\newtheorem{theorem}{Theorem}
\newtheorem{definition}{Definition}
\newtheorem{corollary}{Corollary}
\newtheorem{lemma}{Lemma}
\newtheorem{remark}{Remark}
\newtheorem{problem}{Problem}
\newcommand{\InclFig}
\newcommand{\bed}{\begin{definition}}
\newcommand{\eed}{\end{definition}}
\newcommand{\cB}{\mathcal{B}}
\newcommand{\cE}{\mathcal{E}}
\newcommand{\cF}{\mathcal{F}}
\newcommand{\cG}{\mathcal{G}}
\newcommand{\cH}{\mathcal{H}}
\newcommand{\cT}{\mathcal{T}}
\newcommand{\FF}{\mathbb{F}}
\newcommand{\GG}{\mathbb{G}}
\newcommand{\HH}{\mathbb{H}}
\DeclareMathOperator{\E}{\mathbb{E}}
\newcommand{\Q}{\mathbb{Q}}
\DeclareMathOperator{\e}{\textrm{e}}
\newcommand{\beq}{\begin{equation}}
\newcommand{\eeq}{\end{equation}}
\newcommand{\beqn}{\begin{eqnarray}}
\newcommand{\eeqn}{\end{eqnarray}}
\newcommand{\bfig}{\begin{figure}}
\newcommand{\efig}{\end{figure}}
\newcommand{\btab}{\begin{table}}
\newcommand{\etab}{\end{table}}
	\renewcommand{\up}{\textrm{up}}
	\DeclareMathOperator{\up}{\textrm{up}} 
\title{\bf Affine term-structure models : a time-changed approach with perfect fit to market curves}
\author{Cheikh Mbaye \& Fr\'ed\'eric Vrins\thanks{Voie du Roman Pays 34, B-1348 Louvain-la-Neuve, Belgium. E-mail: \href{mailto:frederic.vrins@uclouvain.be}{frederic.vrins@uclouvain.be}. The research of Cheikh Mbaye is funded by the National Bank of Belgium and an FSR grant. The opinions expressed in this paper are those of the authors and do not necessarily reflect the views of the National Bank of Belgium. The work of F. Vrins was supported by the Fonds de la Recherche Scientifique F.S.R.-FNRS, Grant J.0037.18.} \\[-0.12cm]  
%\small 1. 
Louvain Finance Center (LFIN), %\\[-0.12cm] \small 2. Center for Operations Research and Econometrics (CORE) $^{1}$ $^{1,2}$
%\vspace{0.1cm}\\
%\small 
UC Louvain, Belgium
%\date{First version: December 8, 2017. This version: \today}\\[-0.12cm] % \footnotesize \texttt{frederic.vrins@uclouvain.be}
 \\[0.2cm]
}
\date{}
\begin{document}

\maketitle
\font\dsrom=dsrom10 scaled 1200
\def \indic{\textrm{\dsrom{1}}}

\begin{abstract} We address the so-called \textit{calibration problem} which %, in finance,
consists of fitting in a tractable way a given model to a specified term structure like, e.g.,  yield, prepayment or default probability curves. Time-homogeneous jump-diffusions like Vasicek or Cox-Ingersoll-Ross (possibly coupled with compound Poisson jumps, JCIR, a.k.a. SRJD), are tractable processes but have limited flexibility; they fail to replicate actual market curves. The deterministic shift extension of the latter, Hull-White or JCIR++ (a.k.a. SSRJD) is a simple but yet efficient solution that is widely used by both academics and practitioners. However, the shift approach may not be appropriate when positivity is required, a common constraint when dealing with credit spreads or default intensities. In this paper, we tackle this problem by adopting a time change approach, leading to the TC-JCIR model. 
On the top of providing an elegant solution to the calibration problem under positivity constraint, our model features additional interesting properties in terms of variance. %In particular, the implied volatility levels are not limited by the positivity constraint. 
It is compared to the shift extension on various credit risk applications such as credit default swap, credit default swaption and credit valuation adjustment under wrong-way risk. The TC-JCIR model is able to generate much larger implied volatilities and covariance effects than JCIR++ under positivity constraint, and therefore offers an appealing alternative to the shift extension in such cases.
\end{abstract}

\textit{Keywords: model calibration, credit risk, stochastic intensity, jump-diffusions, term-structure models, time-change techniques}

% ==========================================================================================
\section{Introduction}
\label{sec:Intro}
% ==========================================================================================

\textit{Model calibration} is a standard problem in many areas of finance~\cite{Brigo06,Joshi03,Vero10}. It consists of tuning a model such that it ``best fits'' market quotes at a given time. As an example, financial markets provide a set of prices associated with liquid instruments, that openly trade on the market. Alongside with risk management (hedging), the main purpose of a model here is to act as an ``interpolation/extrapolation'' tool, i.e., to obtain the value of products at a given time $t$ for which the market does not disclose prices in a transparent way. This could happen because either the product to be priced is ``exotic'' (i.e., is too ``special'', it does not quote openly on a platform, only on a bilateral basis) or because its cashflow schedule is not in line with the products that currently trade openly at $t$ (a situation that commonly happens since products that were ``standard'' at inception, may have time-to-expiry or moneyness levels that are no longer ``standard'' afterwards).\medskip 

Mathematically, model calibration is nothing but an optimization problem. Starting from a set of prices quoted on the market (called ``market prices'') for a set of specific financial products (called ``calibration instruments''), model calibration consists of computing the model parameters such that the prices generated by the model (called ``model prices'') best fit to the market prices, according to some error function. Model calibration is crucial in finance; it is strongly related to arbitrage opportunities. In practice, only models that are able to reproduce the market prices of ``simple instruments'' (either in a perfect way, or at least up to the bid-ask spread) are trustworthy enough when it comes to pricing other instruments. For instance, one can price exotic derivatives (like barrier options) using stochastic volatility model like Heston in a semi-analytical way~\cite{Carr99,Heston93}. The parameters of the Heston model will be obtained by ``calibration'' to a volatility surface, i.e., to a set of liquid (``plain vanilla'') options, like European calls and puts of various strikes and maturities. The justification behind this is that in a no-arbitrage, complete market setup, the price of an option can be obtained by computing the cost of setting up a self-financing hedging strategy. This cost depends on the prevailing prices of the hedging instruments. If the model fails to correctly price the latter, there is no chance it can correctly price the option. \medskip

In this work, we focus on financial calibration problems arising in other asset classes: interest-rates and credit \cite{Brigo06,Duffie03}. When specifying an interest rate model to price a derivative on, say, the Libor 3M index, one needs to make sure that the model generates a discount curve that is in line with that extracted from market quotes of simpler Libor 3M-indexed products. In this case, the set of calibration instruments could be forward rate agreements (FRA), interest rate swaps (IRS), as well as vanilla cap/floors or swaptions. Similarly, adjusting the value of derivatives for counterparty risk (a problem known as credit valuation adjustment, or CVA) generally involves a stochastic model to represent the default of the counterparty with whom the trade is executed. The default probability of the counterparty can be extracted from a set of calibration instruments, which prices are driven by the default likelihood of the counterparty, e.g., corporate bonds of credit default swaps (CDS). In this context, the default model must be ``calibrated'' in such a way that the default probability curve generated by the stochastic model agrees with that implied from the prices of the corresponding instruments (see, e.g., \cite{Grego10} and \cite{Stein11} for a general overview of CVA and \cite{brigopallaagocollateral} for a discussion of bilateral CVA in presence of collateralization agreements). \medskip

The calibration constraint rises practical issues. Indeed, the models that are actually used in the industry must have a tractability that is compatible with real-time pricing but, as explained above, must be flexible enough to match the information conveyed by the calibration instruments. Affine term structure models (ATSM) have been extensively used in fixed income modeling because of their analytical tractability. See, e.g., \cite{Duffie03b} and \cite{Duffie96a} for an excellent review and a mathematical analysis of this class of processes. In practice, homogeneous affine jump diffusion (HAJD) models are extremely popular. The Vasicek (Ornstein-Uhlenbeck) model~\cite{Vas77} is a short-rate model being widely used in both industry and academia. It is a time-homogeneous affine diffusion model that postulates Gaussian dynamics. If negative rates need to be ruled out, positive dynamics like the CIR (Cox-Ingersoll-Ross, also known as square-root diffusion, SRD)~\cite{Cox85b} can be preferred, possibly with independent compounded Poisson jumps (JCIR or SRJD). However, it is in general impossible with either models (even in a multi-factor setup) to achieve a perfect fit: the flexibility of HAJD is limited, they are in general unable to generate a given discount curve. The same problem arises when dealing with credit derivatives: it is generally impossible to make sure that the default intensity process, modeled with HAJD dynamics, will generate a default probability curve that is in line with the corresponding curve, exogenously given by the market via the calibration instruments. \medskip

Several routes can be followed to deal with this issue. The first one consists of disregarding this lack of flexibility. Nevertheless, working with a model that fails to yield a perfect fit to the market is often unacceptable in practice. Indeed, as explained above, the models are used to value derivatives positions, and a mismatch with the market can introduce a tremendous bias in the valuation of the book of companies or financial institutions. Another possibility is to significantly increase the complexity of the models. This is often to be avoided in practice, for computational, identification or over-fitting issues. A trade-off consists of extending the ``simple models'' in such a way that they can fit to the market. In fact, several authors show that a great flexibility can be obtained by shifting HAJD models in a deterministic way.~\cite{Dyb97} for instance show that the term structure of interest rates can be reproduced by adding a deterministic shift to~\cite{HoLee86} or Vasicek processes. Later,~\cite{Brigo01} extended this idea to a broader class of models, thereby providing a simple but very clever solution to the calibration problem. Instead of considering a HAJD, one could simply adjust it with a deterministic function $\varphi$: the resulting process will have the required flexibility. When shifted this way, the Vasicek, CIR and JCIR models respectively correspond to the Hull-White, the CIR++ or the JCIR++ (a.k.a. SSRJD) models~\cite{Brigo06}. This trick is actually very powerful: it solves the calibration problem at no cost, since the model's dynamics remain affine. Moreover, the shift function is known analytically, as a function of the parameters of the underlying HAJD and the market curve to be fitted by the model.\medskip

Yet, this approach suffers from an important limitation. Because of the shift, there is no reason that the range of the shifted process agrees with that of the underlying HAJD process. For instance, shifting a positive process with a deterministic function may result in a process that could take on negative values. It all depends on the mismatch between the information conveyed by the calibration instruments on the one hand, and the parameters of the underlying HAJD on the other hand. In general, there is no reason to believe that the implied shift function will preserve the range of the HAJD model. This is problematic in many cases, and in credit risk modeling in particular: negative default intensities, for instance, make no sense. To circumvent this issue, one could think of adding a non-negativity constraint on the shift in the calibration step. But, as we will show, this drastically restricts the parameters of the underlying HAJD, hence the randomness embedded in the model. This explains why this solution is often not considered by practitioners: the shift approach (without positivity constraint) remains the standard approach, even if positivity is required, theoretically speaking. It seems that in absence of a valid alternative, one actually prefers to rely on a model providing a perfect fit, even though the latter suffers from theoretical inconsistencies. \medskip

In this paper we introduce an alternative to the deterministic shift. Using an equally simple \--- but intrinsically different \--- technique, we adjust a HAJD so as to allow for a perfect fit to a given market curve, without affecting the model's tractability, but also without introducing the aforementioned inconsistencies. More specifically, instead of shifting a HAJD, we time-change it. Time change techniques were first studied in 1965 \cite{Dambis65,Dubins65}. The first application to finance dates back from the early 2000. Geman et al. used L\'evy processes and interpreted the new time scale as the business time, in contrast with the calendar time \cite{Geman01}. This was then applied to stochastic volatility models~\cite{Carr03}. Thanks to subordinated L\'evy models, the authors introduced the leverage effect, as well as a long-term skew. Many other financial applications of time change techniques can be found in the review \cite{Swis16}. More recently, Mendoza-Arriaga and Linetsky used stochastic time change processes to introduce two-side jumps in positive processes. The analytical tractability of the resulting model is preserved to some extend. This model has been recently applied to counterparty credit risk~\cite{Mbaye2018}. In this work, we exploit the time change idea in yet another way, to solve a completely different problem. Our purpose is to time-change HAJDs so as to obtain models with the desired calibration flexibility, without affecting tractability and preserving the range of the original process. The intuition is that by slowing down or speeding up the time of the latent HAJD, at the appropriate rate, one would obtain a model that could fit most discount curves, and actually every default probability curve. Moreover, the time change function is easily found using simple numerical methods (namely, inversion of easy functions or ordinary differential equation). Eventually, our time-changed HAJD is proven to feature larger implied volatilities compared to the corresponding valid (i.e., non-negative) shifted HAJD. %Although we focus on single-factor homogeneous ATSM, this can be extended to a much broader class of models. 
To illustrate the power of our approach, we provide two applications taken from credit: pricing of CDS options and computation of derivatives pricing accounting for counterparty risk under exposure-credit dependence (wrong-way risk, WWR). In either cases, all the considered default models perfectly fit the risk-neutral default probability curve extracted from market quotes associated to the CDS of the reference entity. The obtained results illustrate the nice feature of large implied volatilities : they are able to generate larger option prices compared to the shift approach calibrated on a same probability curve under non-negativity constraint.\medskip 

Eventually, observe that although we focus on examples featuring reduced-form models when pricing of credit-sensitive instruments, our approach is of potential use for other models, in many areas of finance and insurance. Ongoing work suggests that it can be applied to other default models, including the firm-value (structural) models~\cite{Merton74}, but also to linear-rational (polynomial) models~\cite{Filip16}. Other models could be considered as well, like~\cite{Vrins16} or~\cite{Crep12}.In terms of applications, the proposed method can be used in life insurance, to calibrate mortality rate to mortality tables. Our time-changed process could also be used to model prepayment rates in mortgage-backed securities (MBS). These products naturally exhibit a negative convexity due to a the negative relationship between interest and prepayment rates: householders tend to refinance their loans when interest rates drop. This calls for a stochastic prepayment (i.e., positive) rate, that will be negatively correlated with interest rates, and which parameters could be calibrated so as to agree with the averaged values given in the PSA measure, the indicator attached to MBS securities that characterizes the prepayment speed in MBS \cite{Vero10}. Eventually, the proposed method could be applied in many other applications, including the modeling of performance degradation of devices or materials through time, which average outstanding performances evolution are given according to some quality standards.\medskip

The paper is organized as follows. In Section \ref{sec:Problem} the calibration problem is introduced and two specific cases (cashflow discounting and probability curves) are discussed. We then recall in Section  \ref{sec:JD} how a shifted version of time-homogeneous affine jump diffusions can fit every discount curve. We pay specific attention to the case where the resulting process needs to meet a positivity constraint. We then introduce in Section \ref{sec:TC} our alternative model, specifically devoted to this case, focusing on the most common HAJD, namely the Vasicek and JCIR (generalizing the CIR) models. Eventually, we compare in Section \ref{sec:app} our model's performance to that of the shift approach on three different pricing problems taken from credit risk: CDS curve calibration, pricing of CDS options and pricing of credit valuation adjustment under wrong-way risk.

% ==========================================================================================
\section{The calibration problem}
\label{sec:Problem}
% ==========================================================================================

Consider a given time-$s$ \textit{market curve} $P^{market}_s(t)$,  $t\geq s$. The calibration problem consists of finding, for a given model, the (set of) parameter(s) $\Xi=\Xi^\star$ such that the corresponding \textit{model curve} $P^{model}_s(t)=P^{model}_s(t;\Xi)$ ``best fits'' the market curve, according to some criterion. Mathematically speaking, this is an optimization problem that consists of finding a set of parameters that minimizes an error function between model and market values,
\beq
\Xi^\star := \arg\min_{\Xi} \|P^{model}_s(\cdot;\Xi)-P^{market}_s(\cdot)\|\;,\label{eq:Xistar}
\eeq
where $\|f(\cdot)-g(\cdot)\|$ represents a divergence measure between two functions $f,g$. In practice, one often computes the mean-square error (MSE) between $f$ and $g$ on a set of maturities $\cT:=\{T_1,\ldots,T_n\}$:

\beq
\|f(\cdot)-g(\cdot)\|:=\frac{1}{n}\sum_{i=1}^n \big(f(T_i)-g(T_i)\big)^2\;.\label{eq:MSE}
\eeq

A model with parameter $\Xi$ is said to \textit{perfectly fit} the market up to horizon $T$ whenever $P^{model}_s(t;\Xi)=P^{market}_s(t)$ for all $s\leq t\leq T$ or using a shorthand notation, $P^{model}_s\equiv P^{market}_s$.

%\begin{remark}It is important to notice that the calibration problem consists of adjusting the parameters of a model based on a current set of prices. Therefore, it differs from the usual estimation problem, which refers to the calibration of parameters using historical time-series, using e.g. maximum likelihood or generalized method of moments. Here, one starts by choosing a parametric model that is somehow supported by empirical evidences (like mean-reverting square-root), and then estimates the parameters not from history, but using a market-implied (risk-neutral) setting, i.e., directly from current market prices.
%\end{remark}

% ----------------------------------------------------------------------------------
\subsection{Setup}\label{sec:Setup}
% ----------------------------------------------------------------------------------

A model can be either static or dynamic. For instance, the Nelsen-Siegel model \cite{NS87} postulates a parametric form for the yield curve, but is a static model: the resulting curve does not correspond to the yield curve generated by the dynamics of a stochastic model. We focus on continuous-time dynamic models in the sequel.\medskip

We consider a frictionless market free of arbitrage opportunities in which trading takes place continuously over the time interval $[0,T]$, where $T$ is a fixed time horizon. Uncertainty in the market is modelled through a filtered probability space $(\Omega,\cG,\GG,\mathbb{Q})$. In this setup, $\GG=(\cG_t,t\in [0,T])$ represents the information flow and corresponds to the filtration generated by the stochastic market variables (risk factors, prices, interest rates, default intensities, default event, etc), $\cG:=\cG_T$, and $\mathbb{Q}$ denotes the risk-neutral probability measure referred to as the \textit{pricing measure}. In the sequel, we shall focus on a specific class of $P^{model}$ and $P^{market}$ functions: we assume they are \textit{discount curves}, a set of functions that we now define.

\begin{definition}[Discount curve]\label{def:DiscCurve}
A time-$s$ discount curve is any differentiable function of the form $$P_s:[s,\infty)\to\mathbb{R}_0^+, ~~t\mapsto P_s(t)$$ 
satisfying $P_s(s)=1$.
\end{definition}

In the specific $s=0$ case, a \textit{time-0 discount curve} $P_0(t)$ is simply called a \textit{discount curve} and is noted $P(t)$, assuming implicitly that $t\geq 0$. Any time-$s$ discount curve admits an exponential-integral from:

\begin{lemma}\label{lem:FwdCurve}Every time-$s$ discount curve $P_s$ admits a representation in terms of time-$s$ instantaneous forward rate curve $f_s$:
$$P_s(t)=e^{-\int_s^t f_s(u)du}\;,~~t\geq s.$$
If moreover $P_s(t)$ is strictly decreasing on $(s,\infty)$, then $f_s(t)$ is strictly positive for all $t>s$.
\end{lemma}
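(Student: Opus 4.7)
The plan is to identify $f_s$ with the negative logarithmic derivative of $P_s$ and then invoke the fundamental theorem of calculus. Because Definition~1 guarantees that $P_s$ is differentiable and strictly positive (its codomain is $\mathbb{R}_0^+$), the map $u\mapsto\ln P_s(u)$ is differentiable on $[s,\infty)$, so I would set
\[
f_s(u) \;:=\; -\frac{d}{du}\ln P_s(u) \;=\; -\frac{P_s'(u)}{P_s(u)}.
\]
Integrating from $s$ to $t$ and using the normalisation $P_s(s)=1$ yields $\int_s^t f_s(u)\,du = -\ln P_s(t)$; exponentiating delivers the claimed exponential--integral representation. This first part is essentially tautological once $f_s$ has been correctly introduced, so no real obstacle arises here: the only ingredients used are differentiability, strict positivity, and the boundary condition that together define a discount curve.

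For the second assertion, strict decrease of $P_s$ on $(s,\infty)$ means $P_s(t_1)>P_s(t_2)$ whenever $s<t_1<t_2$, whence the just-derived representation gives
\[
\int_{t_1}^{t_2} f_s(u)\,du \;=\; \ln\frac{P_s(t_1)}{P_s(t_2)} \;>\; 0.
\]
This shows already that $f_s$ is positive in the integral sense over every nondegenerate subinterval of $(s,\infty)$. To promote this to a pointwise statement I would use that $f_s$ is continuous (which follows as soon as $P_s\in C^1$, a natural strengthening of mere differentiability that is satisfied in the applications of the paper): indeed, if $f_s(t^\star)<0$ at some $t^\star>s$, continuity would force $f_s<0$ on a whole neighbourhood of $t^\star$, producing a negative integral and contradicting the display above. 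Hence $f_s\geq 0$ pointwise, with strict positivity on the level of integrals.

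The genuinely delicate point is the \emph{strict} pointwise positivity claimed for every $t>s$: this does not follow from strict monotonicity of $P_s$ alone, since one can construct smooth strictly decreasing curves whose derivative vanishes at isolated points (for example $P_s(t)=\exp(-(t-s)+\sin(t-s))$, which has $f_s(s+2\pi k)=0$). The natural reading of the lemma is therefore either (i) positivity in the integral sense---which is the form actually needed by the calibration machinery developed in later sections---or (ii) pointwise strict positivity under the slightly stronger implicit assumption $P_s'(t)<0$ for all $t>s$. I would make the choice between (i) and (ii) explicit in the write-up, since the counterexample above shows that (ii) cannot be deduced from (i) without extra input.
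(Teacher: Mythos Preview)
Your approach is essentially identical to the paper's: define $f_s(t):=-\frac{d}{dt}\ln P_s(t)=-P_s'(t)/P_s(t)$ and recover the exponential-integral form via the fundamental theorem of calculus, exactly as the paper does in its two-line proof.

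On the second assertion you are in fact more careful than the paper, which simply writes ``if $P_s$ is strictly decreasing on $(s,\infty)$ then $f_s(t)>0$ for all $t>s$'' without further comment. Your observation that strict monotonicity alone does not force $P_s'(t)<0$ at every point is correct, and your counterexample $P_s(t)=\exp(-(t-s)+\sin(t-s))$ is valid. The paper is tacitly reading ``strictly decreasing'' as $P_s'(t)<0$ pointwise (your option~(ii)), which is indeed what holds for all the affine discount curves considered later; your suggestion to make this explicit is a genuine improvement in rigour, not a deviation from the intended argument.
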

\begin{proof}
Since $P_s(t)>0$ for all $t\geq s$ and $P_s(t)$ is differentiable with respect to $t$ on $(s,\infty)$, then one can define the time-$s$ instantaneous forward rate function as $f_s(t):=\frac{-1}{P_s(t)} \frac{d}{dt}P_s(t)=-\frac{d}{dt}\ln P_s(t)$ for all $t>s$; the value $f_s(s)$ is not identified but can be defined by, e.g., the limit as $t\downarrow s$. Moreover, if $P_s$ is strictly decreasing on $(s,\infty)$ then $f_s(t)>0$ for all $t>s$. 
\end{proof}

Discount curves are of paramount importance in finance. As suggested by the name, $P_s(t)$ allows one to compute the time-$s$ value of a cashflow paid at time $t\geq s$, both in a credit risk-free and credit risky setup. As a special case of the second framework, they encompass survival probability curves, defined as one minus cumulative distribution functions. This is elaborated in the next two subsections.

\subsubsection{Discounting in a default-free market}
% ----------------------------------------------------------------------------------
In this application, $P_s(t)$ stands for the time-$s$ price of a risk-free zero-coupon bond (ZCB) with maturity $t$ and face value 1, denominated in a given currency. In particular, $P^{market}_s(t)$ and $P^{model}_s(t;\Xi)$ respectively give the market and the model prices of that instrument.\medskip 

Indeed, in a no-arbitrage setup, the price of a financial instrument paying a single cashflow (payoff) at a given maturity is given by the risk-neutral conditional expectation of the payoff, discounted at the risk-free rate from the payment date (maturity) back to the valuation date. Adopting a short-rate model, $P^{model}_s(t)$ corresponds to the $\Q$-expectation of the stochastic discount factor $D_s(t):=e^{-\int_s^t r_u du}$, the negative exponential of the risk-free short-rate process $r$, integrated from the valuation time $s$ up to the payment time $t$, conditional upon the information prevailing at the pricing time~\cite{Brigo06}:

$$P^{model}_s(t)=\E\left[\left.1D_s(t)\right|\cG_s\right]=\E\left[\left.e^{-\int_s^t r_u du}\right|\cG_s\right]=:P^{r}_s(t)\;.$$

In this context, we aim at finding a model $x$ to depict the risk-free short rate dynamics $r$ that would be tractable enough, and provide a perfect fit to any yield curve, the curve that gives the set of prices of ZCBs with increasing maturities.

\subsubsection{Discounting in a defaultable market}\label{sec:defaultable} 
% ----------------------------------------------------------------------------------
Adopting the same framework as before, the time-$s$ price of a zero-coupon bond paying one unit of currency at time $t\geq s$ \textit{contingent on the fact that the issuer doesn't default prior to the payment date} is given by a similar expression as before. It suffices to replace the risk-free payoff $1$ by the risky one, namely  $\indic_{\{\tau> t\}}$, where the random variable $\tau$ represents the default time of the issuer and $\indic_{A}$ is the indicator function defined as 1 if $A$ is true and zero otherwise. Mathematically,

$$P^{model}_s(t)=\E\left[\left.\indic_{\{\tau>t\}}D_s(t)\right|\cG_s\right]=:\bar{P}^{r}_s(t)\;.$$

In such a context, $\bar{P}^{r}_s$ corresponds to a risky discounting, where the term \textit{risk} is referring to the possibility for the issuer not to meet her financial obligations. \medskip

To proceed, we need to model the default event. To that end, we consider a \textit{reduced-form} (a.k.a. \textit{intensity}) default model. We refer the reader to~\cite{Duffie99} and~\cite{Lando04} for an extensive exposition of this class of models. In this framework, the default time $\tau:=\tau(\lambda)$ is defined as the passage time of the process $\Lambda:=(\Lambda_t,t\in[0,T])$ defined as $\Lambda_t:=\int_0^t \lambda_s ds$ above a unit-mean exponential random variable $\cE$ independent from every other processes. The process $\lambda$ is an \textit{intensity}, i.e., it is positive, so that $\Lambda$ is almost surely increasing. In this model, the default event $\{\tau\leq t\}$ is modeled as $\{\Lambda_t\geq \cE\}$ and the survival probability is given by
$$\Q\left(\tau>t\right)=\Q\left(\Lambda_t<\mathcal{E}\right)=\Q\left(U<e^{-\Lambda_t}\right)=\E\left[e^{-\Lambda_t}\right]\;,$$
where $U:=e^{-\mathcal{E}}$ is a random variable uniformly distributed on $[0,1]$.\medskip

The function $\bar{P}^{r}_s(t)$ can be proven to be a time-$s$ discount curve in many cases. To show this, we first define a sub-filtration $\FF$ such that all processes are $\FF$-adapted except those featuring $\tau$ (i.e., those featuring $\cE$ or $U$, which are independent from $\cF_T$). We then define a second filtration $\mathbb{H}=(\mathcal{H}_t,t\in [0,T])$, the filtration generated by the default process $\mathcal{H}_t=\sigma(\indic_{\{\tau<u\}}, u<t)$. Eventually, the total filtration $\GG$ is recovered by progressively enlarging $\FF$ with $\HH$: $\mathbb{G}=\mathbb{F}\vee\mathbb{H}$. Hence, $\tau$ is a $\GG$-stopping time, but not an $\FF$-stopping time. In such a case, one can replace $\cG_s$ by $\cF_s$ in the expression providing the time-$s$ price of the risk-free ZCB:
$$P^{r}_s(t)=\E\left[\left.D_s(t)\right|\cG_s\right]=\E\left[\left.D_s(t)\right|\cF_s\vee\cH_s\right]=\E\left[\left.D_s(t)\right|\cF_s\right]\;.$$
A central result in stochastic calculus is the so-called \textit{Key lemma}. This fundamental theorem allows one to write the $\cG_s$-conditional expectation of $X\indic_{\{\tau>t\}}$ as the $\cF_s$-conditional expectation of $Xe^{-\int_s^t \lambda_u du}$, rescaled by $\indic_{\{\tau>s\}}$, for every integrable and $\cF_t$-measurable random variable $X$. It is originally due to Dellacherie and Meyer \cite{Dell80}, although its use in financial applications have been put forward by Bielecki, Jeanblanc and Rutkowski~\cite{Biel02} (see, e.g., ~\cite{Biel11} for numerous examples in credit risk and~\cite{Brigo17} for a specific application in counterparty credit risk). Applying the Key lemma to the risky ZCB formula above yields, with $X\leftarrow e^{-\int_s^t r_u du}$,

$$\bar{P}^r_s(t)=\indic_{\{\tau>s\}}\E\left[\left.e^{-\int_s^t \lambda_u du}D_s(t)\right|\cF_s\right]=\indic_{\{\tau>s\}}\E\left[\left.e^{-\int_s^t (\lambda_u + r_u) du}\right|\cF_s\right]=:\indic_{\{\tau>s\}}P^{\lambda+r}_s(t)\;.$$

Eventually, in the special case where $r\equiv 0$, so that $\bar{P}^{r}_s(t)$ collapses to $$\bar{P}^{r}_s(t)=\E\left[\indic_{\{\tau>t\}}|\cG_s\right]=\Q\left(\tau>t|\cG_s\right)=\indic_{\{\tau>s\}}P^{\lambda}_s(t)\;.$$
Hence, on the event $\{\tau>s\}$, $\bar{P}^{r}_s(t)=P^{\lambda}_s(t)$ agrees with the survival probability function associated with $\tau$, conditional upon $\cG_s$.\medskip 

In this specific context, we are interested in a model $x$ to depict the dynamics of the intensity process $\lambda$ that would be tractable enough, and provide a perfect fit to any valid survival probability curve extracted from the prices of defaultable instruments like corporate bonds or credit default swaps (CDS). 

\begin{remark}
Notice that in contrast to rates, that can \--- and some of them currently do \--- take negative value, non-negativity is a formal requirement when $x$ represents an intensity process $\lambda$. A default model featuring ``negative intensities'' is theoretically flawed, and is problematic. Indeed, modelling the event $\{\tau>t\}$ as $\{\Lambda_t<\cE\}$ yields a survival indicator process $\indic_{\{\tau>t\}}$ that might jump both up and down, i.e., the reference entity could be ``brought back to life''. One could of course think of replacing the default event using a first passage time, thereby revisiting the default time definition as $\tau:=\inf\{t\geq 0:\Lambda_t\geq \cE\}$. However, one looses the analytical tractability for the survival probability since in this case, $\Q(\tau>t)$ does no longer agree with $\Q(\Lambda_t<\cE)=\E\left[e^{-\int_0^t \lambda_u du}\right]=P_0^\lambda(t)$. The $x\geq 0$ constraint is also a natural requirement when it represents a credit spread. 
\end{remark}

% ----------------------------------------------------------------------------------
\subsection{The perfect fit problems}
\label{sec:PerfectFit}
% ----------------------------------------------------------------------------------

Equation \eqref{eq:Xistar} suggests that the calibration problem consists of finding the parameters of a given model to minimize the discrepancies between market and model curves up to a time horizon $T$. However, it is clear that the choice of the model class will also have a substantial impact. Indeed, depending on the model chosen, the minimum of the error function could be large, small or even zero, in which case the perfect fit is obtained. 

Inspired by the financial problems mentioned in Section \ref{sec:Setup}, we consider the following problems directly related to the perfect fit constraint (up to a given time horizon $T$, that is implicit in the sequel). The first one does not impose any constraint on the process $x$ to consider.

\begin{problem}\label{prob:prob2}
Find a tractable process $x$ satisfying
$$P_s^x(t):=\E\left[\left.e^{-\int_s^t x_u du}\right|\cF_s\right]=P_s(t)$$
for $t\in[0,T]$ and every given discount curve $P_s$.
\end{problem}

Depending on the application at hand, one may need to impose additional constraints on $x$. As suggested by the risky discounting example, non-negativity is a crucial one. This leads us to consider a second (constrained) problem. 

\begin{problem}\label{prob:prob1}
Find a tractable positive process $x$ (i.e., such that $\Q(x_t\geq 0)=1$ and $\Q(x_t>0)>0$ for all $t\in[0,T]$) satisfying
$$P_s^x(t):=\E\left[\left.e^{-\int_s^t x_u du}\right|\cF_s\right]=P_s(t)$$
for $t\in[0,T]$ and every strictly decreasing discount curve $P_s$.
\end{problem}

In either problems, \textit{tractability} refers to the fact that model calibration \eqref{eq:Xistar} \--- that features an optimization over the parameter space \--- is not too cumbersome, computationally. Solving this optimization problem typically requires many iterations, hence numerous evaluations of the objective function. This suggests that a highly desirable feature of the model is to admit a closed form expression for $P^{model}$ or, at least, that the latter can be computed without having to rely on time-consuming numerical methods like, e.g., Monte Carlo simulations.

% ==========================================================================================
\section{Shifted homogeneous affine models}
\label{sec:JD}
% ==========================================================================================

In order to solve these two problems, we consider what is probably the most tractable family of models, namely \textit{affine processes} and, more specifically \textit{time-homogeneous affine processes}. Indeed, for a one-factor affine model $y:=(y_t,t\in[0,T])$, many expressions are available analytically, as well as for its integrated version $Y:=(Y_t,t\in[0,T])$, $Y_t:=\int_0^t y_u du$. In particular, $P_s^y(t):=\E\left[\left.e^{-\int_s^t y_u du}\right|\cF_s\right]$ is merely the conditional moment generating function of $Y_t-Y_s$, $t\geq s$.

% ----------------------------------------------------------------------------------
\subsection{Affine processes and affine jump-diffusions}
% ----------------------------------------------------------------------------------

As recalled in the introduction, ATSM models are widely used in finance because they offer an appealing modeling framework : they are scarce, and empirical evidences suggest that they depict relatively well the market dynamics. %In their simplest version, they feature a single factor, but empirical evidences suggest that the better perform when several factors are considered. Three factors seem to be a good compromise~\cite{Dai00}. The problems discussed in Section \ref{sec:PerfectFit} rely on the specific structure of discount curves that are generated by ATSM models, which is independent of the number of factors driving $y$. Hence, we restrict ourselves to consider single-factor models. 
Affine models are characterized as follows \cite{Filip05}.

\begin{definition}[Affine process]
An \textit{affine process} is any process $y$ satisfying 

\begin{equation}
\label{eq:affine}
P^{y}_s(t):=\E\left[\left.e^{-\int_s^t y_u du}\right|\cF_s\right]=e^{A^y_s(t;\Xi)-B^y_s(t;\Xi)y_s}=:P^{y}_s(t;\Xi)
\end{equation}
where $\Xi$ is the (set of) parameter(s) governing $y$ and $A^y_s,B^y_s$ are differentiable functions satisfying $A^y_s(s;\Xi)=B^y_s(s;\Xi)=0$. 
\end{definition}

%Affine processes are central in finance, and in term-structure models in particular. One reason is that when $A,B$ functions are known, the 
Provided that the $A,B$ functions are known, the analytical form \eqref{eq:affine} facilitates in a tremendous way calibration procedures such as \eqref{eq:Xistar} when the considered $P^{model}$ function takes the form of the conditional expectation in \eqref{eq:affine}, as illustrated on the risk-free and risky discounting applications. This explains why such models are so popular in term-structure modeling.

For such processes, the function $P^y_s$ is thus well-defined for every $s$, is positive, and satisfies $P^y_s(s)=1$. It is therefore a time-$s$ discount curve in the sense of Definition \ref{def:DiscCurve} since it is obviously differentiable on $(s,\infty)$. For instance, $P^r_s$ and $P^\lambda_s$ in the above two examples are time-$s$ discount curves whenever $r$ and $\lambda$ are affine processes, respectively.\medskip

It is known (see, e.g., \cite{Brigo06} and \cite{Duffie96a}) that every diffusion with affine drift and diffusion coefficients, regular enough so that a solution exists, is an affine process. Similarly, every jump-diffusion with such types of drift and variance coefficients and independent compounded Poisson jumps (i.e., exponentially-distributed jumps arriving according to a Poisson process) is also affine.

\begin{definition}[Affine jump-diffusions, AJD]\label{eq:affineJD} A stochastic process $y$ is called an \textit{affine} jump-diffusion if its dynamics take the form
\beq dy_t=(a(t)+b(t)y_t)dt+\sqrt{c(t)+d(t) y_t}dW_t+ dJ_t\label{eq:HAJD}
\eeq
with $W$ an $\FF$-Brownian motion and $J$ an $\FF$-adapted compound Poisson process independent from $W$, defined according to $J_t:=\sum_{j=1}^{N_t} \zeta_i$ where $N$ is a Poisson process with instantaneous jump rate $\omega(t)\geq 0$ and $\zeta_i$'s are i.i.d. exponentially distributed random variables with mean $\alpha\geq 0$. In the special case where the parameters $(a,b,c,d,\alpha,\omega)$ are constant, $y$ is said \textit{time-homogeneous}, or simply \textit{homogeneous}, or HAJD.
\end{definition}

As explained above, affine models are specifically relevant in our context when $A,B$ are known in closed form. This is the case for HAJD. Three important homogeneous cases are the Ornstein-Uhlenbeck, the square-root diffusion and the square-root jump-diffusion. The first one, widely known as the Vasicek (VAS) model, corresponds to the special case where $(a(t),b(t),c(t),d(t),\alpha,\omega(t))=(\kappa\beta,-\kappa,\eta^2,0,0,0)$ and $y_0\in\mathbb{R}$. The second model is the Cox-Ingersoll-Ross with (CIR), and is associated to $(a(t),b(t),c(t),d(t),\alpha,\omega(t))=(\kappa\beta,-\kappa,0,\delta^2,0,0)$, with $y_0,\beta>0$. Eventually, the JCIR is an extension of the CIR, associated with parameters $(a(t),b(t),c(t),d(t),\alpha,\omega(t))=(\kappa\beta,-\kappa,0,\delta^2,\alpha,\omega)$. The speed of mean-reversion $\kappa$ is assumed to be positive in all models. When the initial value $y_0$ is part of the parameters, we note the parameter set $\Xi_0$. In contrast to VAS which is a Gaussian model, the CIR and JCIR models are non-negative. We recall (and derive) some properties of these processes in the Appendix (Section \ref{sec:AppAffine}) for further references.\medskip

Observe that the sum of two affine processes $x,y$ is, generally speaking, not an affine process. Hence, it is not clear whether the risky discounting curve $P^{\lambda+r}_s$ is a time-$s$ discount curve, even in the simple case where both $r,\lambda$ are affine processes. Some special cases are discussed in the Appendix (Section \ref{sec:AppDiscCurve}). In the sequel, we consider a specific pricing time, say $s=0$ without loss of generality, and drop the observation time subscript for conciseness.\medskip

HAJD models like VAS, CIR and JCIR seem appropriate to solve problems \ref{prob:prob2} and \ref{prob:prob1}. Unfortunately, they do not allow for a perfect fit to a given discount curve $P$, except in very special cases. Indeed, it is not possible in general, for such type of processes $x$, to find $\Xi$ (or $\Xi_0$) such that $P^{model}:=P^x(\cdot;\Xi)\equiv P$, even up to a finite horizon $T$. %In Section \ref{sec:ExHajdCalECB}, we provide an example illustrating the lack of flexibility of HAJD models to replicate market curves using yield curves taken from the European Central Bank (ECB).

\subsection{A deterministic shift extension}
% ----------------------------------------------------------------------------------

The starting point is to notice that the limited capacities of homogeneous models result from their rigid parametric form. Therefore, an interesting route is to consider a family of models $x$ defined as time-dependent transform of a base HAJD model $y$ in such a way that the model's tractability is not affected. In this section, we recall the general deterministic shift extension approach. The latter has been introduced in the seminal paper~\cite{Brigo01} in order, precisely, to address calibration issues such as Problem \ref{prob:prob2}. In this model, $x:=x^\varphi$ is defined as a HAJD ($y$) that is shifted in a time-dependent way using a deterministic function $\varphi$: 
\begin{equation}
x^\varphi_t:=y_t+\varphi(t)\;.\label{eq:xyshift}
\end{equation}

Interestingly, $P^{model}(t):=P^{x^\varphi}(t;\Xi)$ where $x^\varphi$ remains affine (although no longer homogeneous) and is hence analytically tractable in terms of calibration since :
$$P^{x^\varphi}(t;\Xi)=e^{A^{x^\varphi}(t;\Xi)-B^{x^\varphi}(t;\Xi)x_0}$$
with
\begin{eqnarray}
A^{x^\varphi}(t;\Xi)&=&A^y(t;\Xi)- \int_0^t \varphi(u)du+B^y(t;\Xi)\varphi(0)\;,\nonumber\\
B^{x^\varphi}(t;\Xi)&=&B^y(t;\Xi)\;.\nonumber
\end{eqnarray}

Clearly, the dynamics of $x^\varphi$ are easily obtained from that of $y$. Indeed, assuming
\beq
dy_t=\mu(t,y_t)dt+\sigma(t,y_t)dW_t+dJ_t\;,\label{eq:ySDE}
\eeq
the dynamics of $x^\varphi$ read, when $\varphi$ is differentiable, as
\beq
dx^\varphi_t=dy_t+\varphi'(t)dt=(\mu(t,x^\varphi_t-\varphi(t))+\varphi'(t))dt+\sigma(t,x^\varphi_t-\varphi(t))dW_t+dJ_t\;,~~x^\varphi_0=y_0+\varphi(0)\;.\nonumber%\label{eq:AffxShift}
\eeq
It can be shown that in the particular case where $y$ is a HAJD, then $x^\varphi$ remains an AJD, even though no longer homogeneous, unless $\varphi(t)$ is constant. For instance, if the dynamics of $y$ obey \eqref{eq:HAJD}, then $x^\varphi$ is governed by the same type of dynamics since
\beq
%dx^\varphi_t=\left(a(t)+\varphi'(t)-b(t)\varphi(t)+b(t)x^\varphi_t\right)dt+\sqrt{c(t)-d(t)\varphi(t)+d(t) x^\varphi_t}dW_t+ dJ_t\;.\label{eq:ShiftHAJD}
dx^\varphi_t=\left(a^\varphi(t)+b(t)x^\varphi_t\right)dt+\sqrt{c^\varphi(t)+d(t) x^\varphi_t}dW_t+ dJ_t\;.\label{eq:ShiftHAJD}
\eeq
where $a^\varphi(t):=a(t)+\varphi'(t)-b(t)\varphi(t)$ and $c^\varphi(t):=c(t)-d(t)\varphi(t)$. %In particular, when shifting the $y$-model with $\varphi$, %only the parameters $(a(t),c(t))$ change;
%the scaling coefficients $(b(t),d(t))$ of the shifted process $x^\varphi$ are the same as those of $y$. 
As already noticed in~\cite{Brigo01}, whatever the base model $y$, the parameter $\Xi$ and the discount curve $P^{market}$, there always exists a shift function $\varphi(t)=\varphi^\star(t;\Xi)$ that provides a perfect fit between the $x^\varphi$-model and the market. This is summarized in the next lemma. 

\begin{remark}
The shift approach may look suspicious: adding a deterministic function to a stochastic process is arguably a somewhat artificial way to fix the model's limitations in terms of calibration. However, as clear from \eqref{eq:ShiftHAJD}, shifting the model in a deterministic way actually amounts to consider an inhomogeneous model. For instance, the Vasicek model $(a(t),b(t),c(t),d(t))=(0,-\kappa,\eta,0)$ shifted with $\varphi(t)\leftarrow \int_0^t\beta(s)e^{-\kappa(t-s)}ds$ yields a HAJD with $(a(t),b(t),c(t),d(t))=(\kappa\beta(t),-\kappa,\eta,0)$, which is known as the Hull-White (HW) model \cite{Hull90}. Moreover, the later is itself a particular case of the Heath-Jarrow-Morton (HJM) model \cite{HJM92} which consists of modeling the entire instantaneous forward curve $f_s(t)$ with $df_s(t)=\mu(s,t)dt+\eta e^{-\kappa(t-s)}dW_s$ where the drift $\mu(s,t)$ is given by no-arbitrage, and provided that the initial discount curve and the long-term mean obey the relationship $\beta(t)=\frac{d}{dt}f_0(t)+\kappa f_0(t)+\frac{\eta^2}{2\kappa}(1-e^{-2\kappa t})$. Therefore, any instantaneous forward curve $f^{market}$ (hence discount curve $P^{market}$) can be fitted with either models provided that one takes $f_0(t)\leftarrow f^{market}(t)$ as initial curve (HJM), the corresponding long-term mean $\beta(t)$ (HW), or the associated shift $\varphi(t)$ (shifted Vasicek). These models became very popular among practitioners, essentially because of their ability to replicate market curves, i.e., to solve Problem~\ref{prob:prob2}. 
\end{remark}

\begin{lemma} The $x$-model defined according to \eqref{eq:xyshift} where $y$ is a HAJD solves Problem~\ref{prob:prob2} provided that
\begin{equation}
\varphi(t)\leftarrow\varphi^\star(t;\Xi):=\frac{d}{dt}\ln \frac{P^y(t;\Xi)}{P^{market}(t)}=f^{market}(t)-f^{y}(t;\Xi)\label{eq:shift}\;.
\end{equation}
where $f^{market}$ and $f^{y}$ are the instantaneous forward rate functions associated with $P^{market}$ and $P^y$, respectively.
\end{lemma}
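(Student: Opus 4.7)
The proof is essentially a one-line calculation exploiting the fact that $\varphi$ is deterministic, followed by a derivative to invert the resulting integral equation. Let me lay out the order.

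First I would observe that since $\varphi$ is a deterministic function of time, the random variable $\int_0^t x^\varphi_u\,du = \int_0^t y_u\,du + \int_0^t \varphi(u)\,du$ splits additively and the second summand may be pulled out of the expectation. This yields
\begin{equation}
P^{x^\varphi}(t;\Xi) \;=\; \E\!\left[e^{-\int_0^t x^\varphi_u\,du}\right] \;=\; e^{-\int_0^t \varphi(u)\,du}\,\E\!\left[e^{-\int_0^t y_u\,du}\right] \;=\; e^{-\int_0^t \varphi(u)\,du}\,P^{y}(t;\Xi).\nonumber
\end{equation}
Imposing the perfect-fit requirement $P^{x^\varphi}(t;\Xi)=P^{market}(t)$ is therefore equivalent to the scalar condition $\int_0^t \varphi(u)\,du = \ln\bigl(P^y(t;\Xi)/P^{market}(t)\bigr)$, valid for all $t\in[0,T]$.

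Next I would check that this integral equation admits a (unique) continuous solution. At $t=0$ both sides vanish, since $P^y(0;\Xi)=P^{market}(0)=1$ by the normalization of discount curves in Definition 1, so the boundary condition is automatic. Differentiating both sides with respect to $t$ and using the differentiability of $P^y(\cdot;\Xi)$ (which follows from the affine representation \eqref{eq:affine} together with the differentiability of $A^y, B^y$) and of $P^{market}$ (built into Definition 1), we obtain precisely
\begin{equation}
\varphi(t)=\frac{d}{dt}\ln\frac{P^{y}(t;\Xi)}{P^{market}(t)},\nonumber
\end{equation}
which is the first equality in \eqref{eq:shift}.

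The second equality in \eqref{eq:shift} is then a direct application of Lemma 1: the instantaneous forward rates attached to $P^y$ and $P^{market}$ satisfy $f^y(t;\Xi)=-\tfrac{d}{dt}\ln P^y(t;\Xi)$ and $f^{market}(t)=-\tfrac{d}{dt}\ln P^{market}(t)$, so $\tfrac{d}{dt}\ln\bigl(P^y/P^{market}\bigr) = f^{market}(t)-f^y(t;\Xi)$. Conversely, substituting this $\varphi^\star$ back into the factorization of $P^{x^\varphi}$ above gives $P^{x^\varphi}(t;\Xi)=P^{market}(t)$, which confirms that the shift solves Problem~\ref{prob:prob2}.

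There is no real obstacle here: the whole argument rests on the separability of the deterministic component from the stochastic one inside the exponential, and on the fundamental theorem of calculus. The only mild subtlety worth flagging is that we have implicitly assumed $P^{market}$ is smooth enough for $f^{market}$ to exist, but this is built into Definition 1 (differentiability of discount curves); no monotonicity of $P^{market}$ is required here, in contrast to Problem~\ref{prob:prob1} where positivity of $x$ would force $f^{market}>0$ via Lemma 1.
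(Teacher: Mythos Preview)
Your proof is correct and follows essentially the same approach as the paper: both rely on the factorization $P^{x^\varphi}(t)=e^{-\int_0^t\varphi(u)\,du}P^y(t)$ arising from the determinism of $\varphi$, and both invoke Lemma~\ref{lem:FwdCurve} for the forward-rate representation. The only cosmetic difference is that the paper simply substitutes $\varphi^\star$ and verifies the perfect fit directly, whereas you first derive $\varphi^\star$ from the integral equation and then confirm it; the content is identical.
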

\begin{proof}
Indeed, because $y$ is a HAJD, $P^y$ is a discount curve and from Lemma~\ref{lem:FwdCurve}, it admits a representation in terms of forward rates $f^y$. By assumption, same holds true for $P^{market}$. Eventually, 
$$P^{x^\varphi}(t;\Xi)=\E\left[e^{-\int_0^t x_u du}\right]=e^{-\int_0^t \varphi^\star(u;\Xi) du}\E\left[e^{-\int_0^t y_u du}\right]=e^{-\int_0^t f^{market}(u)du}=P^{market}(t)\;.$$
The model is tractable since $f^y(t;\Xi)=-\frac{d}{dt}\ln P^y(t;\Xi)$ can be computed in closed form.
\end{proof}

It is worth noting that, for a given model $y$, the perfect fit can be attained for every parameters  $\Xi$. This suggests that the calibration problem \eqref{eq:Xistar} is ill-posed. Indeed, the choice of $\Xi$ is completely arbitrary since the error between $P^{x^\varphi}$ and $P^{market}$ can be set to zero for any $\Xi$, provided that one chooses $\varphi(t)\leftarrow\varphi^\star(t;\Xi)$. In particular, one could take the null process for $y$ %(i.e., $y_t=0$ and $\Xi$ such that $dy_t=0$) 
and $\varphi(t)=f^{market}(t)$. This trivial choice rends $x^\varphi$ deterministic, which is most likely not the desired result. A common practice to circumvent this indeterminacy is thus either (i) to extend the set of calibration instruments, incorporating products that are sensitive to volatility (like interest-rate or credit options in the above asset classes), or (ii) to require the $y$-model to fit the market ``as best as possible'' (to get $\Xi^\star$) and then take $\varphi(t;\Xi^\star)$ as shift function:
\beq
P^{model}(t):=P^{x^\varphi}(t;\Xi^\star)~~\text{where}~~\Xi^\star := \arg\min_{\Xi} \|P^{y}(\cdot;\Xi)-P^{market}(\cdot)\|\;,~~\varphi(t)\leftarrow\varphi^{\star}(t):=\varphi^\star(t;\Xi^\star)\;.\label{eq:CalyNoConstraint}
\eeq

This approach is particularly relevant when no or little ``volatility-sensitive'' instruments are quoted on the market. The role of the shift is thus merely to compensate the remaining discrepancies between the market curve $P^{market}$ and the one generated by the ``best'' parametric model $y$, $P^{y}(\cdot;\Xi^\star)$. Adding a shift to the VAS, CIR or JCIR models yield the Hull-White, CIR++ or JCIR++, respectively~\cite{Brigo06}.\footnote{Notice that the Hull-White model is a Vasicek model where the long-term mean parameter is replaced by a deterministic function of time.}\medskip

\begin{remark}
On the top of the appealing affine structure, the shifted model is highly tractable because many statistical properties of the process are available in closed form. Indeed, as recalled in Section \ref{sec:AppAffine}, the $k$-th moment $m^{y}(k,t):=\E[y_t^k]$ and the moment generating function (MGF) $\psi^{y}(u,t):=\E[e^{uy_t}]$ of a time-homogeneous affine model $y$ are known analytically, as well as those of their time-integrals $Y_t:=\int_0^t y_udu$, $m^{Y}(k,t)$ and $\psi^{Y}(u,t)$. Due to the simple shift structure, the corresponding expressions for $x^\varphi$, the shifted model, are readily available. For instance, the $k$-th moment of $x^\varphi_t$ and $X^\varphi_t$ are given by Newton's binomial formula applied to $(y_t+\varphi(t))^k$ and the MGFs simply collapse to $\psi^{x^\varphi}(u,t)=e^{u\varphi(t)}\psi^{y}(u,t)$ and $\psi^{X^\varphi}(u,t)=e^{u\int_0^t \varphi(s)ds}\psi^{Y}(u,t)$.
\end{remark}

\subsection{Dealing with the positivity constraint}\label{sec:positivity}

As discussed above, the deterministic shift extension nicely solves Problem \ref{prob:prob2}. In order to solve Problem \ref{prob:prob1} however, one first considers a non-negative base process $y$. Yet, there is no reason that the shifted process $x^\varphi$ would remain non-negative. For instance, taking CIR dynamics for $y$, $x^\varphi$ is non-negative on $[s,t]$ if and only if $\min_{u\in[s,t]}\varphi(u)\geq 0$. From \eqref{eq:shift}, the shift function depends both on the $y$ model (and its parameters $\Xi$) and on the market curve. 

\begin{remark}
Observe that the optimization problem  \eqref{eq:CalyNoConstraint} is contradictory with non-negative shift functions. Indeed, by construction of $\Xi^\star$, $P^{y}(\cdot;\Xi^\star)$ passes \textit{through} $P^{market}$. Consequently, the shift $\varphi(t)\leftarrow\varphi^\star(t;\Xi^\star)$ will lead to a perfect fit, but will correct for both negative and positive errors. In other words, $\varphi$ will change of sign. Therefore, this strategy does not provide a valid solution to Problem \ref{prob:prob1}. This will be illustrated on a real example in Section \ref{sec:example}.
\end{remark}

In order to satisfy the non-negativity constraint mentioned in Problem \ref{prob:prob2}, one needs to force the non-negativity constraint on the shift at the optimal parameters. The shift function under positivity constraint is referred to with the notation $\varphi^{\star,+}(t)$ to stress the difference with the unconstrained counterpart, $\varphi^{\star}(t)$.

\begin{lemma} Let $y$ be a HAJD that is non-negative on $[0,T]$ with parameters $\Xi^\star$ given by
\beq
\Xi^{\star,+} := \arg\min_{\Xi} \|P^{y}(\cdot;\Xi)-P^{market}(\cdot)\|~~\text{subject to}~~
f^{y}(t;\Xi)\leq f^{market}(t),
~~\forall\; 0\leq t\leq T\;.\label{eq:CalyConstraint}
\eeq
Then, the $x^\varphi$-model \eqref{eq:xyshift} with $\varphi(t)\leftarrow \varphi^{\star,+}(t):=\varphi^{\star}(t;\Xi^{\star,+})$ solves Problem~\ref{prob:prob1}.
\end{lemma}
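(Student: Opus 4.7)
The plan is to verify directly that $x^\varphi$ with $\varphi\leftarrow\varphi^{\star,+}$ meets all three requirements of Problem~\ref{prob:prob1}: a perfect fit to $P^{market}$ on $[0,T]$, almost-sure non-negativity of the process, and tractability. The key observation is that the feasibility constraint in \eqref{eq:CalyConstraint} was engineered precisely so that the induced shift is non-negative, which is what makes everything go through cleanly.

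For non-negativity, I would first note that by \eqref{eq:shift} we have $\varphi^{\star,+}(t)=f^{market}(t)-f^{y}(t;\Xi^{\star,+})$, and that the constraint in \eqref{eq:CalyConstraint} forces $f^{y}(t;\Xi^{\star,+})\leq f^{market}(t)$ for every $t\in[0,T]$, so $\varphi^{\star,+}(t)\geq 0$ pointwise on $[0,T]$. Since $y$ is assumed to be a non-negative HAJD on $[0,T]$, the shifted process $x^\varphi_t=y_t+\varphi^{\star,+}(t)$ is the sum of an a.s.\ non-negative random variable and a non-negative deterministic function, hence $\Q(x^\varphi_t\geq 0)=1$. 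The perfect fit $P^{x^\varphi}\equiv P^{market}$ on $[0,T]$ then follows directly by applying the preceding (unconstrained) lemma at $\Xi=\Xi^{\star,+}$, since nothing in its proof required $\varphi^\star$ to be signed.

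For the strict positivity requirement $\Q(x^\varphi_t>0)>0$, I would argue by a simple dichotomy: if $\varphi^{\star,+}(t)>0$ for the given $t$, strict positivity is immediate from $y_t\geq 0$ a.s.; otherwise $\varphi^{\star,+}(t)=0$, but then $y_t$ is a non-degenerate non-negative random variable (because the underlying HAJD has a non-trivial diffusion and/or jump component), so $\Q(y_t>0)>0$ already. Tractability is inherited from the shift structure itself: $\varphi^{\star,+}$ is available in closed form from the analytically-known $f^{y}(\cdot;\Xi^{\star,+})=-\frac{d}{dt}\ln P^{y}(\cdot;\Xi^{\star,+})$ and the given $f^{market}$, and $x^\varphi$ remains affine (inhomogeneous) with the $A^{x^\varphi},B^{x^\varphi}$ functions displayed right after \eqref{eq:xyshift}.

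The genuinely new ingredient compared to the unconstrained lemma is just the sign argument on $\varphi^{\star,+}$, so I expect that step --- translating the pointwise feasibility constraint \eqref{eq:CalyConstraint} into pointwise non-negativity of the shift --- to carry essentially all the content; the remaining parts simply re-invoke results already established in the paper. The only mildly delicate point is the implicit assumption that the constrained feasible set in \eqref{eq:CalyConstraint} is non-empty and admits a minimizer, which I would treat as a well-posedness hypothesis on the calibration procedure for the HAJD family under consideration (VAS, CIR or JCIR), to be checked case-by-case rather than in the abstract.
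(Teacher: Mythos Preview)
Your proposal is correct and follows essentially the same approach as the paper: the paper's proof is a terse three-sentence argument that (i) the constraint in \eqref{eq:CalyConstraint} forces $\varphi^{\star,+}\geq 0$ via \eqref{eq:shift}, (ii) non-negativity of $y$ then gives non-negativity of $x^\varphi$, and (iii) the perfect fit holds for any $\Xi$, in particular $\Xi^{\star,+}$. You add explicit checks of the strict-positivity clause $\Q(x^\varphi_t>0)>0$ and of tractability, which the paper leaves implicit, and you flag the non-emptiness of the feasible set, which the paper addresses separately right after the proof (via the trivial parameter choice $y\equiv 0$).
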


\begin{proof}
The condition on the instantaneous forward rates ensures that shift function $\varphi^{\star,+}$ will be non-negative on $[0,T]$; this is obvious from \eqref{eq:shift}. Hence, since $y$ is assumed to be non-negative, so is the shifted process $x^\varphi$. Moreover, taking $\varphi(s)\leftarrow\varphi^\star(s;\Xi)$ yields a perfect fit for every $\Xi$, by construction, including $\Xi=\Xi^{\star,+}$. 
\end{proof}
Notice that there always exists a set of parameters $\Xi$ such that the constraint is met. Indeed, all parameters $\Xi$ associated to the deterministic case $y\equiv 0$ yield $f^y(\cdot;\Xi)\equiv 0$. Clearly, the constraint is met since $f^{market}(t)$ is strictly positive given that $P^{market}$ is strictly decreasing, by assumption. The shift is simply given by the market forward rate $\varphi(t)\leftarrow \varphi^{\star,+}(t)=f^{market}(t)$. However, the trivial process parameter is likely not to be satosfactory.\medskip

In order to deal with Problem~\ref{prob:prob1}, we need to consider a non-negative base model $y$. Given that we focus on HAJDs, we consider the CIR and JCIR models. To make the distinction between the two shifted models, we call S-(J)CIR the (J)CIR process shifted with $\varphi(t)\leftarrow\varphi^\star(t)=\varphi^\star(t;\Xi^\star)$ (i.e., without positivity constraint, and parameter $\Xi^\star$ given by \eqref{eq:CalyNoConstraint}) and PS-(J)CIR the (J)CIR process shifted with $\varphi(t)\leftarrow\varphi^{\star,+}(t)=\varphi^\star(t;\Xi^{\star,+})$ (i.e., under positivity constraint, and parameter $\Xi^\star$ given by \eqref{eq:CalyConstraint}). Although the PS-(J)CIR allows both for a perfect fit and the non-negativity constraint, one may argue that it is not as tractable as the (J)CIR. Indeed, the  optimization problem \eqref{eq:CalyConstraint} is more difficult than \eqref{eq:CalyNoConstraint} due to the constraint on the instantaneous forwards, even if some sufficient conditions on the parameters can be found. Second, and probably more importantly, this constraint is binding, in the sense that it often deeply impacts the optimal parameter $\Xi^{\star,+}$. Even if it is unlikely that the optimal solution corresponds to the deterministic case, it often yields dynamics associated to rates that feature ``little randomness''. This will be illustrated in Section \ref{sec:app}, first by comparing the variance of the integrated S-CIR and PS-CIR processes, as well as the impact when dealing with financial applications. These two points are discussed in \cite[sec. 3.9.3, p.107-109]{Brigo06}. To circumvent this issue in an interest rate framework, the authors suggest to relax the strict positivity constraint. By working in a setup where positivity is expected but not guaranteed, they obtain a process that yields much more realistic results in terms of implied volatility levels. This is perfectly fine in such a context as positivity of rates might be desirable (in some cases), but zero is by no means a strict lower bound (neither theoretically nor practically). Yet, this is more problematic when it comes to model such things as default intensities, because this kind of applications requires both strict positivity and, typically, large volatility. Increasing the variance of the CIR++ process without breaking Feller's constraint\footnote{Increasing the volatility of the CIR++ process by increasing the diffusion paramter $\delta$ just breaks the Feller's condition ($2\kappa\beta\geq\delta^2$) and leads to an intensity process that almost surely equals to zero at a given time interval.} can be achieved by incorporating compounded Poisson jumps (JCIR++) but, unfortunately, increasing the jump activity while maintaining the calibration to a given market curve $f^{market}$ is difficult under the positivity constraint. Indeed, the minimum of the implied shift function is driven down when increasing the jump activity because the difference $f^{\rm JCIR}(t)-f^{\rm CIR}(t)$ is non-negative and increases with $\omega,\alpha$ for $\alpha,\omega>0$ (see Appendix, Section \ref{sec:AppJcir}).
%\begin{equation}
%    $f^{\rm JCIR}(t)%= f^{\rm CIR}(t)+\frac{2\omega\alpha(e^{t\gamma} - 1)}{2\gamma + (\kappa+\gamma+2\alpha)(e^{t\gamma} - 1)}
%    \geq f^{\rm CIR}(t)$ 
%\end{equation}
%holds for any $\Xi_0$ and $t\geq 0$ (see Section \ref{sec:tc-cir}). %\textcolor{green}{c’est quoi gamma ? on l’a defini?}
This observation combined with \eqref{eq:shift} leads to a lower shift function $\varphi$ for the JCIR++ than for the corresponding CIR++. For this reason, there is a need for an alternative to CIR++ and JCIR++ that would combine (i) tractability, (ii) the prefect fit feature, (iii) the large implied volatility and (iv) positivity. %\textcolor{green}{Faudrait-il pas avancer ce paragraph plus bas dans une section TC-JCIR juste apres le TC-CIR (en meme temps cela permet de completer le CIR avec les sauts)}

\section{The deterministic time-changed extension}\label{sec:TC}

In order to circumvent the drawbacks of the deterministic shift extension with regards to Problem \ref{prob:prob1}, we propose a different approach. In the same spirit as the shift, we aim at finding a model $x$ by adjusting a  time-homogeneous affine model $y$, that would benefit from a set of desirable properties. 

\subsection{Model setting}\label{sec:model}

The $x$-model is obtained by time-changing a HAJD $y$ using a specific (but deterministic) \textit{clock} $\Theta$ that may differ from the calendar clock. A clock is a time change function that can differ from identity, but having specific properties. 

\bed[Clock] A clock is an application $$\Theta:\mathbb{R}^+\to\mathbb{R}^+,\; t\mapsto \Theta(t)$$ that is a grounded, increasing and differentiable. In other words, a clock is any function $\Theta$ of the form
\beq
\Theta(t):=\int_0^t \theta(u) du~~\text{where}~~\theta(u)> 0\;,~~\forall \;u\geq 0\;.\nonumber
\eeq
\eed

Clearly, $\Theta(t)=t$ is the calendar clock, and  any function of the form $\Theta(t)=kt$, $k>0$, is again a clock, corresponding to a constant rescaling of the calendar time.\medskip 

Similarly to \eqref{eq:xyshift}, we define our model as $x=x^\theta$, obtained from the following transform of the base process $y$:
\beq
x^\theta_t:=\theta(t)y_{\Theta(t)}\;.\label{eq:xytc}
\eeq
The dynamics of $x^\theta$ are given by Ito's product rule. Defining the process $y^{\theta}:=\left(y_{\Theta(t)},t\in[0,T]\right)$, one gets
\beq
dx^\theta_t=y^\theta_td\theta(t)+\theta(t)dy^\theta_t\;,\label{eq:dynTC}
\eeq
where the dynamics of $y^\theta$ are given in the below lemma.
\begin{lemma}
\label{lem2} Let $\Theta$ be a clock and consider a base model $y$ with dynamics \eqref{eq:ySDE}. Then, the dynamics of $y^{\theta}$ take the form
\begin{equation}
\label{eq:ytheta}
dy^{\theta}_t = \mu\left(\Theta(t),y^{\theta}_t\right)\theta(t)dt + \sigma\left(\Theta(t),y^{\theta}_t\right)\sqrt{\theta(t)}dB_t + dJ^\theta_{t},\quad y^\theta_0=y_0
\end{equation}
where $B$ is an $\mathbb{F}^\theta$-Brownian motion, $\mathbb{F}^\theta:=(\mathcal{F}_{\Theta(t)},t\in[0,T])$ and $J^\theta$ an inhomogeneous compounded Poisson process with jump size mean $\alpha$ and time-$t$ intensity $\omega\theta(t)$.
\end{lemma}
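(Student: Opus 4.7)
The plan is to start from the integral representation of $y$ evaluated at $\Theta(t)$ and perform a deterministic change of variables in each component (drift, diffusion, jumps) separately, then differentiate to recover the claimed SDE.

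First, from \eqref{eq:ySDE} we have
\beq
y^{\theta}_t \;=\; y_{\Theta(t)} \;=\; y_0 \;+\; \int_0^{\Theta(t)} \mu(s,y_s)\,ds \;+\; \int_0^{\Theta(t)} \sigma(s,y_s)\,dW_s \;+\; J_{\Theta(t)}\;.\nonumber
\eeq
For the drift term, since $\Theta$ is $C^1$ with $\Theta'(u)=\theta(u)>0$, the change of variable $s=\Theta(u)$, $ds=\theta(u)du$, combined with $y_{\Theta(u)}=y^\theta_u$, yields immediately
\beq
\int_0^{\Theta(t)} \mu(s,y_s)\,ds \;=\; \int_0^{t} \mu\bigl(\Theta(u),y^\theta_u\bigr)\theta(u)\,du\;,\nonumber
\eeq
giving the drift part of \eqref{eq:ytheta}.

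The main step is handling the diffusion term. Because $\Theta$ is deterministic and strictly increasing, the time-changed martingale $M_t:=\int_0^{\Theta(t)}\sigma(s,y_s)dW_s$ is an $\mathbb{F}^\theta$-martingale (with $\mathbb{F}^\theta_t=\mathcal{F}_{\Theta(t)}$), and its $\mathbb{F}^\theta$-quadratic variation is
$\langle M\rangle_t=\int_0^{\Theta(t)}\sigma^2(s,y_s)ds=\int_0^t\sigma^2(\Theta(u),y^\theta_u)\theta(u)du$ by the same change of variable. Applying the deterministic time-change formula for stochastic integrals (equivalently, defining $B_t:=\int_0^t\theta(u)^{-1/2}\,d\widetilde W_u$ where $\widetilde W_t:=W_{\Theta(t)}$, and verifying via Lévy's characterization that $B$ is an $\mathbb{F}^\theta$-Brownian motion since $\langle\widetilde W\rangle_t=\Theta(t)$), one rewrites
\beq
\int_0^{\Theta(t)}\sigma(s,y_s)\,dW_s \;=\; \int_0^{t}\sigma\bigl(\Theta(u),y^\theta_u\bigr)\sqrt{\theta(u)}\,dB_u\;,\nonumber
\eeq
which is the diffusion part of \eqref{eq:ytheta}. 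This step is where the main technical care is needed: one must justify that $\widetilde W$ is a continuous $\mathbb{F}^\theta$-martingale with absolutely continuous quadratic variation $\Theta$, so that the representation against a standard $\mathbb{F}^\theta$-Brownian motion with integrand $\sqrt{\theta(u)}$ is legitimate. The determinism of $\Theta$ makes this routine (no Dambis-Dubins-Schwarz machinery is truly required), but it is the only non-algebraic ingredient.

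Finally, the jump component: $J^\theta_t:=J_{\Theta(t)}=\sum_{j=1}^{N_{\Theta(t)}}\zeta_j$. Since $N$ is Poisson$(\omega)$ independent of $W$ and $\Theta$ is deterministic, a direct computation of the $\mathbb{F}^\theta$-compensator (or equivalently inspection of the intensity measure $\omega\,ds$ under $s=\Theta(u)$) shows that $N^\theta_t:=N_{\Theta(t)}$ is an inhomogeneous Poisson process with intensity $\omega\theta(t)$; the jump sizes $\zeta_j$ remain i.i.d.\ exponential with mean $\alpha$ and independent of everything else, so $J^\theta$ is the claimed inhomogeneous compound Poisson process. Assembling the three pieces gives \eqref{eq:ytheta}, and $y^\theta_0=y_{\Theta(0)}=y_0$ follows from $\Theta(0)=0$.
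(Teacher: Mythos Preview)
Your proposal is correct and follows essentially the same route as the paper: start from the integral form of $y_{\Theta(t)}$, change variables $s=\Theta(u)$ in the drift, define $B_t=\int_0^t\theta(u)^{-1/2}\,dW_{\Theta(u)}$ and identify it as an $\mathbb{F}^\theta$-Brownian motion via $\langle W_{\Theta(\cdot)}\rangle_t=\Theta(t)$, and observe that $N_{\Theta(t)}$ is inhomogeneous Poisson with rate $\omega\theta(t)$. The only cosmetic difference is that you spell out L\'evy's characterization and the compensator argument a bit more explicitly than the paper, which simply cites a reference for the Brownian-motion step.
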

\begin{proof}
By definition, we have 
\begin{equation*}
y^\theta_t:=y_{\Theta(t)} = y_0 + \int_0^{\Theta(t)}\mu(u,y_u)du+\int_0^{\Theta(t)}\sigma(u,y_u)dW_u+\int_0^{\Theta(t)}dJ_u\;.
\end{equation*}
Hence,
\begin{equation*} \int_0^{\Theta(t)}\mu(u,y_u)du=\int_0^t\mu\left(\Theta(u),y_{\Theta(u)}\right)\theta(u)du=\int_0^t\mu\left(\Theta(u),y^\theta_u\right)\theta(u)du\;,
\end{equation*}
and
\begin{equation*}
\int_0^{\Theta(t)}\sigma(u,y_u)dW_u=\int_0^t\sigma\left(\Theta(u),y_{\Theta(u)}\right)dW_{\Theta(u)}=\int_0^t\sigma\left(\Theta(u),y^{\theta}_u\right)\sqrt{\theta(u)}dB_u\;.
\end{equation*}
Indeed, $\Theta$ is a clock, hence $\theta>0$ and the process $W^\theta:=(W_{\Theta(t)},t\in[0,T])$ is a local martingale with quadratic variation $\langle W^{\theta},W^{\theta}\rangle_{t}=\Theta(t)$. From  \cite{jyc:3m}, the process $B:=(B_t,t\in[0,T])$ defined as
\beq
B_t:=\int_0^t\frac{1}{\sqrt{\theta(u)}}dW_{\Theta(u)}\label{eq:TCB}
\eeq
is then a Brownian motion. Differentiating $y^{\theta}_t$ leads to \eqref{eq:ytheta}. With regards to the compounded Poisson process, notice that $dJ_t=\zeta_{N_t}dN_t$ and  $dJ^\theta_{t}=dJ_{\Theta(t)}=\zeta_{N_\Theta(t)}dN_{\Theta(t)}$. The process $N^\theta$ defined as $N^\theta_t:=N_{\Theta(t)}$ is a Poisson process with instantaneous intensity $\omega\theta(t)$. Hence, the dynamics of $J^\theta$ are given by $J^\theta_0=0$ and $\zeta_{N^\theta_t}dN^\theta_t$, so that $J^\theta$ is a compounded Poisson process with jump size mean $\alpha$ and time-$t$ instantaneous rate of jumps arrival, $\omega\theta(t)$.
\end{proof}

This model looks appealing for several reasons. First, just as the shift extension, it is a deterministic adjustment of a base model and is hence expected to be tractable when the latter is, say, a HAJD. Second, because $x^\theta_t$ is a positive rescaling of the process $y$ sampled at time $\Theta(t)$, the range of $x^\theta$ is linked to that of $y$. In particular, if the range of $y$ is $\mathbb{R}$, as for the Vasicek model, then so is the range of $x^\theta$. However, if $y$ is non-negative as in the (J)CIR case, then so is $x^\theta$. Hence, this solves the drawback of the shift approach related to Problem \ref{prob:prob1}. Eventually, the time-dependent feature of the clock rate $\theta$ is expected to provide additional flexibility in the calibration properties of $x^\theta$ with respect to that of the homogeneous model $y$. Two questions remain open in this respect. First, we need to clarify the circumstances under which the model provides a perfect fit. Second, in the case where the perfect fit can be achieved, we need to provide an efficient procedure to compute the resulting ``optimal clock'', $\Theta^\star$. The price to pay is that, in contrast with the shift extension, the time-changed model is not fully flexible. Indeed, starting with a given model $y$, the $x^\theta$ model can only generate specific shapes for discount curves. We are thus more dependent on the initial choice of the base model $y$. Fortunately, it turns out that a perfect fit is achievable for a wide set of market curves, including all decreasing discount curves, considered in Problem \ref{prob:prob1}. This is clearly the most important case since (i) it corresponds to the case where the shift approach fails to provide a convincing solution and (ii) it is probably the most common case in practice, since it encompasses the class of discount curves with non-negative rates (or, more generally, with non-negative instantaneous forward rates), as well as the set of all continuous survival probability curves. Moreover, even if the mathematical expression of the clock $\Theta^\star$ is not available in closed form, its numerical computation turns out to be easy.  This leads us to the first fundamental result of the paper.\footnote{When no confusion is possible, the explicit reference to the model parameters $\Xi$ is avoided to ease the notations.}

\begin{theorem}\label{th:th1}
Let $P^{market}$ be a discount curve and $y$ a model such that $P^y$ is a discount curve. Define the $x^\theta$-model as in \eqref{eq:xytc}. Then, $P^{x^\theta}\equiv P^{market}$ provided that $\Theta\leftarrow \Theta^\star$  where $\Theta^\star$ satisfies the first-order ODE
    \beq
\theta^\star(t):=\frac{d}{dt}\Theta^\star(t)=\frac{f^{market}(t)}{f^y(\Theta^\star(t))}\;,\label{eq:ode}
\eeq
with $f^{market},f^y$ the corresponding instantaneous forward curves. Moreover, if $P^{market}$ and $P^y$ are strictly decreasing, the solution to \eqref{eq:ode} exists, is a clock, and is given by
    \beq
\Theta^\star(t):=Q^y\left(P^{market}(t)\right)\;,\label{eq:ThetaStar}
\eeq
where $Q^y$ is the inverse of the base-model discount curve, $P^y$.
\end{theorem}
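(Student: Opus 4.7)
The heart of the argument is a deterministic change of variable inside the exponent. Since $\Theta$ is $C^1$ with $\theta=\Theta'>0$, for every path of $y$ I can write
\[
\int_0^t x^\theta_u\,du=\int_0^t \theta(u)\,y_{\Theta(u)}\,du=\int_0^{\Theta(t)} y_v\,dv,
\]
the substitution $v=\Theta(u)$ being valid pathwise because $\Theta$ is deterministic and $C^1$. Taking expectations and invoking the definition of $P^y$ on the time grid $\Theta(t)$, this collapses to
\[
P^{x^\theta}(t)=\E\!\left[e^{-\int_0^{\Theta(t)} y_v\,dv}\right]=P^y(\Theta(t)).
\]
So the perfect-fit condition $P^{x^\theta}\equiv P^{market}$ is equivalent to the functional equation $P^y(\Theta(t))=P^{market}(t)$ for all $t\in[0,T]$.

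\textbf{From the functional equation to the ODE.} Since $P^y$ and $P^{market}$ are both strictly positive, I pass to logarithms and differentiate in $t$. Using $\frac{d}{ds}\ln P^y(s)=-f^y(s)$ and $\frac{d}{dt}\ln P^{market}(t)=-f^{market}(t)$ from Lemma 1, the chain rule gives $-\Theta'(t)\,f^y(\Theta(t))=-f^{market}(t)$, i.e.
\[
\theta^\star(t)=\Theta^{\star\prime}(t)=\frac{f^{market}(t)}{f^y(\Theta^\star(t))},
\]
which is exactly (12). Conversely, integrating this ODE from $0$ with $\Theta^\star(0)=0$ and using $\ln P^y(0)=\ln P^{market}(0)=0$ recovers $P^y(\Theta^\star(t))=P^{market}(t)$, hence the perfect fit.

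\textbf{Existence, explicit form, and the clock property.} Assume now that $P^{market}$ and $P^y$ are strictly decreasing. By Lemma 1, $f^{market}(t)>0$ and $f^y(s)>0$ for all positive arguments, so the right-hand side of the ODE is strictly positive wherever defined, and any solution will automatically satisfy $\theta^\star>0$. Moreover the strict monotonicity of $P^y$ makes it a bijection onto its image $P^y([0,\infty))$, so its inverse $Q^y$ is a well-defined, strictly decreasing, $C^1$ function on that image. Setting $\Theta^\star(t):=Q^y(P^{market}(t))$ is then legitimate under the implicit assumption that $P^{market}([0,T])\subseteq P^y([0,\infty))$, and by construction it satisfies $P^y(\Theta^\star(t))=P^{market}(t)$. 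Differentiating this explicit formula using $(Q^y)'(p)=1/(P^y)'(Q^y(p))$ and $(P^y)'(s)=-f^y(s)P^y(s)$, $(P^{market})'(t)=-f^{market}(t)P^{market}(t)$ gives back (12), confirming the explicit formula solves the ODE. It remains to check that $\Theta^\star$ is a clock in the sense of the definition: $\Theta^\star(0)=Q^y(1)=0$ (grounded), $\Theta^\star$ is differentiable as a composition of differentiable functions, and $\theta^\star>0$ from the ODE.

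\textbf{Anticipated obstacle.} The only subtlety is the domain condition that $P^{market}(t)$ lies in the image of $P^y$; this is the analogue, in the time-change setting, of the positivity-of-shift condition discussed in Section 3.3 and determines when the time-change approach actually accommodates a given market curve. The rest is routine calculus once the change of variable in the first step is in place, so I would present that substitution carefully and treat the ODE/inversion arguments as direct verifications.
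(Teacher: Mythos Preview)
Your proof is correct and follows essentially the same route as the paper: establish $P^{x^\theta}(t)=P^y(\Theta(t))$ via the deterministic change of variable, differentiate the calibration identity to obtain the ODE, and invert the strictly decreasing $P^y$ to get the explicit clock $\Theta^\star=Q^y\circ P^{market}$, checking the grounded/increasing properties. Your write-up is slightly more detailed (you explicitly verify that the explicit formula solves the ODE and flag the range condition $P^{market}([0,T])\subseteq P^y([0,\infty))$), but the argument is the same.
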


\begin{proof}See Section  \ref{seq:ProofTh1}.
\end{proof}

Observe that the optimal clock $\Theta^\star$ actually depends from the $y$-model parameters $\Xi$. Just like for the shift, we actually have $\Theta^\star(t)=\Theta^\star(t;\Xi)$. Although other frameworks are possible, we set $\Xi=\Xi^\star$  as in \eqref{eq:CalyNoConstraint}. Similar to the function $\varphi$ in the shift approach, the purpose of the clock $\Theta$ is then to absorb the remaining errors between $P^y(\cdot;\Xi^\star)$ and $P^{market}$. 

\subsection{Time-changed homogeneous affine diffusions}\label{sec:THAD}

As in the shift extension, a time-changed model $x^\theta$ enjoys a similar tractability level to that of the base model $y$. Indeed the $k$-th moment is $m^{x^\theta}(k,t)=\theta(t)^k m^{y}\left(k,\Theta(t)\right)$ and moment generating function is $\psi^{x^\theta}(u,t)=e^{u\theta(t)}\psi^{y}\left(u,\Theta(t)\right)$, whereas those of $X^\theta_t$ coincide with those of $Y_{\Theta(t)}$. Hence, a tractable model $x^\theta$ can be obtained by considering HAJD processes as base model $y$. We illustrate our method by analyzing two calibration problems that can be solved by considering the Vasicek and the JCIR processes.

It is clear from Lemma \ref{lem2} that in the particular case where $y$ is a HAJD, then $x^\theta$ is a scaled version of an inhomogeneous affine jump diffusion (AJD), unless $\theta(t)$ is a positive constant, in which case it remains a HAJD. To see this, suppose that the dynamics of $y$ obey \eqref{eq:HAJD}. From Lemma \ref{lem2}, $y^\theta$ is governed by
\beq
dy^{\theta}_t = \left(a(\Theta(t)) + b(\Theta(t))y^{\theta}_t\right)\theta(t)dt + \sqrt{\left(c(\Theta(t))+d(\Theta(t))y^{\theta}_t\right)\theta(t)}dB_t + dJ^{\theta}_t\;.\label{eq:TCHAJD}
\eeq

Interestingly, $y^\theta$ is still an AJD. %In particular, if $y$ is an HAJD with parameters $(a(t),b(t),c(t),d(t),\alpha,\omega(t))=(\kappa\beta,-\kappa,\eta^2,\delta^2,0,0)$ then, $y^\theta$ is an AJD with parameters $(a(t),b(t),c(t),d(t),0,0)=(\kappa\beta,-\kappa,\eta^2,\delta^2,0,0)\times\theta(t)$. Similarly, if $y$ features a jump process with constant parameters $(\alpha,\omega)$, the corresponding parameters for $y^\theta$ simply become $(\alpha,\omega\theta(t))$. 
In the sequel, we focus on the special case where the base model $y$ is a HAJD, i.e., takes the form \eqref{eq:HAJD} with constant parameters $(a(t),b(t),c(t),d(t),\alpha,\omega(t))=(\kappa\beta,-\kappa,\eta^2,\delta^2,\alpha,\omega)$. To simplify the notation, we specify the model parameters using the vector $\Xi=(\kappa,\beta,\eta,\delta,\alpha,\omega)$.

\subsubsection{Time-changed Vasicek}

Our time change approach can be easily used to solve Problem \ref{prob:prob2} in the most common case where the forward curve $f^{market}$ is arbitrary (monotonic, humped, etc) provided that it is positive. As there is no constraint on the range of the process $x^\theta$, let us postulate Vasicek dynamics for the base process with parameters $\Xi=(\kappa,\beta,\eta,0,0,0)$ :
\begin{equation}
dy_t=\kappa(\beta-y_t)dt+\eta dW_t,\quad y_0\in\mathbb{R}\;.\nonumber%\label{eq:dy-VAS}
\end{equation}

The forward curve associated to this model is given by $f^{y}(t)=f^{\mathrm{VAS}}(t) :=f_0^{\mathrm{VAS}}(t)$ in \eqref{eq:fvast}: %proven to be~\cite{Brigo06}
\beq
f^{\mathrm{VAS}}(t)= (1-\e^{-\kappa t})\frac{\kappa^2\beta-\eta^2/2}{\kappa^2}+\frac{\eta^2}{2\kappa^2}\e^{-\kappa t}(1-\e^{-\kappa t})+y_0\e^{-\kappa t}\;.\label{eq:fvas}
\eeq

It can thus be used to select an appropriate Vasicek model. The next corollary provides guidelines to generate decreasing discount curve, associated with the most common case of positive instantaneous forwards.

\begin{corollary}\label{cor:cor1} Let $P^{market}$ be a strictly decreasing market curve. Then, for every Vasicek model with parameters satisfying $y_0\geq 0$ and $2\kappa^2\beta>\eta^2$ %\textcolor{red}{(c'est pas plutot $2\kappa^2\beta>\eta^2$ ?tu as calculé la vraie racine ou tu veux juste que le premier terme soit positif ?)}
, there exists a clock $\Theta^\star$ such that $P^{x^\theta}\equiv P^{market}$.
\end{corollary}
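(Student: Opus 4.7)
The plan is to reduce the statement to a direct application of Theorem \ref{th:th1}. That theorem guarantees the existence of a clock $\Theta^\star$ producing a perfect fit as soon as both $P^{market}$ and $P^{y}$ are strictly decreasing discount curves. The hypothesis gives that $P^{market}$ is strictly decreasing, so all that remains is to verify that the Vasicek discount curve $P^{\mathrm{VAS}}$ generated by a parameter set with $y_0\geq 0$ and $2\kappa^{2}\beta>\eta^{2}$ is itself strictly decreasing.

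By Lemma \ref{lem:FwdCurve}, a differentiable positive function is a strictly decreasing discount curve if and only if its instantaneous forward rate is strictly positive. Hence the whole question collapses to the sign of $f^{\mathrm{VAS}}(t)$ in \eqref{eq:fvas}. I would decompose that expression into its three natural pieces:
\begin{equation*}
f^{\mathrm{VAS}}(t)=\underbrace{(1-e^{-\kappa t})\,\frac{\kappa^{2}\beta-\eta^{2}/2}{\kappa^{2}}}_{(\mathrm{I})}+\underbrace{\frac{\eta^{2}}{2\kappa^{2}}e^{-\kappa t}(1-e^{-\kappa t})}_{(\mathrm{II})}+\underbrace{y_{0}e^{-\kappa t}}_{(\mathrm{III})}\;.
\end{equation*}
Since $\kappa>0$ by the standing convention on Vasicek (and $1-e^{-\kappa t}>0$ for $t>0$), the Feller-like hypothesis $2\kappa^{2}\beta>\eta^{2}$ makes the coefficient $\kappa^{2}\beta-\eta^{2}/2$ strictly positive, so term (I) is strictly positive on $(0,\infty)$. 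Term (II) is manifestly non-negative, and term (III) is non-negative thanks to $y_{0}\geq 0$. Consequently $f^{\mathrm{VAS}}(t)>0$ for every $t>0$, which together with $P^{\mathrm{VAS}}(0)=1$ and the differentiability of $P^{\mathrm{VAS}}$ shows that $P^{\mathrm{VAS}}$ is a strictly decreasing discount curve in the sense of Definition \ref{def:DiscCurve}.

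Having verified both monotonicity conditions, I would close the proof by invoking \eqref{eq:ThetaStar}: the optimal clock is $\Theta^\star(t)=Q^{\mathrm{VAS}}(P^{market}(t))$, where $Q^{\mathrm{VAS}}$ is the well-defined inverse of the strictly decreasing map $P^{\mathrm{VAS}}$. Theorem \ref{th:th1} then certifies that $\Theta^\star$ is indeed a clock and that $P^{x^{\theta}}\equiv P^{market}$.

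I do not foresee any genuine obstacle: the result is essentially a sufficient-condition check on the parameters of the base model, and the algebraic bookkeeping on $f^{\mathrm{VAS}}$ is elementary. The only mild subtlety is that the hypothesis $2\kappa^{2}\beta>\eta^{2}$ is strictly stronger than what would be needed to keep $P^{\mathrm{VAS}}$ non-increasing; it ensures strict positivity of term (I) even when $y_{0}=0$ and when $t$ is large, which is what is required to rule out a zero of $f^{\mathrm{VAS}}$ and hence to meet the assumptions of Theorem \ref{th:th1}.
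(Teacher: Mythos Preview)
Your proposal is correct and follows exactly the same approach as the paper: verify that the Vasicek forward curve \eqref{eq:fvas} is strictly positive under the stated parameter constraints, conclude that $P^{\mathrm{VAS}}$ is strictly decreasing, and then invoke Theorem~\ref{th:th1}. The paper's proof is terser---it simply asserts that the conditions $y_0\geq 0$ and $2\kappa^2\beta>\eta^2$ make $f^{\mathrm{VAS}}$ strictly positive---whereas you spell out the three-term decomposition explicitly, but the logic is identical.
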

\begin{proof} Because $y$ is a Vasicek process, $P^y$ is a discount curve. Moreover, the conditions $y_0\geq 0$ and $2\kappa^2\beta>\eta^2$ guarantee that the forward curve \eqref{eq:fvas} is strictly positive, hence $P^y$ is strictly decreasing. From Theorem \ref{th:th1}, the clock $\Theta^\star$ exists and is given by \eqref{eq:ode} with $f^y$ given in \eqref{eq:fvas}.
\end{proof}

Notice that the dynamics of the time-changed Vasicek model $x^{\theta}_t$ are given by \eqref{eq:dynTC} with
\begin{equation}
%\label{eq1i}
dy^{\theta}_t = \kappa(\beta - y^{\theta}_t)\theta(t)dt + \eta\sqrt{\theta(t)}dB_t,\quad y^{\theta}_0=y_0\;,\nonumber%
\end{equation}
showing that $y^\theta$ remains a Gaussian process. Fitting perfectly a strictly decreasing discount curve (without further constraints on the process) is a special case of Problem \ref{prob:prob2}, that can also be solved using the shift approach \eqref{eq:xyshift} by taking $x\leftarrow x^\varphi$ where $y$ is a Vasicek with arbitrary parameters $\Xi$ and $\varphi(t)\leftarrow\varphi^\star(t;\Xi)=f^{market}(t)-f^{\mathrm{VAS}}(t)$. The main interest of the time-changed approach is actually when considering Problem \ref{prob:prob1}. 

\subsubsection{Time-changed (J)CIR}\label{sec:tc-cir}

The following result is the second main contribution of the paper. It shows that the time change approach $x\leftarrow x^\theta$ provides a solution to Problem \ref{prob:prob1}.

\begin{corollary}\label{cor:cor2} Let $y$ be an almost-surely positive HAJD with parameters $\Xi$. Then, the model $x^\theta$ defined in \eqref{eq:xytc} with $\Theta\leftarrow\Theta^\star(t;\Xi)$ solves Problem \ref{prob:prob1}.
\end{corollary}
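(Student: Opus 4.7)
The plan is to verify, in turn, the three conditions that Problem~\ref{prob:prob1} places on $x^\theta$: almost-sure non-negativity together with $\Q(x^\theta_t>0)>0$, the perfect fit $P^{x^\theta}\equiv P^{market}$, and tractability. The bulk of the work is concentrated in the perfect-fit step, which I would reduce to an application of Theorem~\ref{th:th1}; the other two conditions are then essentially bookkeeping.

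For the perfect fit, I would invoke Theorem~\ref{th:th1} with base model $y$ and given market curve $P^{market}$. The theorem requires both curves to be strictly decreasing discount curves. The assumption on $P^{market}$ is given in Problem~\ref{prob:prob1}. For $P^y$, because $y$ is a HAJD the affine representation \eqref{eq:affine} applies, so $P^y$ is differentiable in $t$, strictly positive, and satisfies $P^y(0)=1$; hence it is a discount curve in the sense of Definition~\ref{def:DiscCurve}. To argue strict monotonicity, I would observe that almost-sure positivity of $y$ (meaning $\Q(y_t\geq 0)=1$ together with $\Q(y_t>0)>0$) ensures that the pathwise integral $Y_t=\int_0^t y_u\,du$ is strictly increasing in $t$ almost surely, so $e^{-Y_t}$ is strictly decreasing almost surely; taking expectations then preserves strict monotonicity. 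With these hypotheses verified, Theorem~\ref{th:th1} provides the clock $\Theta^\star(t;\Xi)$, either through the ODE \eqref{eq:ode} or, more conveniently, through the explicit inversion formula \eqref{eq:ThetaStar}, and delivers $P^{x^\theta}\equiv P^{market}$.

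Positivity is then immediate: since $\theta^\star(t)>0$ for all $t$ (a clock by Theorem~\ref{th:th1}) and $y_{\Theta^\star(t)}\geq 0$ almost surely by hypothesis, the product $x^\theta_t=\theta^\star(t)\,y_{\Theta^\star(t)}$ is almost surely non-negative, and the condition $\Q(x^\theta_t>0)>0$ is transferred from $y$ evaluated at the deterministic time $\Theta^\star(t)$. Tractability is inherited from $y$ through $\psi^{x^\theta}(u,t)=e^{u\theta^\star(t)}\,\psi^{y}(u,\Theta^\star(t))$ and the analogous relation for moments, together with the fact that $\Theta^\star$ is obtained at negligible cost from a one-dimensional inversion of a closed-form expression.

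The only non-routine point — and the only place where the word ``almost-surely \emph{positive}'' is used rather than merely ``non-negative'' — is the strict monotonicity of $P^y$, which rules out a degenerate base model and thereby guarantees invertibility of $P^y$ and thus existence of the clock via \eqref{eq:ThetaStar}. Once that point is secured, the corollary reduces to a direct consequence of Theorem~\ref{th:th1} combined with the scaling structure of \eqref{eq:xytc}.
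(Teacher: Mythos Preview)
Your proposal is correct and follows essentially the same route as the paper: establish that $P^y$ is a strictly decreasing discount curve (from the HAJD structure and almost-sure positivity of $y$), then invoke Theorem~\ref{th:th1} to obtain the clock $\Theta^\star$ and the perfect fit. You are simply more explicit than the paper in checking the remaining requirements of Problem~\ref{prob:prob1} (positivity of $x^\theta$ and tractability), which the paper leaves implicit in its three-line proof.
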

\begin{proof} Because $y$ is a HAJD, $P^y$ is a discount curve and is tractable analytically. Moreover, the latter is strictly decreasing since $y$ is almost-surely positive. We conclude the proof by relying on Theorem \ref{th:th1}.
\end{proof}

%As interesting HAJD special cases, let us focus on CIR and JCIR processes for $y$. In the sequel we focus on the JCIR as the CIR is a special case with $J\equiv 0$. %The arrival rate of the jumps is governed by the intensity $\omega$, and the jumps are exponentially distributed with mean $\alpha$. 
%Hence, all results below apply to CIR as well, by taking jump parameters satisfying $\alpha\omega=0$.
Let us now consider the JCIR model, i.e., the HAJD with $\Xi=(\kappa,\beta,0,\delta,\alpha,\omega)$. The CIR is recovered as a special case by choosing $(\alpha,\omega)$ such that $\alpha\omega=0$.\medskip

Then,

\begin{equation}
dy_t=\kappa(\beta-y_t)dt+\delta\sqrt{y_t} dW_t + dJ_t,\quad y_0>0\;,\label{eq:JCIR}
\end{equation}
where $\kappa,\beta,\delta$ are strictly positive constants and $\omega,\alpha$ are non-negative. The optimal clock $\Theta^\star$ leading to the perfect fit to a given strictly decreasing curve $P^{market}$ is given by \eqref{eq:ode}  where the forward curve associated to this model is given by $f^y(t)=f^{\mathrm{JCIR}}(t):=f_0^{\mathrm{JCIR}}(t)$ in   \eqref{eq:fjcirt} : %\cite{Brigo06}
\beq
f^{\mathrm{JCIR}}(t)=\frac{2\kappa\beta(e^{t\gamma}-1)}{2\gamma+(\kappa+\gamma)(e^{t\gamma}-1)} + y_0\frac{4\gamma^2e^{t\gamma}}{[2\gamma+(\kappa+\gamma)(e^{t\gamma}-1)]^2}+\frac{2\omega\alpha(e^{t\gamma} - 1)}{2\gamma + (\kappa+\gamma+2\alpha)(e^{t\gamma} - 1)} \;,\label{eq:fjcir}
%f^y(t)=f^{\mathrm{CIR}}(t) := \frac{2\kappa\beta(e^{t\gamma}-1)}{2\gamma+(\kappa+\gamma)(e^{t\gamma}-1)} + y_0\frac{4\gamma^2e^{t\gamma}}{[2\gamma+(\kappa+\gamma)(e^{t\gamma}-1)]^2}\;,\label{eq:fcir}
\eeq
where $\gamma:=\sqrt{\kappa^2+2\delta^2}$. 
The dynamics of the time-changed process $x^{\theta}_t=\theta(t)y^{\theta}_t$ are given by \eqref{eq:dynTC} with  
\begin{equation}
%\label{eq17}
dy^\theta_t = \kappa(\beta - y^\theta_t)\theta(t)dt + \delta\sqrt{\theta(t)y^\theta_t}dB_t+dJ^\theta_t,\quad y^\theta_0=y_0\;.\nonumber%
\end{equation}
where $B$ is an $\FF^\theta$-Brownian motion and $J^\theta$ is an inhomogeneous compound Poisson process with jump size mean $\alpha$ and time-$t$ instantaneous rate of arrival $\omega\theta(t)$.

The time change technique applied to a JCIR (TC-JCIR) therefore solves Problem \ref{prob:prob1}. In particular, in contrast to the S-JCIR (that focuses on parameters such that $\varphi$ is positive), the positivity constraint on $x^\theta$ is automatically satisfied for every (strictly decreasing) market curve and every $\Xi$ (such that $y$ is not trivially equal to 0). However, we have shown that it is possible to ensure positivity by considering the PS-JCIR, $x^{\varphi,+}$. Working with $\Xi^{\star,+}$ instead of $\Xi^\star$ can make the job, but at the expenses of having a process $x^{\varphi,+}$ that is, to a large extend, deterministic (i.e., $x^{\varphi,+}_t$ varies in a small neighborhood around $f^{market}(t)$). Consequently, TC-JCIR model are expected to feature a higher volatility compared to the corresponding PS-JCIR, at least up to some time horizon. This is summarized in the next theorem, which is the third main result of the paper.

\begin{theorem}\label{th:th2}
Let $P^{market}$ be a strictly decreasing discount curve and $y$ be a JCIR++ process with parameter $\Xi$ such that the perfect fit JCIR++ model
$x^{\varphi^{\star}}_t$ is positive. Then, the ODE \eqref{eq:ode} with $f^y(t)=f^{\mathrm{JCIR}}(t;\Xi)$ given by \eqref{eq:fjcir} admits a solution that satisfies $\Theta^\star(t)=\Theta^\star(t;\Xi)\geq t\;.$ Moreover, the variance of the corresponding perfect fit TC-JCIR model $x^{\theta^\star}_t$ satisfies:
\begin{itemize}
\item[1)]$\mathbb{V}\left[X^{\theta^\star}_t\right]\geq\mathbb{V}\left[X^{\varphi^\star}_t\right]$, $\forall\; t\geq 0$,
\item[2)] $\mathbb{V}\left[x^{\theta^\star}_t\right]\geq\mathbb{V}\left[x^{\varphi^\star}_t\right]$ if one of the following holds:
\begin{itemize}
\item[i)] $y_0=\beta + \frac{\omega\alpha}{\kappa}$,
\item[ii)] $f^{market}$ constant and $y_0\leq\beta + \frac{\omega\alpha}{\kappa}$,
\item[iii)] $y_0>\beta + \frac{\omega\alpha}{\kappa}$ and $t<{\Theta^\star}^{-1}(t_1)$,
\item[iv)] $(\kappa\beta + \omega\alpha)/\gamma < y_0<\beta + \frac{\omega\alpha}{\kappa}$ and $t>{\Theta^\star}^{-1}(t_2)$
\end{itemize}
\end{itemize}
where
\begin{equation*}
t_1:=\frac{1}{\kappa}\ln\left(1+ \frac{y_0+2\omega\alpha^2/\gamma^2}{y_0-\beta-\omega\alpha/\kappa}\right)\quad \text{and}\quad t_2:=\frac{1}{\gamma}\ln \frac{(\gamma-\kappa)(\kappa\beta+y_0\gamma+\omega\alpha)-2\omega\alpha^2}{(\kappa+\gamma)(y_0\gamma-\kappa\beta-\omega\alpha)-2\omega\alpha^2}.
\end{equation*}
\end{theorem}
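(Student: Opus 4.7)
The plan is to reduce both inequalities to properties of the latent processes $y$ and $Y_t:=\int_0^t y_u\,du$, exploiting the identity $X^{\theta^\star}_t = Y_{\Theta^\star(t)}$ that follows from the substitution $s=\Theta^\star(u)$ in $X^{\theta^\star}_t=\int_0^t\theta^\star(u)y_{\Theta^\star(u)}\,du$, together with the fact that $X^{\varphi^\star}_t - Y_t=\int_0^t\varphi^\star(u)du$ is deterministic. This gives the clean identities $\mathbb{V}[X^{\theta^\star}_t]=\mathbb{V}[Y_{\Theta^\star(t)}]$ and $\mathbb{V}[X^{\varphi^\star}_t]=\mathbb{V}[Y_t]$ for item 1), and $\mathbb{V}[x^{\theta^\star}_t]=\theta^\star(t)^2\mathbb{V}[y_{\Theta^\star(t)}]$, $\mathbb{V}[x^{\varphi^\star}_t]=\mathbb{V}[y_t]$ for item 2).

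I would first prove $\Theta^\star(t)\geq t$. Since a JCIR trajectory gets arbitrarily close to $0$, the assumed positivity of $x^{\varphi^\star}=y+\varphi^\star$ forces $\varphi^\star\geq 0$, i.e., by \eqref{eq:shift}, $f^{\mathrm{JCIR}}(t)\leq f^{market}(t)$. Integrating yields $P^{market}(t)\leq P^y(t)$, and as $Q^y=(P^y)^{-1}$ is strictly decreasing, the representation \eqref{eq:ThetaStar} gives $\Theta^\star(t)=Q^y(P^{market}(t))\geq Q^y(P^y(t))=t$. Existence of $\Theta^\star$ is ensured by Theorem~\ref{th:th1}.

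Item 1) then reduces to showing $s\mapsto\mathbb{V}[Y_s]$ is non-decreasing. I would get this either from the closed-form JCIR expression of $\mathbb{V}[Y_t]$ derived from the moment-generating function $\psi^{Y}$ recalled in Section~\ref{sec:AppAffine} by differentiating in $t$, or more conceptually through the decomposition $\mathbb{V}[Y_s]=\mathbb{V}[Y_t]+2\mathrm{Cov}(Y_t,Y_s-Y_t)+\mathbb{V}[Y_s-Y_t]$ for $s>t$: the last term is non-negative, while $\mathbb{E}[Y_s-Y_t\mid\mathcal{F}_t]$ is, by the affine conditional mean, a strictly increasing linear function of $y_t$, which is itself positively associated with $Y_t$.

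Item 2) is the real obstacle. Writing the target inequality as $\theta^\star(t)^2\mathbb{V}[y_{\Theta^\star(t)}]\geq \mathbb{V}[y_t]$ and using the ODE identity $\theta^\star(t)f^{\mathrm{JCIR}}(\Theta^\star(t))=f^{market}(t)$ reduces it to an algebraic inequality in the auxiliary variable $u:=\Theta^\star(t)\geq t$, after substituting the closed forms of $\mathbb{V}[y_\cdot]$ and $f^{\mathrm{JCIR}}(\cdot)$. The JCIR variance is a linear combination of $(1-e^{-\kappa t})^2$, $e^{-\kappa t}(1-e^{-\kappa t})$ and $(1-e^{-2\kappa t})$ whose cross-term coefficient is proportional to $y_0-\beta-\omega\alpha/\kappa$, so $\mathbb{V}[y_t]$ is globally monotone exactly when that coefficient vanishes (case (i)) or has a definite sign combined with an auxiliary constraint. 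Case (i) then closes immediately since $\mathbb{V}[y_t]$ is strictly increasing and $\theta^\star(t)\geq 1$ follows from $f^{market}\geq f^{\mathrm{JCIR}}$ evaluated at the stationary level; case (ii) exploits flatness of $f^{market}$ to make the ratio $\theta^\star$ directly estimable; cases (iii)--(iv) cover the non-monotone regimes by restricting $t$ to intervals on which the residual terms have the right sign. The thresholds $t_1$ and $t_2$ arise from equating the dominant terms in the algebraic inequality, giving the stated transcendental expressions. The main technical difficulty is the bookkeeping in (iii)--(iv): tracking the sign of each residual term after substitution and verifying that the inequality holds precisely on the stated intervals ${\Theta^\star}^{-1}(t_1)$ and ${\Theta^\star}^{-1}(t_2)$.
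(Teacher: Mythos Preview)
Your reduction of the problem to the identities $\mathbb{V}[X^{\theta^\star}_t]=\mathbb{V}[Y_{\Theta^\star(t)}]$, $\mathbb{V}[X^{\varphi^\star}_t]=\mathbb{V}[Y_t]$, $\mathbb{V}[x^{\theta^\star}_t]=\theta^\star(t)^2\,\mathbb{V}[y_{\Theta^\star(t)}]$, $\mathbb{V}[x^{\varphi^\star}_t]=\mathbb{V}[y_t]$ is exactly right, and your argument for $\Theta^\star(t)\geq t$ via $\varphi^\star\geq 0\Rightarrow P^{market}\leq P^y\Rightarrow Q^y(P^{market}(t))\geq t$ matches the paper. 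Item 1) then follows, as you say, from the monotonicity of $t\mapsto v^Y(t)$; the paper proves this by direct differentiation of the closed form (Lemma~\ref{Lem:VarJCIR}), while your covariance decomposition is an interesting alternative, though the positive-association step would still need a line of justification.

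The gap is in item 2). You describe $t_1$ and $t_2$ as arising ``from equating the dominant terms in the algebraic inequality'' obtained by substituting closed forms into $\theta^\star(t)^2 v^y(\Theta^\star(t))\geq v^y(t)$. That is not their role, and pursuing that route would make the bookkeeping unnecessarily opaque. The paper's argument is \emph{modular}: it bounds the two factors separately. First, it shows $\theta^\star(t)\geq 1$ whenever $f^{\mathrm{JCIR}}$ is non-increasing on $[t,\Theta^\star(t)]$ (since then $f^{market}(t)\geq f^{\mathrm{JCIR}}(t)\geq f^{\mathrm{JCIR}}(\Theta^\star(t))$), or whenever $f^{market}$ is constant. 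Second, it shows $v^y(\Theta^\star(t))\geq v^y(t)$ whenever $v^y$ is non-decreasing on $[t,\Theta^\star(t)]$. The thresholds have clean meanings: $t_1$ is the unique critical point of $t\mapsto v^y(t)$ (it exists only when $y_0>\beta+\omega\alpha/\kappa$, see Lemma~\ref{Lem:VarJCIR}), and $t_2$ is the unique critical point of $t\mapsto f^{\mathrm{JCIR}}(t)$ (it exists only when $(\kappa\beta+\omega\alpha)/\gamma<y_0<\beta+\omega\alpha/\kappa$, obtained by computing $\frac{d}{dt}f^{\mathrm{JCIR}}$). Cases (i)--(iv) are then simply the parameter regimes in which both monotonicity conditions hold on the relevant interval. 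In particular, in (i) one has $f^{\mathrm{JCIR}}$ decreasing (hence $\theta^\star\geq 1$) \emph{and} $v^y$ increasing; your phrase ``evaluated at the stationary level'' misses that the first ingredient is the monotonicity of $f^{\mathrm{JCIR}}$, not a pointwise evaluation. Once you recognise $t_1,t_2$ as turning points, cases (iii) and (iv) are immediate rather than the ``main technical difficulty'' you anticipate.
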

\begin{proof} See Section \ref{seq:ProofTh2}. %\ref{seq:ProofTh3}.
\end{proof}

To sum up, the TC-JCIR model (including the TC-CIR) provides an elegant solution to Problem \ref{prob:prob1}: the process $x^{\theta^\star}$ is non-negative (in contrast with the S-JCIR $x^{\varphi^\star}$), is almost as tractable as the simple JCIR diffusion (in contrast with the PS-JCIR $x^{\varphi^{\star,+}}$), provides a perfect fit to every strictly decreasing discount curve (as both JCIR++ models) and features, to some extend, a larger variance (compared to the PS-JCIR $x^{\varphi^{\star,+}}$). In particular, it is observed, empirically, that the variance of the integral of the TC-JCIR remains similar to that of the unconstrained (i.e., flawed, but high-volatility) S-JCIR model $x^{\varphi^{\star}}$. Therefore, when a positivity constraint is required, the TC-JCIR avoids the drawbacks of the JCIR++ models. The only price to pay is that the clock is not available in closed form, but requires a (simple) numerical inversion. The properties of the model, namely the perfect fit and high-variance features, are illustrated in the next section on various applications taken from credit risk modeling.

\section{Application to Credit Risk Modelling}
\label{sec:app}

We consider a reduced-form default model as in Section \ref{sec:defaultable} by using a CIR base model $y$ (i.e., \eqref{eq:JCIR} with $J\equiv 0$). The default intensity $\lambda$ is modelled either as a CIR++ ($\lambda\leftarrow\lambda^{\varphi}_t:=y_t+\varphi(t)$) or using the TC-CIR ($\lambda\leftarrow\lambda_t^{\theta}:=\theta(t)y_{\Theta(t)}$). Observe that depending on the pair ($P^{market},\Xi$), the CIR++ process can feature negative values. This will be the case when taking $\Xi\leftarrow \Xi^\star$ given using the MSE approach \eqref{eq:CalyNoConstraint}, unless there is an explicit constraint as in \eqref{eq:CalyConstraint}, leading to take $\Xi\leftarrow \Xi^{\star,+}$. Bear in mind that when $\lambda$ represents an intensity process, the S-CIR model ($\lambda^\varphi$) is actually flawed as there is a non-zero probability to observe negative intensities, and $P^{\lambda^\varphi}(t)$ cannot be interpreted as a survival probability associated to a Cox model. Yet, we give the results of the model as a benchmark since, as explained in the introduction, it is a very standard approach.\medskip

We compare the CIR++ (S-CIR and PS-CIR) to the TC-CIR on several aspects related to a real case example where the reference entity is Ford Inc. We also discuss the TC-JCIR case when relevant. We first analyze the perfect fit feature of both types of models, as well as the non-negativity property of $\lambda$. We then compare the variance of the integrated processes $\Lambda$. We then analyse their behaviors in two different applications, namely the pricing of various credit default swaptions (a.k.a. CDS options, or CDSO) with Ford as reference entity, or on the credit valuation adjustment (CVA) of prototypical FRA and IRS exposures where Ford is the trade counterparty.\medskip

It it well-admitted that ``pure credit instruments'' like CDS or CDSO are quite insensitive to %the correlation between the Brownian motions driving the intensity on one hand, and the risk-free rate process on the other hand. Actually, 
the stochasticity of the interest rates in realistic conditions. %play a minor role. 
This has been discussed explicitly for the CIR base model in  \cite{Brigo05} and \cite{Brigo06a}. Hence, we consider a deterministic short rate process, which is stressed by the notation $r_u=r(u)$. In this case, one simply gets $P_s^r(t)=D_s(t)=e^{-\int_s^t r(u) du}$.\footnote{Given that the interest rates have little impact on the figure and that our main objective is to discuss the impact of the default model, we considered zero risk-free rate in the numerical applications below.}\medskip

In the sequel, we first illustrate the perfect fit feature of S-CIR, PS-CIR and TC-CIR when the default model is calibrated on the survival probability curve of Ford Inc. We then use the model to price CDSO and compute CVA figures.

\subsection{Perfect fit of CDS term-structure}
\label{sec:example}
 
We consider the CDS term-structure of Ford Inc, and show that considering a set of parameter $\Xi$, there exist $\varphi$ and $\Theta$ that yield a perfect fit. In the sequel, we drop the star superscript on the shift and clock functions. Hence, $\Xi^\star$ corresponds to the CIR parameter optimized without constraint to a given $P^{market}$ curve, and $\varphi$ and $\Theta$ refer to the corresponding optimal shift and clock functions. The corresponding parameters found under a non-negativity constraint are noted $\Xi^{\star+},\varphi^+$ and $\Theta^+$, respectively. \medskip

A credit default swap (CDS) is a financial instrument used by two parties \--- called the protection buyer and the protection seller \--- to transfer to the protection seller the financial loss that the protection buyer would suffer if a particular default event happened to a third party called the \textit{reference entity}. Typically, we set $\tau$ as the default time of the latter. In a default swap contracted at time $t$, started at time $T_a$ with maturity $T_b$, the protection buyer pays a coupon (of spread) $k$ at a set of payment dates $T_a,\ldots,T_b$ as long as the reference entity does not default. The protection seller agrees to make a single payment $LGD$ to the protection buyer if the default occurs between $T_a$ and $T_b$. When applicable, the protection buyer makes a final payment corresponding to the spread accrued since the last payment date before default. For more details about the mechanics of this product, we refer to \cite{Brigo05} and \cite{Brigo10}.\footnote{For more details about the actual market conventions, we refer the interested reader to \cite{IsdaCDS} and \cite{CDSbigbang}.}\medskip 

The CDS term-structure consists of a set of par spreads associated with CDS of various maturities. The time-$t$ par spread $s_t(T_i)$ of a CDS contract of maturity $T_i$ is defined as the contract spread $k$ that sets the value of the CDS contract to 0 at time $t$. The par spreads have been taken from Bloomberg on November 12, 2018 and are shown on the table below.

\begin{table}[H]
    \centering
    \begin{tabular}{lccccc}
    \hline
     Maturity (years)    & 1 & 3 & 5 & 7 & 10\\
     \hline
     Spread (bps)    & 18.3 & 136.6 & 191.9 & 267.6 & 280.6\\
     \hline
    \end{tabular}
    \caption{CDS spread term structure of Ford Inc. on November 12, 2018. Source: Bloomberg.}
    \label{tab:spread}
\end{table}

In this context, the market curve $P^{market}$ to be fitted is the risk-neutral survival probability curve, defined as $G(t):=\Q(\tau>t)$ associated with the default time $\tau$ of a given reference entity (here, Ford Inc.). It can be extracted from CDS quotes by inverting the no-arbitrage pricing formulae of the corresponding financial instruments. In practice, one only has a couple of calibration equations, say $n$, given by the number of market quotes (here, $n=5$). It is therefore not possible to estimate the full (i.e., infinite-dimensional) market curve $G$ without further assumptions. It is common market practice to consider the CDS model from the International Swap and Derivative Association (ISDA) \--- a.k.a the JP Morgan model \--- \cite{IsdaCDS}, that provides a slightly simplified version of the actual no-arbitrage pricing formula applying to CDSs. In this approach, the curve $G$ is parametrized via a positive hazard rate function $h$, playing a similar role as the instantaneous forward rate $f^{market}$,
$$G(t):=e^{-\int_0^t h(s) ds}\;,$$
where $h$ is itself parametrized by $n$ constants $h_1,h_2,\ldots,h_n$ bootstrapped from the spreads $s_1,s_2,\ldots,s_n$ associated with the maturities $T_1,T_2,\ldots,T_n$. Let us focus on the horizon $T=T_n$. It is market practice to assume that $h$ is piecewise constant between the maturities, i.e., to postulate the parametric form:
$$h(t)=\sum_{i=1}^{n}\indic_{\{T_{i-1}\leq t<T_i\}}h_{i-1}\;,$$
where $T_0:=0$, $h_0:=\frac{s_1}{1-R}$ with $LGD:=1-R$, $R=40\%$ the assumed recovery rate of the firm and $h_i$'s are positive constants. Even if less standard, another specifications like, e.g., a piecewise linear parametrization could be preferred:
$$h(t)=\sum_{i=1}^{n}\indic_{\{T_{i-1}\leq t<T_i\}}\left[\frac{h_i-h_{i-1}}{T_i-T_{i-1}}(t-T_{i-1})+h_{i-1}\right]\;.$$
These two different specifications of the hazard rate function are considered on panels (a) of Figure \ref{fig:Ford1} and \ref{fig:Ford2}, respectively. These frameworks yield similar (yet, slightly different) market curves $G(t)$ (green curves on panels (d)). For each of them, we start by computing the ``best'' base CIR model $y$. In line with market practice, we take $\Xi\leftarrow\Xi^\star$ using \eqref{eq:Xistar} with $P^{model}\leftarrow P^y$ considering \eqref{eq:MSE} as error function and $\cT$ the set of available liquid CDS maturities  available. In each case, we consider the two adjusted intensity models associated to the optimal shift ($\varphi$, given by \eqref{eq:CalyNoConstraint}) or optimal clock ($\Theta$, given by \eqref{eq:ThetaStar}). The latter are shown on panels (b) and (c), respectively. The model curves $P^{\lambda^{\varphi}}$ (S-CIR) and $P^{\lambda^{\theta}}$ (TC-CIR) are shown in magenta on panels (d); they agree with each other, and collapse to $G(t)$ due to the perfect fit. Notice that the parametrization of the hazard rate function has little importance: the survival probability curves $G$, $P^y$ and $P^\lambda$ are very similar in either cases. Similarly, the clock functions $\Theta$ look very similar in both panels (c) of Fig. \ref{fig:Ford1} and \ref{fig:Ford2}. %However, it can be seen by comparing panels (b) that the shift functions $\varphi$ are drastically different depending on the assumption regarding the parametric form of $h$. 

%% FIGURE 2
\ifdefined \InclFig
\begin{figure}[H]
\centering
\subfigure[Piecewise constant hazard rate ($h$)]{\includegraphics[width=0.45\columnwidth]{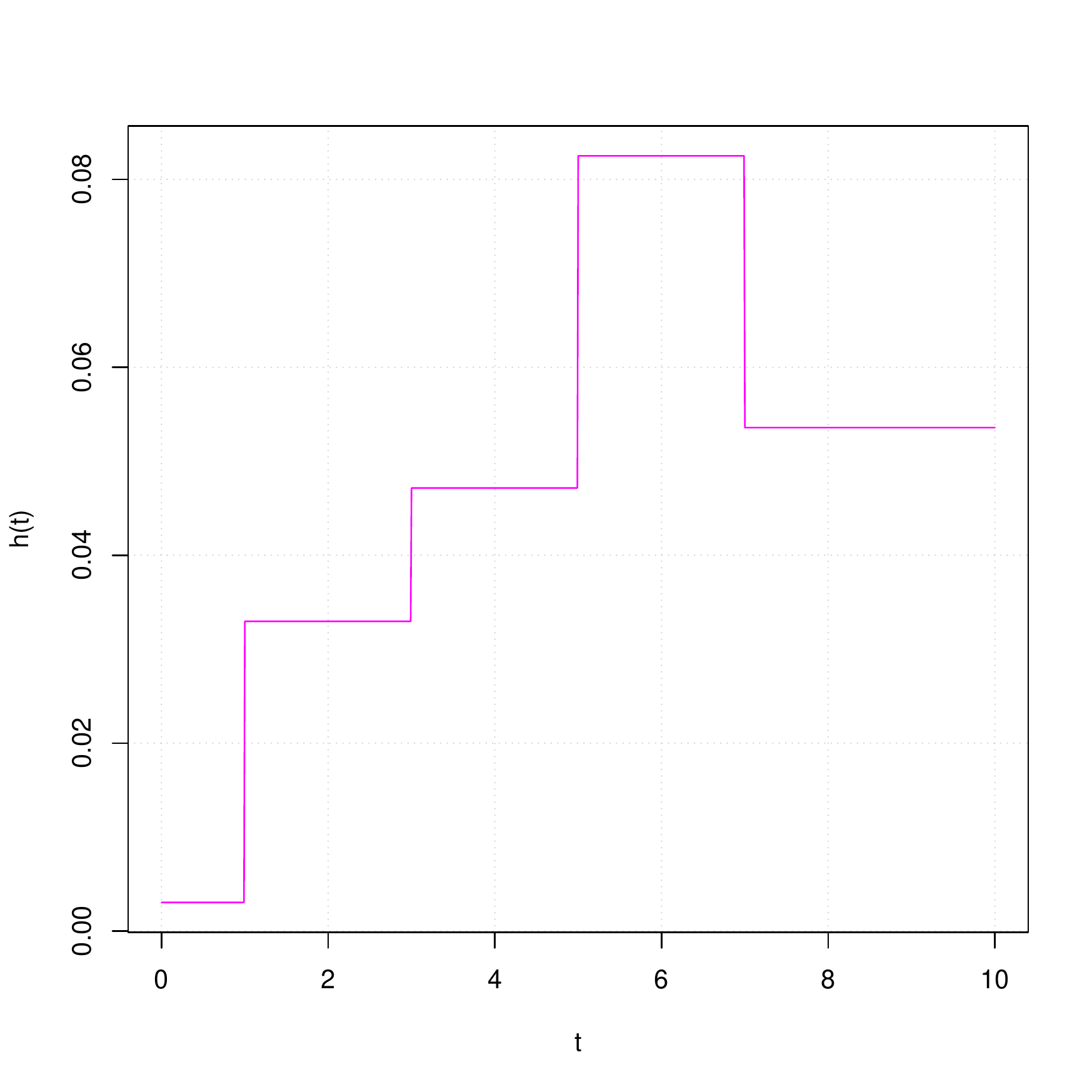}}\hspace{0.2cm}
\subfigure[Optimal shift ($\varphi=\varphi^\star$)]{\includegraphics[width=0.45\columnwidth]{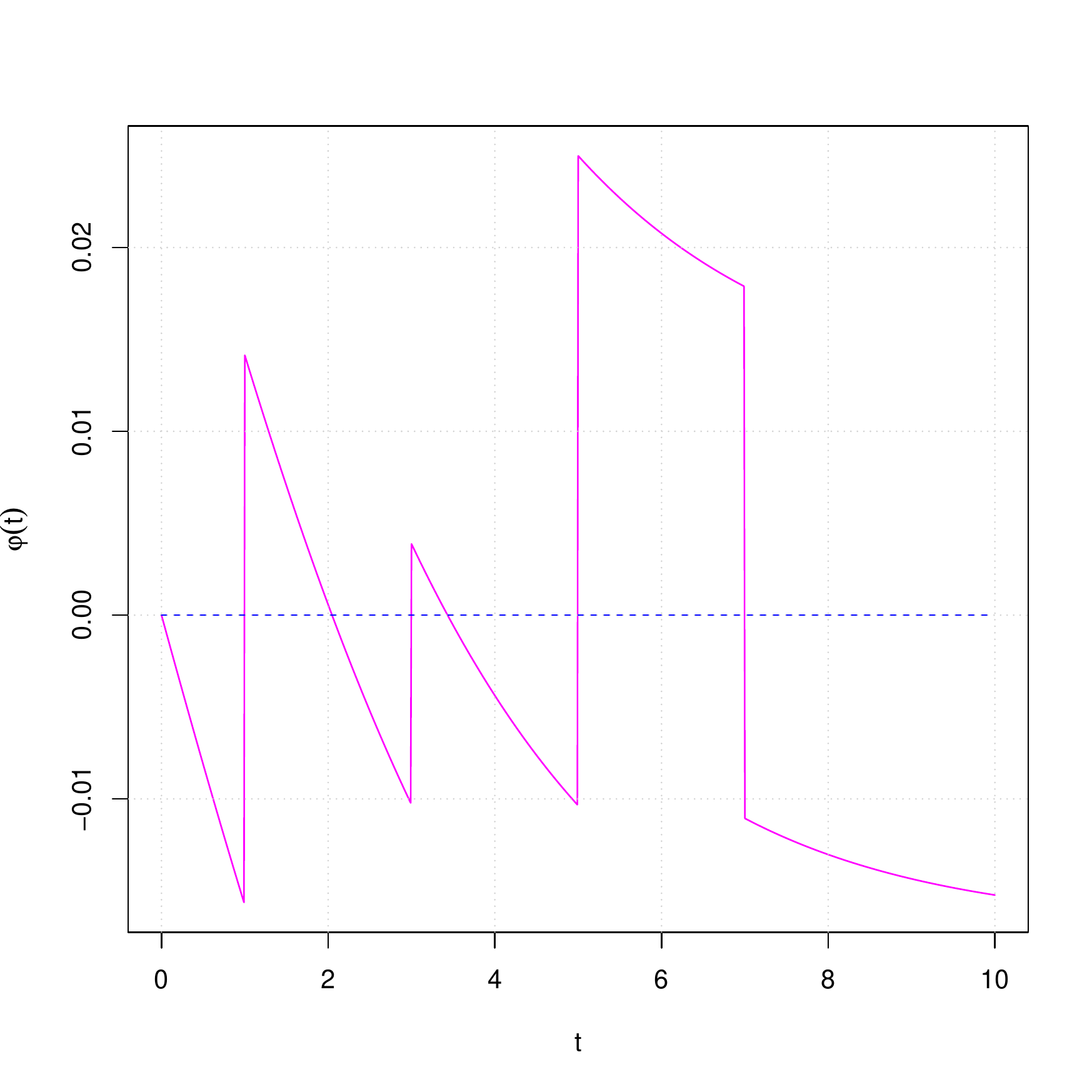}}\hspace{0.2cm}
\subfigure[Optimal clock ($\Theta=\Theta^\star$)]{\includegraphics[width=0.45\columnwidth]{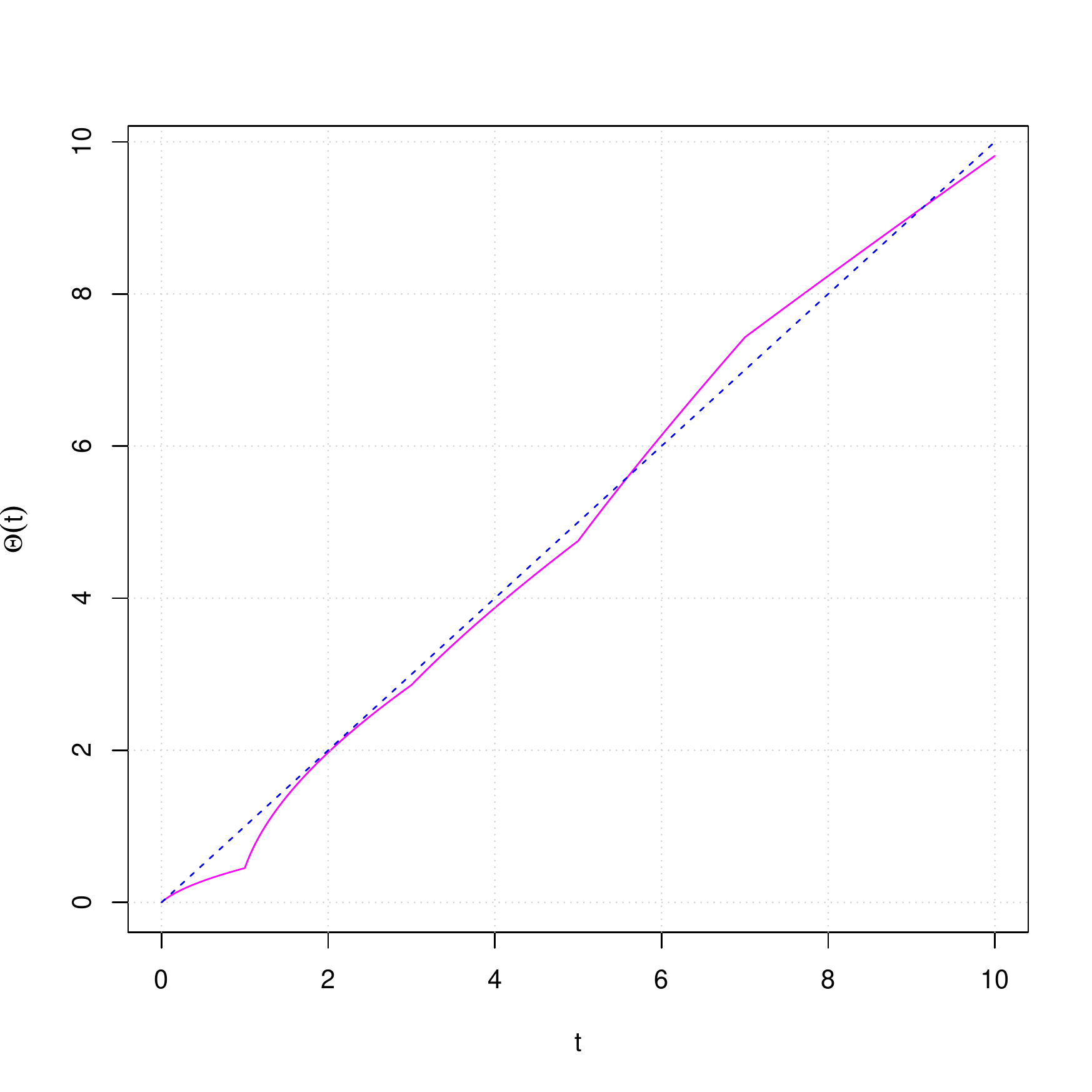}}\hspace{0.2cm}
\subfigure[Survival probability ($P^y,G=P^{\lambda^{\varphi}}=P^{\lambda^{\theta}}$)]{\includegraphics[width=0.45\columnwidth]{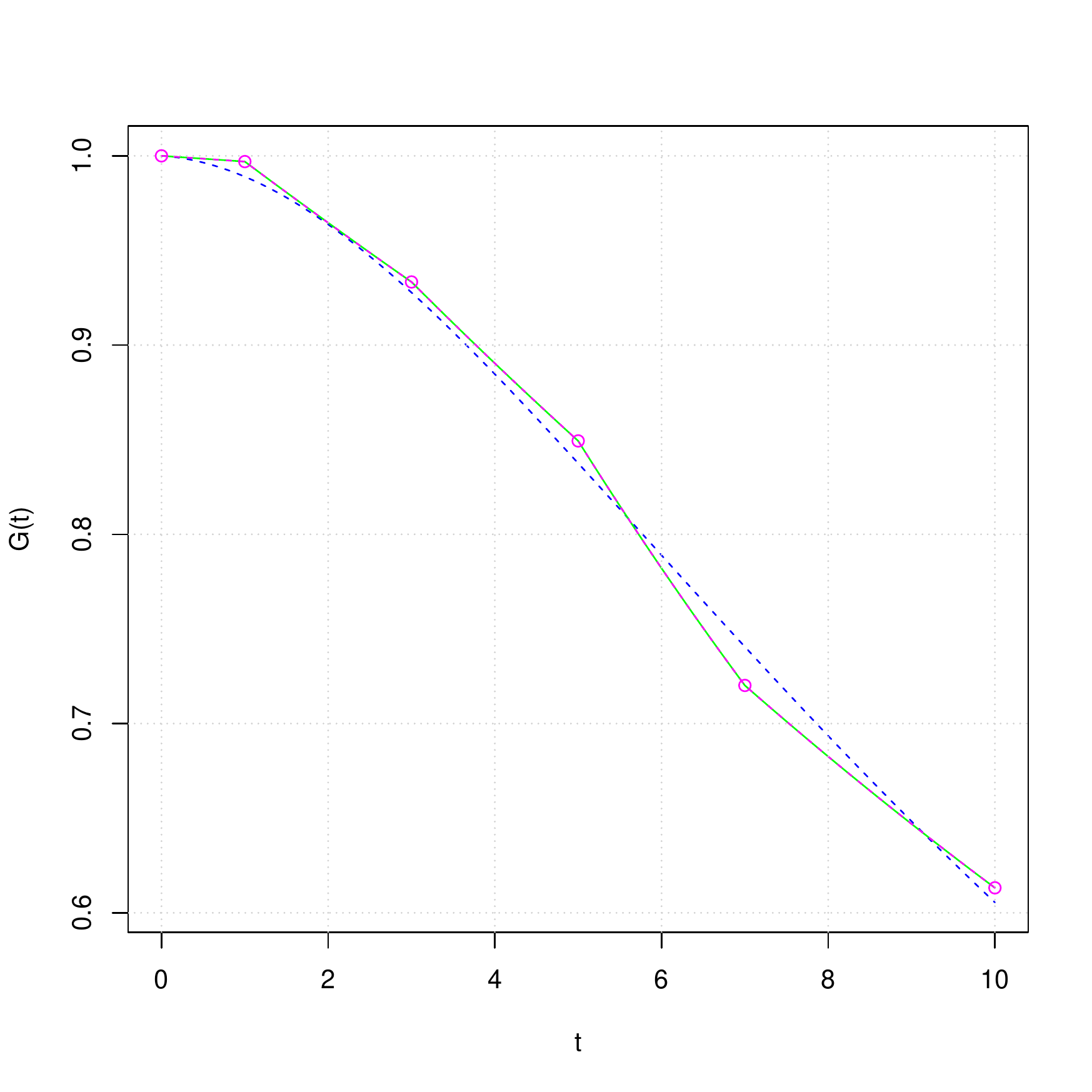}}
\caption{Fitting Ford Inc. CDS term-structure with adjusted CIR models. The survival probability curve $G(t)$ is parametrized with a piecewise constant hazard rate function $h(t)$ extracted from market prices taken from Bloomberg on November 12, 2018, panel (a). The base model $y$ is a CIR with parameters $\Xi=\Xi^\star$ where $\Xi^\star=(\kappa,\beta,\eta,\delta,\alpha,\omega) = (0.0555, 0.3018,0, 0.2939,0,0)$ is obtained from \eqref{eq:CalyNoConstraint} and $y_0=h_0$. The shift function $\varphi(t)\leftarrow\varphi^\star(t,\Xi^\star)$ is shown in panel (b). Panel (c) gives the clock $\Theta(t)\leftarrow\Theta^\star(t;\Xi^\star)$. Eventually, panel (d) yields the survival probability curves given by the market ($G(t)$, green), or associated to $\Q(\tau(\lambda)>t)$ for various intensity models $\lambda$ : the best base model $\lambda\leftarrow y$ (leading to $\Q(\tau(y)>t)=P^y(t,\Xi^\star)$, dashed blue), $\lambda\leftarrow \lambda^{\varphi}$ (S-CIR) and $\lambda\leftarrow \lambda^{\theta}$ (TC-CIR model). By construction of $\varphi$ and $\Theta$, the last two curves coincide (magenta) and agree with $G(t)$.}\label{fig:Ford1}
\end{figure}
\else
\begin{center}
[Figure 2 about here]
\end{center}
\fi 

%% FIGURE 3
\ifdefined \InclFig
\begin{figure}[H]
\centering
\subfigure[Piecewise linear hazard rate ($h$)]{\includegraphics[width=0.45\columnwidth]{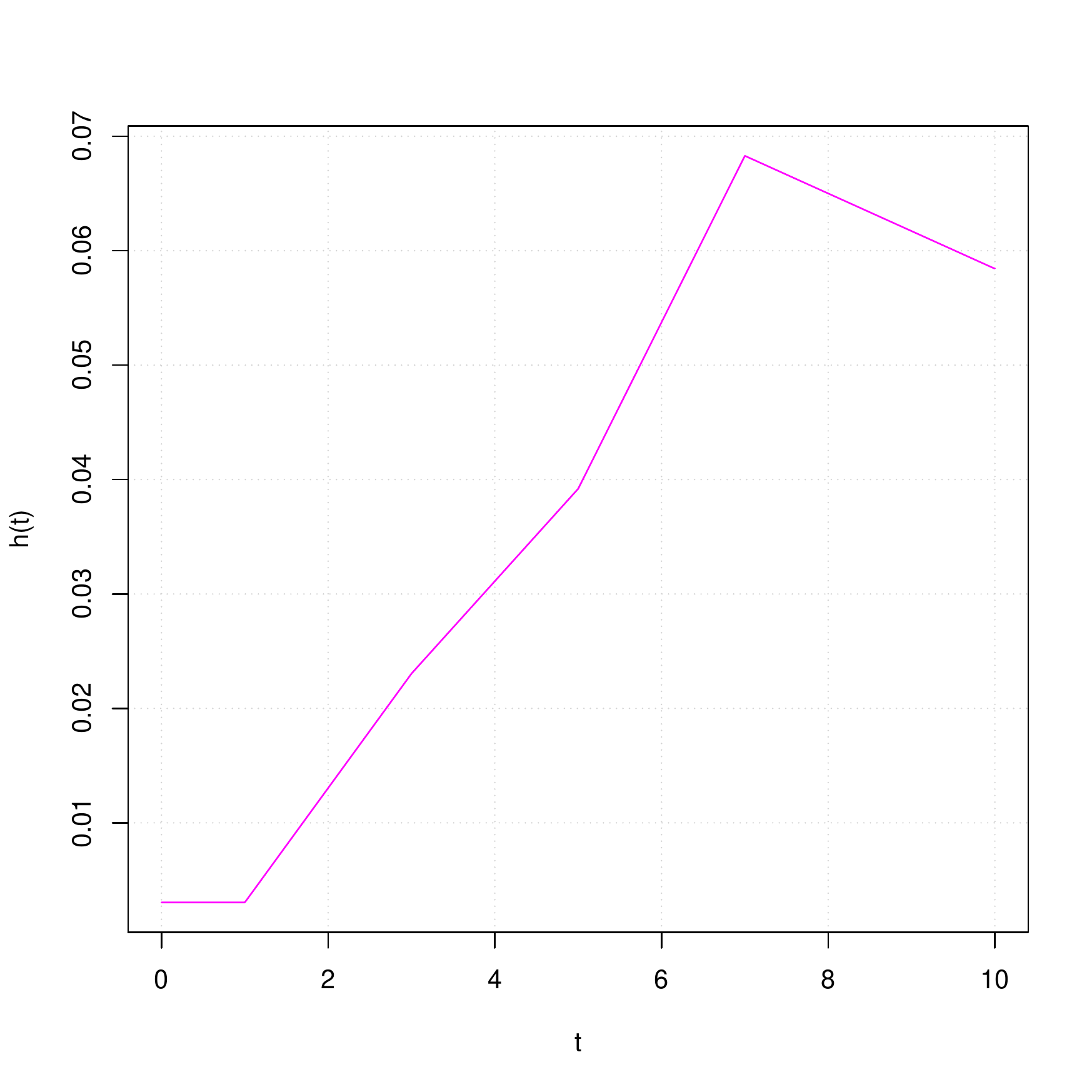}}
\subfigure[Optimal shift ($\varphi=\varphi^\star$)]{\includegraphics[width=0.45\columnwidth]{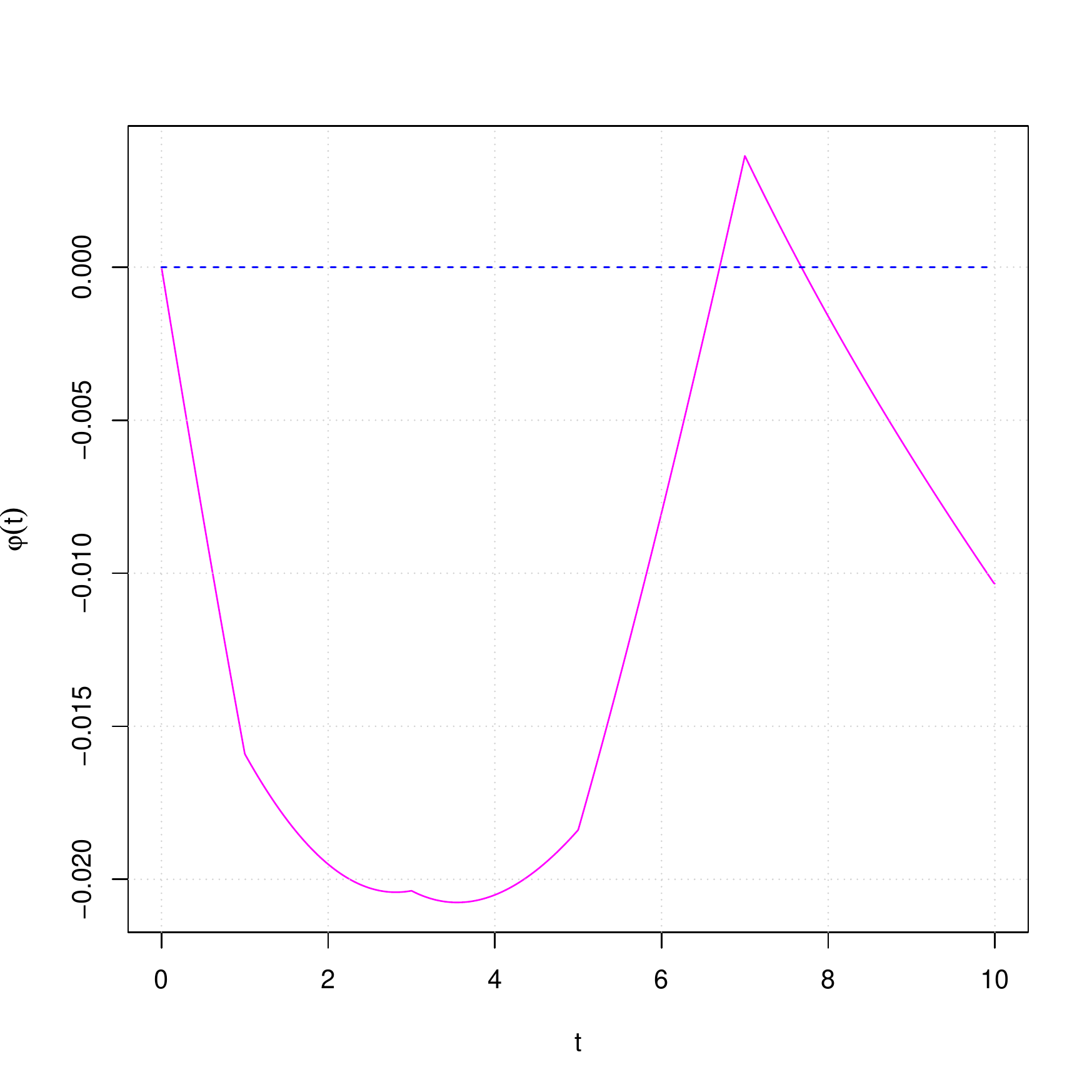}}\hspace{0.2cm}
\subfigure[Optimal clock ($\Theta=\Theta^\star$)]{\includegraphics[width=0.45\columnwidth]{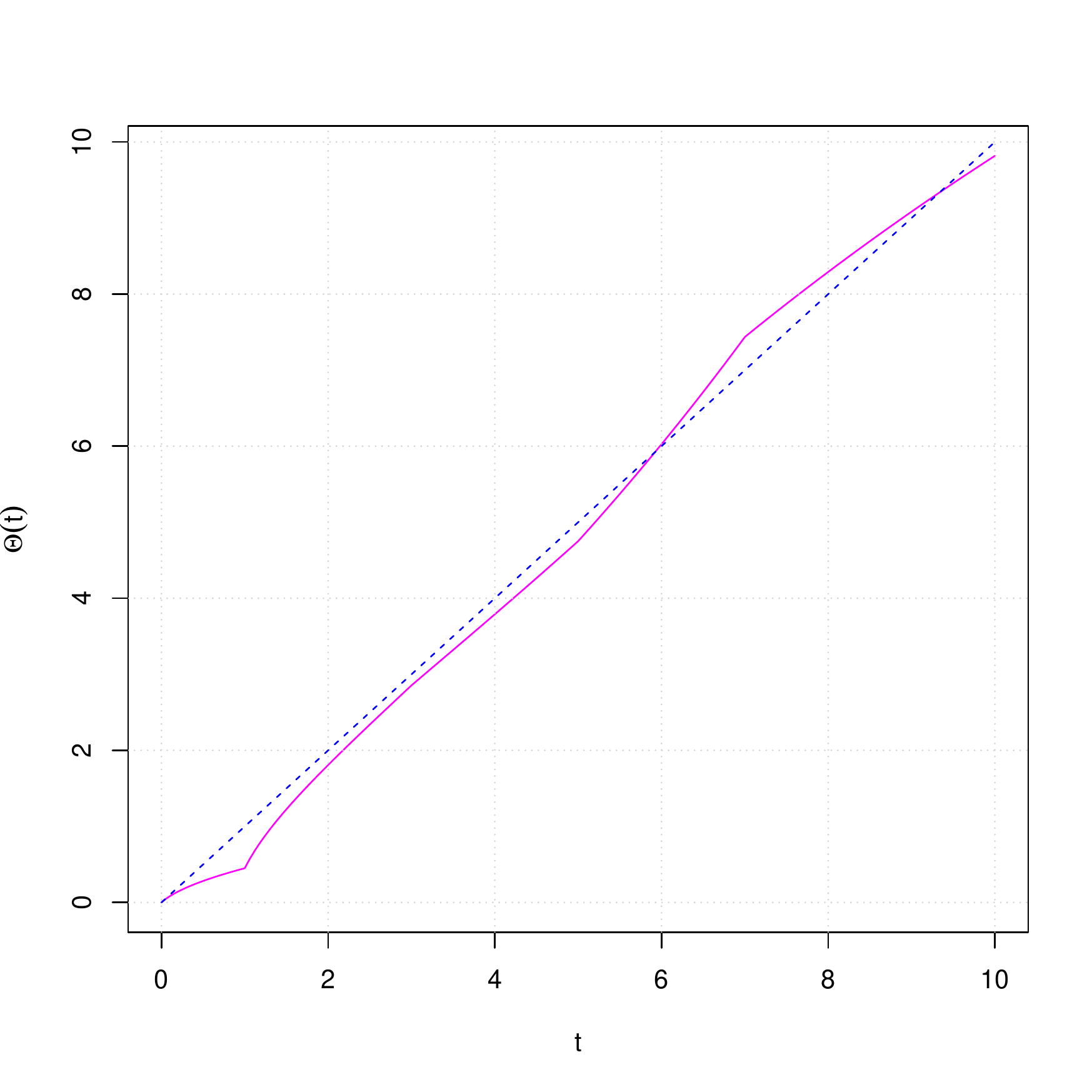}}\hspace{0.2cm}
\subfigure[Survival probability ( $P^y,G=P^{\lambda^{\varphi}}=P^{\lambda^{\theta}}$)]{\includegraphics[width=0.45\columnwidth]{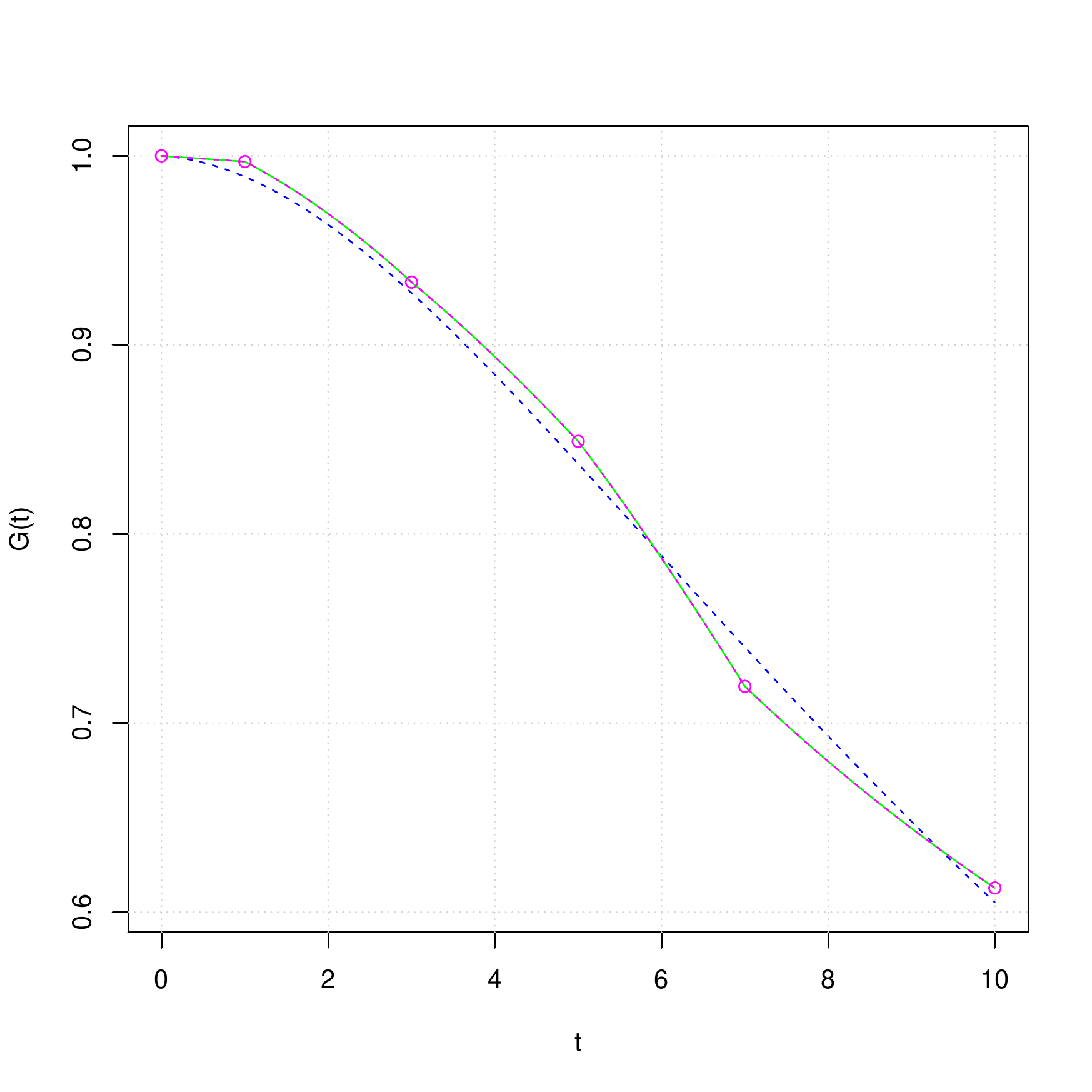}}
\caption{Fitting Ford Inc. CDS term-structure with adjusted CIR. The survival probability curve $G(t)$ is parametrized with a piecewise linear hazard rate function $h(t)$ extracted from market prices taken from Bloomberg on Novermber 12, 2018, panel (a). The base model $y$ is a CIR with parameters $\Xi=\Xi^\star$ where $\Xi^\star=(\kappa,\beta,\eta,\delta,\alpha,\omega) = (0.0620, 0.2729, 0,0.2926,0,0)$ is obtained from eq. \eqref{eq:CalyNoConstraint} and $y_0=h_0$. The shift function $\varphi(t)\leftarrow\varphi^\star(t,\Xi^\star)$ is shown in panel (b). Panel (c) gives the clock $\Theta(t)\leftarrow\Theta^\star(t;\Xi^\star)$. Eventually, panel (d) yields the survival probability curves given by the market ($G(t)$, green), or associated to $\Q(\tau(\lambda)>t)$ for various intensity models $\lambda$ : the best base model $\lambda\leftarrow y$ (leading to $\Q(\tau(y)>t)=P^y(t,\Xi^\star)$, dashed blue), $\lambda\leftarrow \lambda^{\varphi}$ (S-CIR) and $\lambda\leftarrow \lambda^{\theta}$ (TC-CIR model). By construction of $\varphi$ and $\Theta$, the last two curves coincide (magenta) and agree with $G(t)$.}\label{fig:Ford2}
\end{figure}
\else
\begin{center}
[Figure 3 about here]
\end{center}
\fi 

The parameters used in the numerical examples in the rest of the paper are given in Table \ref{tab:parameters}. 

\begin{table}[H]
    \centering
    \begin{tabular}{|l|cccc|}
    \hline
      $\Xi$& $\kappa$ & $\beta$ & $\delta$ & $y_0$ \\
     \hline
     $\Xi^\star$ & 0.0555 & 0.3018 & 0.2939 & $h_0$\\
     \hline
     $\Xi^{\star,+}$ & 0.2118 & 0.0030 & 0.0006 & $h_0$\\
     \hline
     $\Xi_0^\star$ & 0.0624 & 0.2975 & 0.3343 & 0.0000 \\
     \hline
     $\Xi_0^{\star,+}$ & $3.8252.10^{-01}$ & $9.6881.10^{-03}$ & $1.5195.10^{-01}$ & $3.2093.10^{-10}$ \\
     \hline
    \end{tabular}
    \caption{Calibration parameters using Ford piecewise constant hazard rate. Parameters $\Xi^\star$ and $\Xi^{\star,+}$ correspond to the parameters of the CIR model $y$ with and without positivity constraint, where $y_0$ is set exogenously to the first level of the piecewise hazard rate function, $h_0=0.0030$. The other parameters, $\Xi^\star_0$ and $\Xi_0^{\star,+}$, correspond to the similar cases but where $y_0$ is a parameter that enters the optimization procedure. In all cases, we have taken $\alpha=\omega=0$.}
    %}
    \label{tab:parameters}
\end{table}

Notice that in both Figure \ref{fig:Ford1} and \ref{fig:Ford2}, the shift function $\varphi$ can take negative values. This means that the shift approach, S-CIR, yields negative default intensities $\lambda^{\varphi}$ and, calibrated that way, is flawed. In particular, we cannot interpret $\lambda^\varphi$ as a default intensity associated to a Cox process.  This contrasts with the TC-CIR approach since $\lambda^{\theta}$ is a positive process if so is  $y$. To fix this issue in a CIR++ framework, one needs to rely on PS-CIR.  We note the corresponding processes $y^+$ and $\lambda^{\varphi,+}$. As illustrated on Figure \ref{fig:FordConstr} with our Ford example, this procedure is very restrictive: it leads to a curve $P^y$ that is decreasing at a very low rate. In particular, the shape of $P^{\lambda^{\varphi,+}}$ essentially results from the shift, not from the base model $y$. This is problematic: it basically amounts to say that $h\approx \varphi$, i.e., that the PS-CIR process $\lambda^{\varphi,+}$ is essentially deterministic. This will put strong limitations on the resulting default model, and will be further discussed in the remaining subsections.

%% FIGURE 4
\ifdefined \InclFig
\begin{figure}[H]
\centering
\subfigure[Survival probability ($P^{y^+},G=P^{\lambda^{\varphi,+}}$)]{\includegraphics[width=0.45\columnwidth]{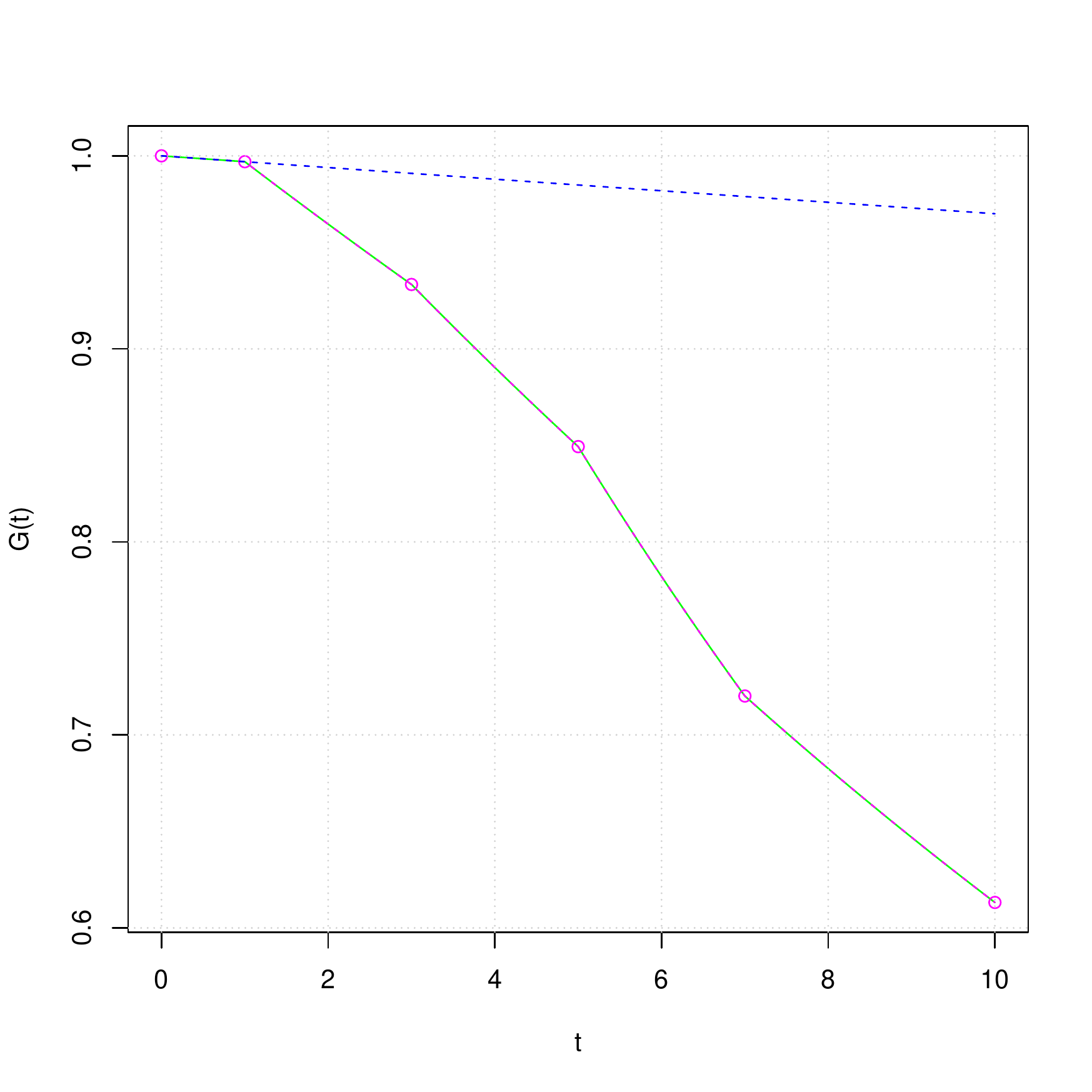}}\hspace{0.2cm}
\subfigure[Survival probability ($P^{y^+},G=P^{\lambda^{\varphi,+}}$)]{\includegraphics[width=0.45\columnwidth]{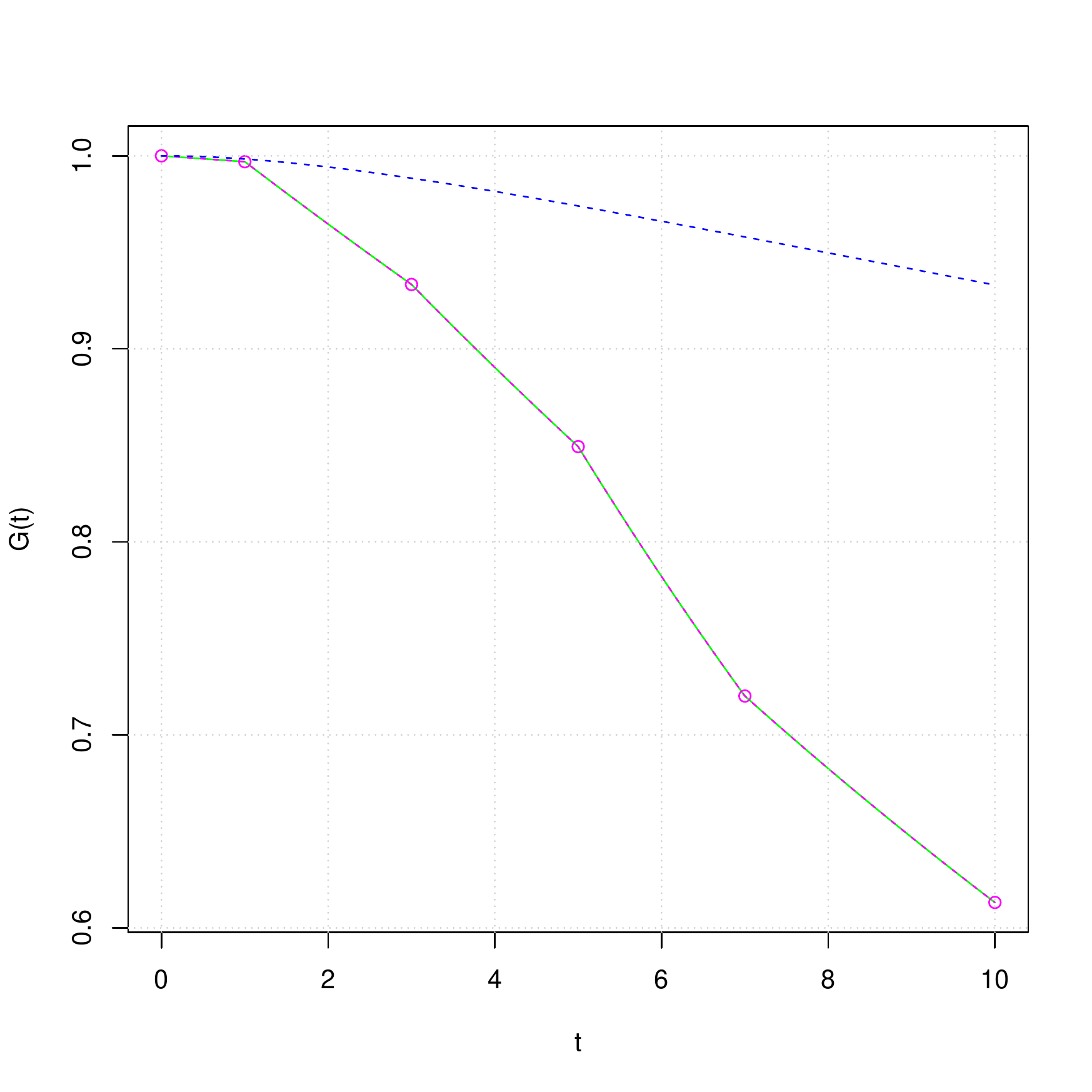}}\hspace{0.2cm}
\subfigure[Shift function ($\varphi=\varphi^{\star,+}$)]{\includegraphics[width=0.45\columnwidth]{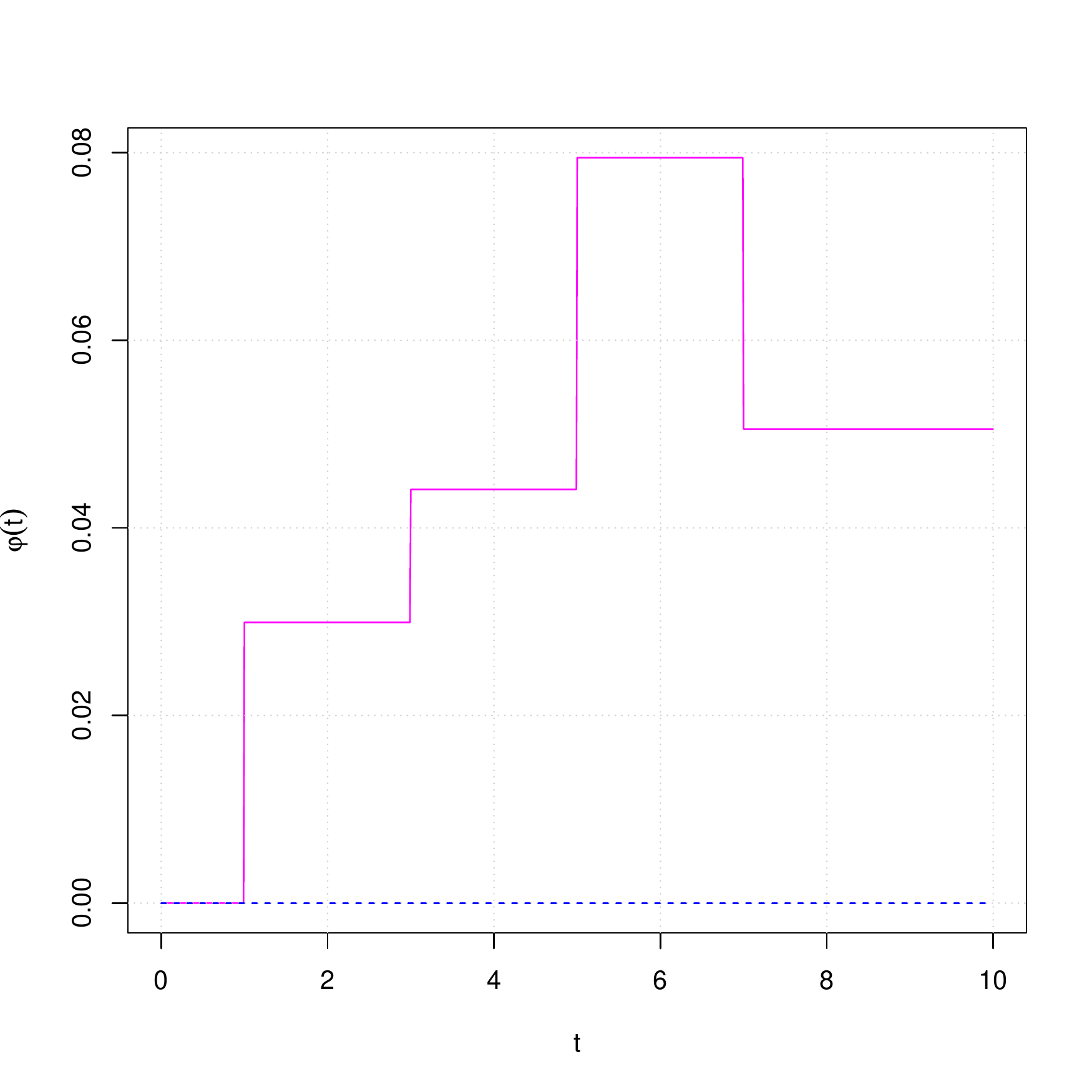}}\hspace{0.2cm}
\subfigure[Shift function ($\varphi=\varphi^{\star,+}$)]{\includegraphics[width=0.45\columnwidth]{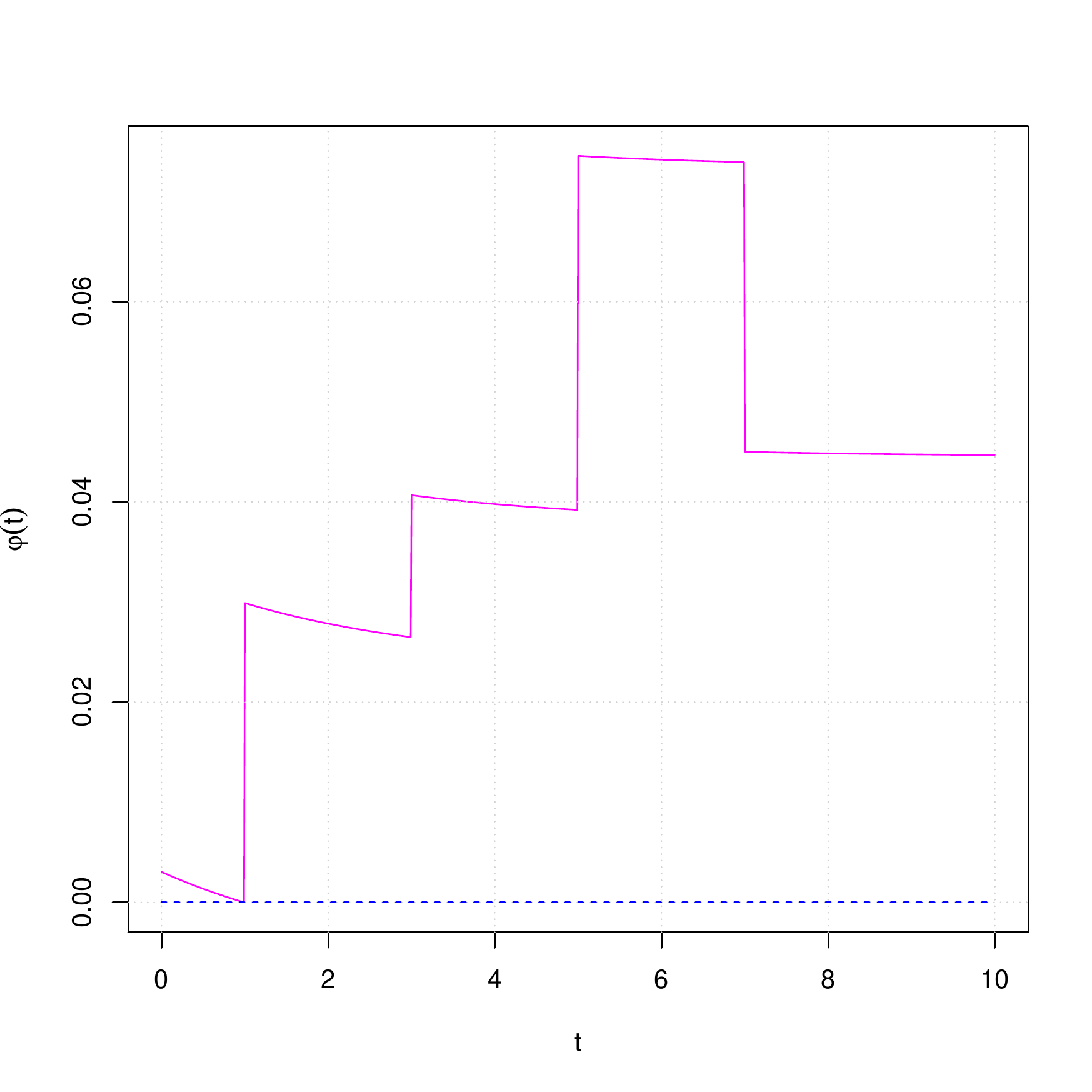}}\hspace{0.2cm}
%\subfigure[Piecewise linear hazard rates function]{\includegraphics[width=0.45\columnwidth]{h_linear.pdf}}
%\subfigure[Survival probability]{\includegraphics[width=0.45\columnwidth]{G_cons_phi.pdf}}
%\subfigure[Clock theta $\Theta(t)=\Theta^\star(t;\Xi^\star(\kappa,\beta,\eta)),\;\lambda_0=h(0)$]{\includegraphics[width=0.45\columnwidth]{Theta_cons_phi_positif.pdf}}\hspace{0.2cm}
%\subfigure[Clock theta $\Theta(t)=\Theta^\star(t;\Xi^\star(\kappa,\beta,\eta,\lambda_0))$]{\includegraphics[width=0.45\columnwidth]{Theta_cons_phi_positif_lambda0.pdf}}
\caption{Fitting Ford Inc. CDS term-structure using $\lambda^{\varphi^\star,+}$ (PS-CIR). Panels (a) and (c) correspond to $y^+_0=h_0$ whereas panels (b) and (d) correspond to the case where $y^+_0$ is one of the optimized parameters. The survival probability curve $G(t)$ is parametrized with a piecewise constant hazard rate function $h(t)$ extracted from market prices taken from Bloomberg on November 12 2018. The parameters $\Xi^{\star,+}$ are computed under the constraint $\varphi(t)\leftarrow\varphi^\star(t;\Xi^{\star,+})\geq 0$ . The base model $y^+$ is a CIR with parameters $\Xi=\Xi^{\star,+}$ (left) and $\Xi=\Xi_0^{\star,+}$ (right).}\label{fig:FordConstr} 
\end{figure}
%\newpage
\else
\begin{center}
[Figure 4 about here]
\end{center}
\fi 

\subsection{Variance analysis}\label{seq:Variance}

Interesting observations can be made regarding the variance of the various integrated processes. As shown in the next two sections, they will have important consequences when considering financial applications, where $\Lambda$ plays a central role in governing volatility and covariance effects.
%It is important to keep in mind that in order to get a valid intensity model using CIR++, one needs to add the constraint $f^y\leq f^{market}=h$ when optimizing $\Xi$, leading to the PS-CIR, with parameter $\Xi^{+}$.\medskip 

First, observe that the integrated CIR process with optimal parameter $\Xi^\star$ is expected to feature a larger variance compared to the integrated CIR with parameter $\Xi^{\star,+}$. 
Because of the shift constraint, the discount curve $P^y$ in the latter case rends to be much flatter than in the former case i.e., one expects to have, in general
$$P^y(t;\Xi^{\star,+})\geq P^y(t;\Xi^\star)\;.$$
This can be observed from panels (a) and (b) of Figure~\ref{fig:FordConstr}. When working with $\Xi^{\star,+}$, a substantial part of the shape of $P^{market}=G$ comes from the deterministic shift. This amounts to limit the randomness of the process. Not surprisingly, this will impact the variance of the integrated process $Y$. Indeed, because the discount curve of the CIR process with parameter $\Xi^{\star,+}$ generally dominates that of the CIR process with parameter $\Xi^\star$, one intuitively expects the variance of the CIR with parameter $\Xi^\star$ to be larger than that of the CIR with parameter $\Xi^{\star,+}$, due to the zero lower bound. In other words, even if it seems difficult to provide a formal proof, one expects intuitively the following to hold, in general:
$$v^{\Lambda^{\varphi}}(t)=v^{Y}(t;\Xi^\star) \geq v^{Y}(t;\Xi^{\star,+})=v^{\Lambda^{\varphi,+}}(t)\;.$$
This is indeed the case on Figure~\ref{fig:FordConstr2}: $v^{\Lambda^{\varphi}}$ (dotted blue) dominates $v^{\Lambda^{\varphi,+}}$ (solid blue).\medskip 

Second, observe that for a given base process $y$, the variance of the integrated TC-CIR is always larger than that of the integrated PS-CIR. Indeed, when working under the positivity constraint (i.e., when $y$ is driven by $\Xi^{\star,+}$), we necessarily have $\Theta^{+}(t):=\Theta(t;\Xi^{\star,+})\geq t$, in agreement with Theorem \ref{th:th2}. Because for any parameter, the variance of $Y$ is an increasing function of time (Lemma \ref{Lem:VarCIR} in the Appendix, Section \ref{sec:AppAffineCIR}) we  have, for $\Xi\leftarrow \Xi^{\star,+}$ in particular, $$v^{\Lambda^{\theta,+}}(t)=v^{Y}(\Theta^{+}(t);\Xi^{\star,+})\geq v^{Y}(t;\Xi^{\star,+})=v^{\Lambda^{\varphi,+}}(t)\;.$$

Third, we observe from  Figure~\ref{fig:FordConstr2} that, in this example at least, the variance of the TC-CIR using $\Xi^\star$ is comparable to the variance of the S-CIR:
$$v^{\Lambda^{\theta}}(t)\approx v^{Y}(t;\Xi^\star)=v^{\Lambda^{\varphi}}(t)\;.$$
The fact that the variance of the S-CIR is expected to be close to that of the corresponding TC-CIR model can be understood intuitively as follows. As explained above the parameter $\Xi\leftarrow \Xi^\star$ computed using \eqref{eq:CalyNoConstraint} leads the HAJD $y$ to best fits the market curve, and the clock is used to absorb the remaining discrepancies. Therefore, one expects the clock not to deviate much from the actual time, i.e $\theta(t)\approx 1$ and the two processes to behave similarly. In particular, the parameters of $y^\theta$ are those of $y$ scaled by $\theta(t)$, and $x^\theta_t=\theta(t) y^\theta_t\approx y^\theta_t$, at least when the fit between $P^{market}$ and the base HAJD model $P^y$ is not too poor; see \eqref{eq:TCHAJD}.\medskip

To sum up, we observe that when dealing with CIR++ under a positivity constraint, one has to choose between a valid (but low-volatility) PS-CIR process $\lambda^{\varphi,+}$, or a flawed (by high-volatility) S-CIR one $\lambda^{\varphi}$. By contrast, the TC-CIR model $\lambda^{\theta}$ is always valid (Corollary~\ref{cor:cor2}), always feature a variance that is larger than the PS-CIR counterpart (Theorem~\ref{th:th2}), and its variance is actually comparable to the large levels generated by the S-CIR. The TC-CIR thus proves to be a solid challenger to CIR++ models. In particular, its features are specifically interesting when dealing with actual credit risk applications, as we now point out based on two case studies.  

%% FIGURE 5
\ifdefined \InclFig
\begin{figure}[H]
\centering
\subfigure[$y_0=h_0$]{\includegraphics[width=0.45\columnwidth]{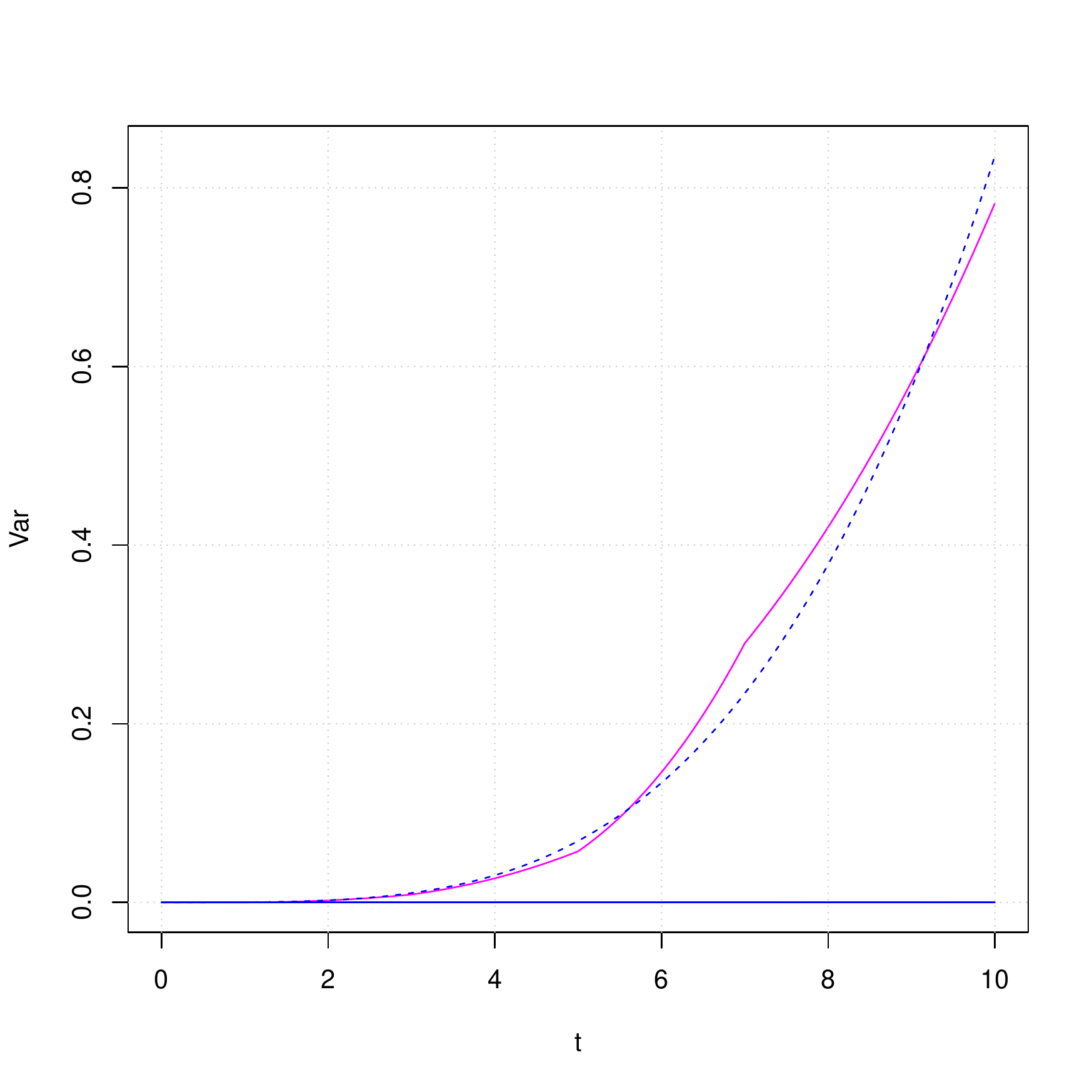}}\hspace{0.2cm}
\subfigure[$y_0$ optimized]{\includegraphics[width=0.45\columnwidth]{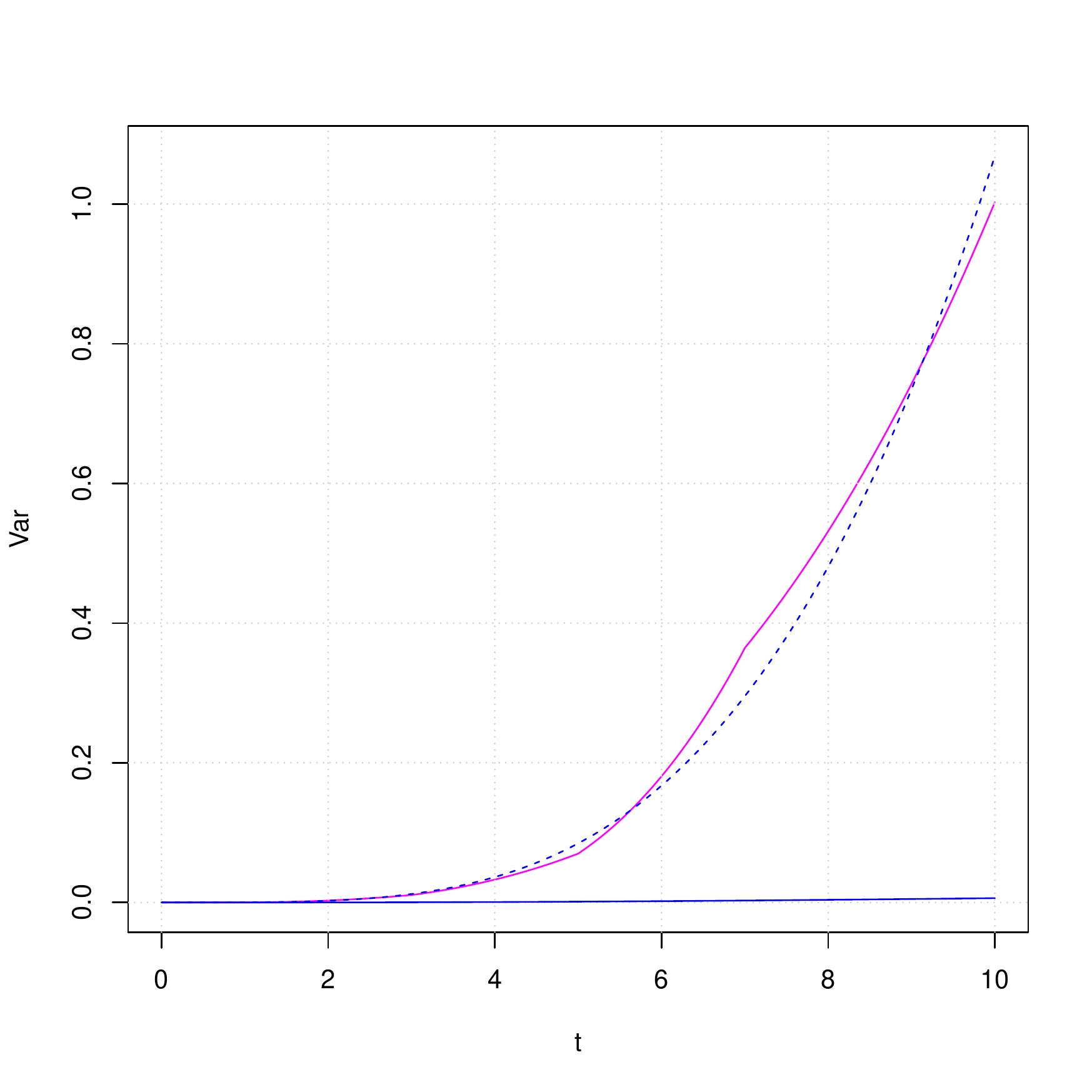}}
%\subfigure[$y_0=h_0$]{\includegraphics[width=0.45\columnwidth]{Theta_cons_phi_positif.pdf}}\hspace{0.2cm}
%\subfigure[$y_0$ optimized]{\includegraphics[width=0.45\columnwidth]{Theta_cons_phi_positif_lambda0.pdf}}
%\caption{Perfect fit clock $\Theta(t)=\Theta^\star(\cdot;\Xi^\star)$ associated with the Ford Inc. example held in Fig.~\ref{fig:FordConstr}, i.e., where $\Xi^\star$ is computed using the non-negative shift constraint $\varphi^\star(t;\Xi^\star)\geq 0$ as per eq. XX.}
\caption{Variances of the integrated versions of  $\lambda^{\varphi,+}$ (PS-CIR $\Xi=\Xi^{\star,+}$ (left) and $\Xi=\Xi_0^{\star,+}$ (right), solid blue), $\lambda^{\varphi}$ (S-CIR $\Xi=\Xi^\star$ (left) and $\Xi=\Xi_0^\star$ (right), dashed blue), and $\lambda^{\theta}$ (TC-CIR with $\Xi=\Xi^\star$ (left) and $\Xi=\Xi_0^\star$ (right), magenta).}\label{fig:FordConstr2}
\end{figure}
\else
\begin{center}
[Figure 5 about here]
\end{center}
\fi 

\subsection{Pricing CDS options}

We deal with the pricing of a CDS option (CDSO). Because CDSO is an option on CDS, we start by recalling the no-arbitrage pricing equation of a CDS. We note $t$ the valuation time and assume $\tau>t$ as it is pointless to price a CDS (or a CDSO) post-default. From the perspective of the protection buyer, the time-$t$ value of a 1 dollar notional CDS $CDS_t(a,b,k)$ starting at time $T_a$ with maturity $T_b$, $t\leq T_a<T_b$, a spread $k$ and (known) loss given default $LGD=(1-R)$ is given by the difference of the conditional risk-neutral expectation of the protection and premium cashflows :
\beqn
CDS_t(a,b,k)&=&\E\left[(1-R)\indic_{\{T_a\leq \tau\leq T_b\}}P_t^r(\tau)|\cG_t\right]\nonumber\\
&&-k\E\left[\left.\sum_{i=a+1}^b \left(\indic_{\{\tau\geq T_i\}}\alpha_i P^r_t(T_i)+ \indic_{\{T_{i-1}\leq \tau< T_i\}}\alpha_i\frac{\tau-T_{i-1}}{T_i-T_{i-1}} P^r_t(\tau)\right)\right|\cG_t\right]\nonumber
\eeqn
with $\alpha_i$ the day count fraction between dates $T_{i-1}$ and $T_i$ which, in a standard CDS, is around $0.25$ (quarterly payment dates). In a reduced-form setup, when the default is triggered by the first jump of a Cox process with intensity $\lambda$, this expression can be developped explicitely thanks to the Key lemma:
\beq
CDS_t(a,b,k)=\indic_{\{\tau >t\}}\left(-(1-R)\int_{T_a}^{T_b}P^r_t(u)\partial_uP_t^\lambda(u)du -k~C_t(a,b)\right)\;,\label{eq:CDS}
\eeq
where $C_t(a,b)$ is the risky duration, i.e., the time-$t$ value of the CDS premia paid during the life of the contract when the spread is 1:
\beq
    C_t(a,b):=\sum_{i=a+1}^b\alpha_iP_t^r(T_i)P_t^\lambda(T_i)-\int_{T_{i-1}}^{T_i}\frac{u-T_{i-1}}{T_{i}-T_{i-1}}\alpha_iP_t^r(u)\partial_uP_t^\lambda(u)du\;.\nonumber
\eeq

The spread which, at time $t$, sets the forward start CDS at 0, called par spread,  is given by: 
\begin{equation}
    \indic_{\{\tau >t\}}s_t(a,b):=\indic_{\{\tau >t\}}\frac{-(1-R)\int_{T_a}^{T_b}P^r_t(u)\partial_uP_t^\lambda(u)du}{C_t(a,b)}\;.\label{eq:ParCDS}
\end{equation}

The no-arbitrage price of a call option on such a contrat at time $t=0$ becomes
\beqn
PSO(a,b,k)&=&\mathbb{E}\left[(CDS_{T_a}(a,b,k))^+P^r(T_a)\right]\nonumber\\&=& P^r(T_a)\mathbb{E}\left[e^{-\Lambda_{T_a}}\left( (1-R)-\sum_{i=a+1}^b\int_{T_{i-1}}^{T_i}g_i(u)P^{r+\lambda}_{T_a}(u)du\right)^+ \right]\;,\nonumber
\eeqn
where $g_i(u):= (1-R)(r(u)+\delta_{T_b}(u))+k\frac{\alpha_i r(u)}{T_i-T_{i-1}}(1-(u-T_{i-1}))$, with $\delta_{s}(u)$ the Dirac delta function centered at $s$.\\

Replacing the base intensity model $(\lambda)$ by its shifted $(\lambda^\varphi,\lambda^{\varphi,+})$ or time-changed $(\lambda^\theta)$ versions leads to model prices noted $PSO^\varphi(a,b,k),PSO^{\varphi,+}(a,b,k)$ and $PSO^\theta(a,b,k)$, respectively. Interestingly, these models are equally tractable as they feature similar expressions that can be written in terms of the base process $\lambda$ or its time integral, $\Lambda$. For instance, dropping the $\Xi$ for short,
\beq
\Lambda^\varphi_{T_a} = \Lambda_{T_a}+\int_0^{T_a} \varphi(u)du\;,~~
\Lambda^\theta_{T_a} = \Lambda_{\Theta(T_a)}\;,\nonumber
\eeq
and
\beqn
P^{\lambda^\varphi}_{T_a}(u)&=&P^{\lambda}_{T_a}(u)e^{\int_{T_a}^u\varphi(s) ds}=e^{A_{T_a}(u)-B_{T_A}(u)\lambda_{T_a}+\int_{T_a}^u\varphi(s) ds}\;,\nonumber\\
P^{\lambda^\theta}_{T_a}(u)&=&P^{\lambda}_{\Theta(T_a)}(u)=e^{A_{\Theta(T_a)}(u)-B_{\Theta(T_a)}(u)\lambda_{\Theta(T_a)}}\;.\nonumber
\eeqn
Recall that these expressions have a closed form when $\lambda$ is a (J)CIR process.\medskip

Such kind of options has little liquidity. Models are then often compared in terms of their capabilities to generate large ``implied volatilities''. Indeed, empirical evidences show that this is a typical feature of CDS option quotes, when disclosed. Therefore, we compare the models in terms of their ``Black volatilities'': the volatility that one needs to plug in a  ``Black-Scholes'' type of model to reproduce the model prices. Black model for $PSO$ is recalled in the Appendix, Section \ref{sec:BlackPSO}. The Black volatility associated to a model price $PSO^{model}(a,b,k)$ is thus the volatility $\bar{\sigma}$ satisfying $PSO^{model}(a,b,k)=PSO^{Black}(a,b,k,\bar{\sigma})$. Recall that in all cases, the intensity process $\lambda$ is calibrated to the market, i.e., $P^\lambda(t)=G(t)$. In other words, choosing, e.g., a CIR process for the base intensity process $\lambda$ combined with the correct shift with ($\varphi^+$) or without ($\varphi$) positivity constraint, or eventually using the correct clock rate $\theta$, all three models yield the same survival probability curve ($P^{\lambda^{\varphi}}(t)=P^{\lambda^{\varphi,+}}(t)=P^{\lambda^{\theta}}(t)=G(t)$). Hence, all these models agree on the par spread:
\beq
s_0(a,b)=\frac{(1-R)\int_{T_a}^{T_b}P^r(u)h(u)G(u)du}{C_0(a,b)}\;.\nonumber
\eeq

We compare the S-CIR, the PS-CIR and the TC-CIR. The base HAJD process $y$ in TC-CIR is taken to be the same as that of the S-CIR. One can see from Table \ref{tab:CDSO-CIR} that the S-CIR features large implied volatilities. Recall however that it allows for negative intensities, hence is not appropriate. The PS-CIR model are not capable of generating large volatility levels, in line with the previous discussion. The TC-CIR fits in between: it rules out negative intensities, while maintaining substantial volatility levels.\medskip 

One might be concerned by the fact that the implied volatilities of the TC-CIR remain  relatively small. This can be addressed in two ways. First, one can play with the parameter $\Xi$. However, the Feller constraint is often required to hold, which sets limits on the process' volatility. Another approach consists of considering a JCIR model as HAJD. Indeed, JCIR is often considered when large volatilities are required. However, as explained in Section \ref{sec:positivity}, increasing the volatility by boosting the jump activity while maintaining the calibration to a given market curve $G$ reinforces the positivity issue. Fortunately, we do not have this problem in the TC-JCIR. One can drastically increase the jump activity without impacting the positivity of the TC-JCIR. As a consequence, the TC-JCIR seems very much appropriate when one needs a positive but yet high-volatility process. This is illustrated on Table \ref{tab:CDSO-JCIR} using the same jump parameters as those given in \cite{Brigo06}. We keep the same parameter $\Xi$ as before for the diffusion part, and play with the jump rate ($\omega$) and jump size ($\alpha$) in the compound Poisson process $J$. In every case, the clock $\Theta$ is chosen such that the model perfectly fits Ford's survival probability curve. Interestingly, the (positive) TC-JCIR model can feature much larger implied volatility levels than the PS-CIR. The results for the S-JCIR are also larger than the PS-CIR, but they are not shown because the negative intensity problem is magnified.

\begin{table}[H]
\centering
\resizebox{15cm}{!}{
\begin{tabular}{ll}
\begin{tabular}{|c|c|c|c|c |}
   \hline
    $T_a$&$T_b$& \multicolumn{2}{c|}{CIR++} & TC-CIR  \\
   \hline
   & & $\lambda^{\varphi}$ & $\lambda^{\varphi,+}$ & $\lambda^{\theta}$  \\
   \hline
   1&3 & 67.12\% & 1.03\% & 43.68\%  \\
   \hline
   1&5 & 45.10\% & 1.25\% & 26.92\% \\
   \hline
  1& 7 & 27.72\% & 1.64\% & 16.86\%  \\
   \hline
  1 &10 & 21.52\% & 1.65\% & 12.86\%  \\
   \hline
   3 &5 & 61.30\% & 0.85\% & 57.16\%  \\
   \hline
   3&7 & 34.88\% & 1.15\% & 36.33\%   \\
   \hline
   3&10 & 27.65\% & 1.08\% & 27.17\%  \\
   \hline
  5& 7 & 34.81\% & 1.16\% & 42.60\%  \\
   \hline
   5&10 & 30.96\% & 0.93\% & 31.19\%  \\
   \hline
   7&10& 45.37\% & 0.63\% & 38.93\% \\
   \hline
\end{tabular}
\hspace{0.25cm}
&
\begin{tabular}{|c|c|c|c|c |}
   \hline
    $T_a$&$T_b$& \multicolumn{2}{c|}{CIR++} & TC-CIR  \\
    \hline
   & & $\lambda^{\varphi}$ & $\lambda^{\varphi,+}$ & $\lambda^{\theta}$  \\
   \hline
   1&3 & 65.21\% & 9.48\% & 44.00\%  \\
   \hline
   1&5 & 42.35\% & 5.88\% & 26.56\% \\
   \hline
  1& 7 & 25.30\% & 4.87\% & 16.30\%  \\
   \hline
  1 &10 & 19.37\% & 2.94\% & 12.27\%  \\
   \hline
   3 &5 & 63.87\% & 8.02\% & 58.67\%  \\
   \hline
   3&7 & 35.20\% & 4.50\% & 36.10\%   \\
   \hline
   3&10 & 27.02\% & 3.44\% & 26.62\%  \\
   \hline
  5& 7 & 35.77\% & 4.59\% & 43.32\%  \\
   \hline
   5&10 & 30.35\% & 3.70\% & 30.61\%  \\
   \hline
   7&10& 45.05\% & 5.16\% & 39.40\% \\
   \hline
\end{tabular}
\end{tabular}
}
\caption{Black volatilities for at-the-money ($k=s_0(a,b)$) CDS options implied by CIR++ models (S-CIR and PS-CIR) and the TC-CIR model with $y_0=h_0$ (left) and $y_0$ optimized (right) using Monte Carlo simulation (500K paths with time step 0.01). In all the considered cases, the CIR++ model without shift is not valid since $\inf\{ \lambda_t^{\varphi},\;t\in[0,T_a]\}< 0$. Among the two valid intensity models ($\lambda_t^{\varphi,+}$ and $\lambda_t^{\theta}$), the latter exhibits a much higher implied volatility.} \label{tab:CDSO-CIR}
\end{table}

\begin{table}[H]
\centering
\resizebox{7.5cm}{!}{
\begin{tabular}{|c|c|c|c|c|}
   \hline
    $T_a$&$T_b$& \multicolumn{3}{c|}{TC-JCIR $(\omega, \alpha$)}   \\
   \hline
   & & $(0,0)$ & $(0.1,0.1)$ & $(0.15,0.15)$    \\
   \hline
   1&3 & 43.68\% & 79.04\% & 100.17\%  \\
   \hline
   1&5 & 26.92\% & 48.07\% & 69.69\% \\
   \hline
  1& 7 & 16.86\% & 30.43\% & 44.00\%  \\
   \hline
  1 &10 & 12.86\% & 23.33\% & 33.75\%  \\
   \hline
   3 &5 & 57.16\% & 65.50\% & 82.60\%  \\
   \hline
   3&7 & 36.33\% & 42.85\% & 53.17\%   \\
   \hline
   3&10 & 27.17\% & 33.17\% & 41.36\%  \\
   \hline
  5& 7 & 42.60\% & 49.13\% & 60.11\%  \\
   \hline
   5&10 & 31.19\% & 37.34\% & 45.78\%  \\
   \hline
   7&10& 38.93\% & 40.59\% & 46.02\% \\
   \hline
\end{tabular}
}
\caption{Black volatilities for at-the-money ($k=s_0(a,b)$) CDS options implied by the TC-JCIR model (jump arrival rate $\omega$ and jump size $\alpha$) using Monte Carlo simulation ($10^6$ paths with time step 0.01) and paramter set $\Xi=\Xi^\star$ but for various jump parameters $(\alpha,\omega)$.} 
\label{tab:CDSO-JCIR}
\end{table}

\subsection{Wrong-way risk impact in credit valuation adjustments}
A major concern of the post-crisis regulation is the modeling of the capital requirement of firms tacking into account some credit adjustment to the valuation under credit risk. 
Counterparty credit risk is defined as the risk that the counterparty of an over-the-counter (OTC) deal will default before the maturity of the contract. The latter can be seen as an option given to the counterparty, and can be priced in a risk-neutral setup by adjusting the OTC derivative, leading to CVA. The latter is nothing but the expected losses due to the missed payments associated to the OTC portfolio. In a risk-neutral specification and assuming $\tau>0$, the current ($t=0$) value of the CVA is expressed as:
\beq
\mathrm{CVA} = \mathbb{E}\left[(1-R)V^+_{\tau}\indic_{\{\tau\leq T\}}\right]\nonumber
\eeq
where $V$ stands for the discounted exposure (i.e., the exposure process rescaled by the stochastic discount factor $D$). A straightforward application of the Key lemma (under some technical conditions that are valid here) yields
\beq\label{eq:CVA}
\mathrm{CVA} = \mathbb{E}\left[(1-R)\int_0^TV^+_{u}\lambda_u e^{-\Lambda_u}du\right]\;.
\eeq

The CVA of the shifted and the time-changed models, $\mathrm{CVA}^\varphi$ and $\mathrm{CVA}^\theta$, correspond to above expression, replacing $(\lambda, \Lambda)$ by $(\lambda^\varphi,\Lambda^\varphi)$ and $(\lambda^\theta,\Lambda^\theta)$, respectively. The purpose of this section is to illustrate the order of magnitude of CVA figures that can be obtained with either models. In particular, we do not aim at representing a specific exposure. Instead, we simplify the analysis by considering two prototypical dynamics:
\beqn
%\label{eq:exposure}
dV_t&=&\nu dW^V_t\;,\nonumber\\  
dV_t&=&\left(\gamma(T-t)-\frac{V_t}{T-t}\right)dt +\nu dW^V_t\;.\nonumber
\eeqn
where $W^V$ is an $\mathbb{F}$-Brownian motion. The first SDE is that of a martingale, and can depict the evolution of the discounted price of a forward contract prior to its cashflow date. The second SDE corresponds to a Brownian bridge with drift, and mimics the dynamics of the discounted price of an asset paying continuous dividends. These two models have been previously used in \cite{Vrins16a} and \cite{Brigo17} to describe, in a schematic way, exposures of FRA and IRS. Calibration to actual exposures give indicative value for the parameters.\medskip

In general, there is no reason to assume that the Brownian motion driving the default intensity ($W$) would be independent of the Brownian motion driving the exposure ($W^V$): it depends on the problem at hand. Usually, we consider the general case of \textit{wrong-way risk} (WWR) effect, obtained by introducing a correlation between the Brownian drivers. For the CIR++ we assume $dW_tdW^V_t=\rho dt$, whereas for the TC-CIR, we apply the synchronisation procedure devised in \cite{Mbaye2018} in order to preserve the correlation after time-changing the intensity process. In the special case where the default time of the counterparty is independent from the discounted exposure (i.e., $\rho=0$, that is no wrong-way risk) one can easily  deduce from \eqref{eq:CVA} the \textit{independent} CVA formula
\beq
\mathrm{CVA}^\perp = -(1-R)\int_0^T\mathbb{E}\left[V^+_{u}\right]d\,\mathbb{E}\left[e^{-\Lambda_u}\right]=(1-R)\int_0^T f^\lambda(u) \E\left[V^+_u\right]P^\lambda(u)du\;.\nonumber
\eeq

Recall that whatever the chosen model, it is assumed to be calibrated to the survival probability curve $G$, extracted from CDS prices. This leads to  $P^\lambda(u)=G(t)$, and to the optimal shift and clock functions, namely $\varphi$ or $\varphi^{+}$ in the S-CIR and PS-CIR cases, and $\Theta$ for the TC-CIR. In this case, $\mathrm{CVA}^\perp$ does not depend on the default model: 
\beq\label{eq:indCVA}
{\rm CVA}^{\perp}=-(1-R)\int_0^T\mathbb{E}\left[V^+_{u}\right]dG(u)=(1-R)\int_0^Th(u)\mathbb{E}\left[V^+_{u}\right]G(u)du\;.\nonumber
\eeq

However, the independent case $\rho=0$ is unrealistic, and may lead to severe over or underestimations of CVA~\cite{Kim16,Brigo17,Breton18}. Under WWR, CVA becomes model-dependent. Figure \ref{fig:CVAplots} shows the evolution of CVA with respect to $\rho$ for three different models: $\lambda^{\varphi}$ (CIR++ without constraint, solid blue), $\lambda^{\varphi,+}$ (CIR++ with constraint, dashed blue) and $\lambda^{\theta}$ (TC-CIR, dashed magenta), all calibrated to Ford's survival probability curve $G$ as before. Under no-WWR, the CVA is equal to the independent CVA (cyan): it is flat, model-free and can be computed using a simple integration. Under WWR, the CVAs are computed using Monte Carlo simulations (100K paths, time step of 0.01) and adaptive control variate \footnote{see \cite{Mbaye2018} for the implementation of the adaptive control variate applied on CVA computation.}. The TC-CIR and S-CIR models exhibit the largest WWR effects and seem therefore appropriate to deal with high WWR applications. Recall that only TC-CIR is valid here as S-CIR gives room to negative intensities. The PS-CIR however is almost flat, equal to the independent CVA.  This can be understood from the fact that WWR is essentially a covariance effect between $V$ and $e^{-\Lambda}$. Hence, the models featuring large variance for $\Lambda$ exhibit larger WWR effects at any (non-zero) fixed correlation level $\rho$. Eventually, TC-CIR provides an appealing trade-off: on the one hand, as the PS-CIR, it rules out the negative intensity problem inherent to the S-CIR model. But on the other hand, it preserves, to some extend, the variance of the S-CIR model, and therefore exhibits a much larger variance compared to PS-CIR. 

%% FIGURE 6
\ifdefined \InclFig
\begin{figure}[H]
\centering
\subfigure[$dV_t=\nu dW^V_t$, $y_0=h_0$]{\includegraphics[width=0.48\columnwidth]{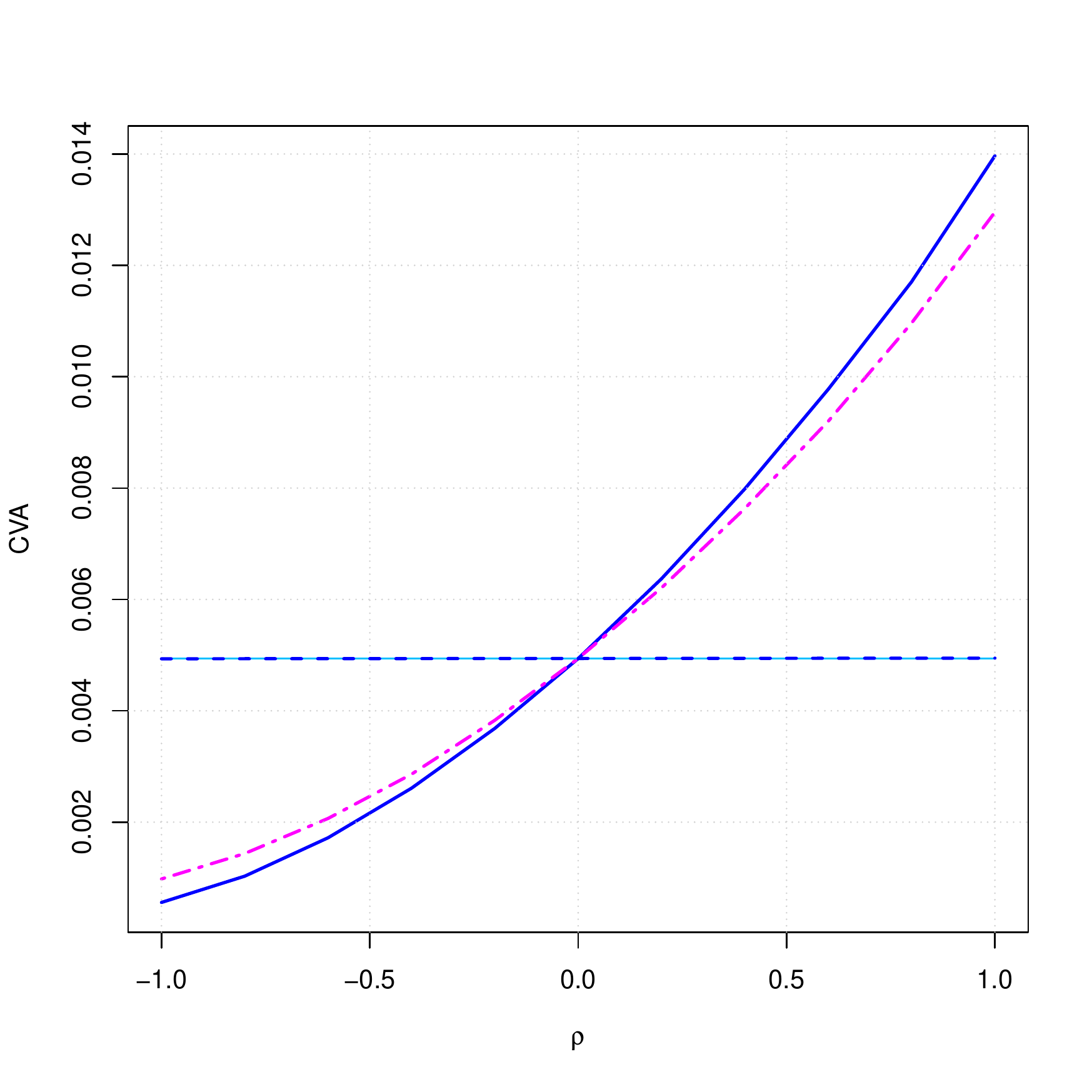}}\hspace{0.20cm}
\subfigure[$dV_t=\nu dW^V_t$, $y_0$ optimized]{\includegraphics[width=0.48\columnwidth]{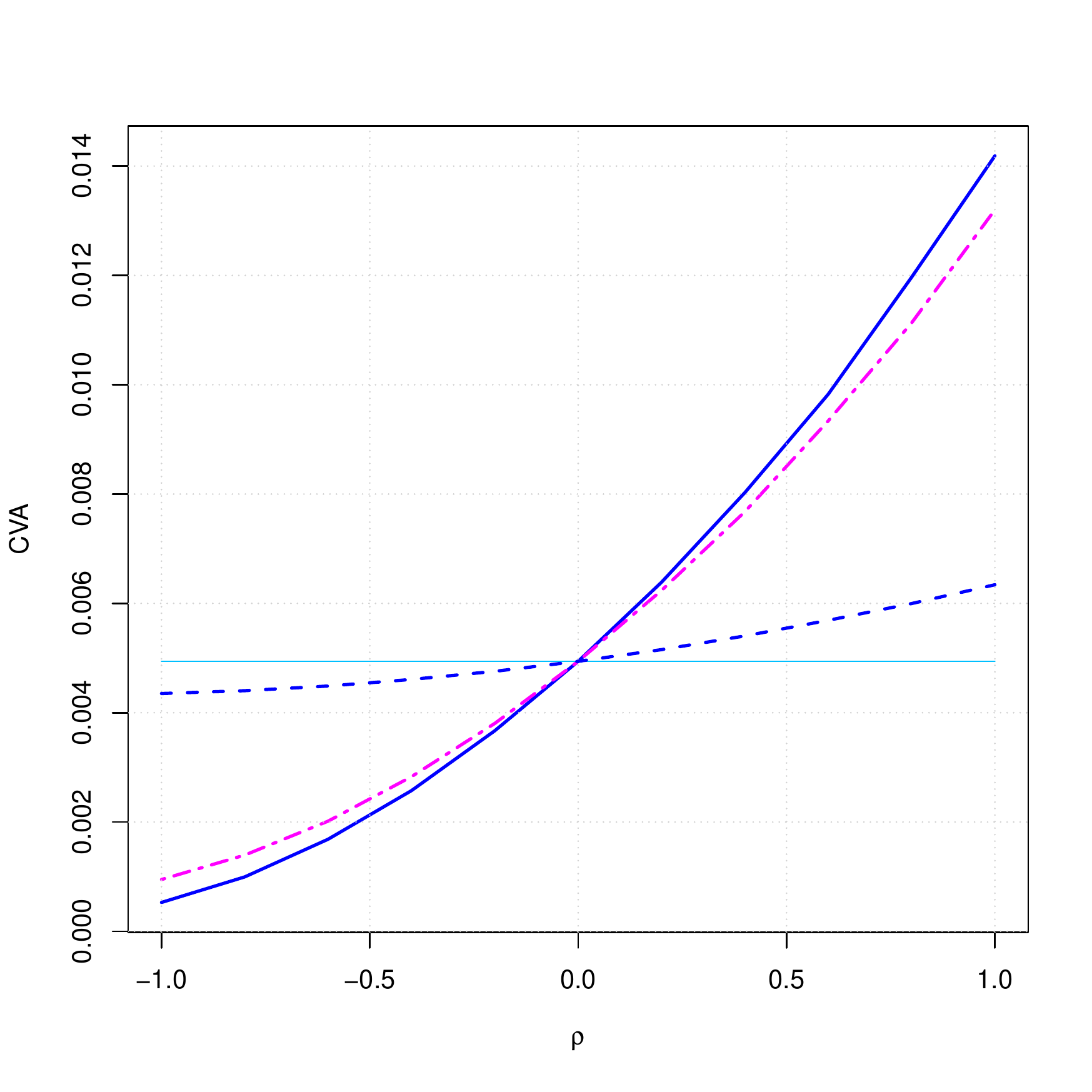}}\hspace{0.20cm}
\subfigure[$dV_t=\left(\gamma(T-t)-\frac{V_t}{T-t}\right) dt + \nu dW^V_t$, $y_0=h_0$]{\includegraphics[width=0.48\columnwidth]{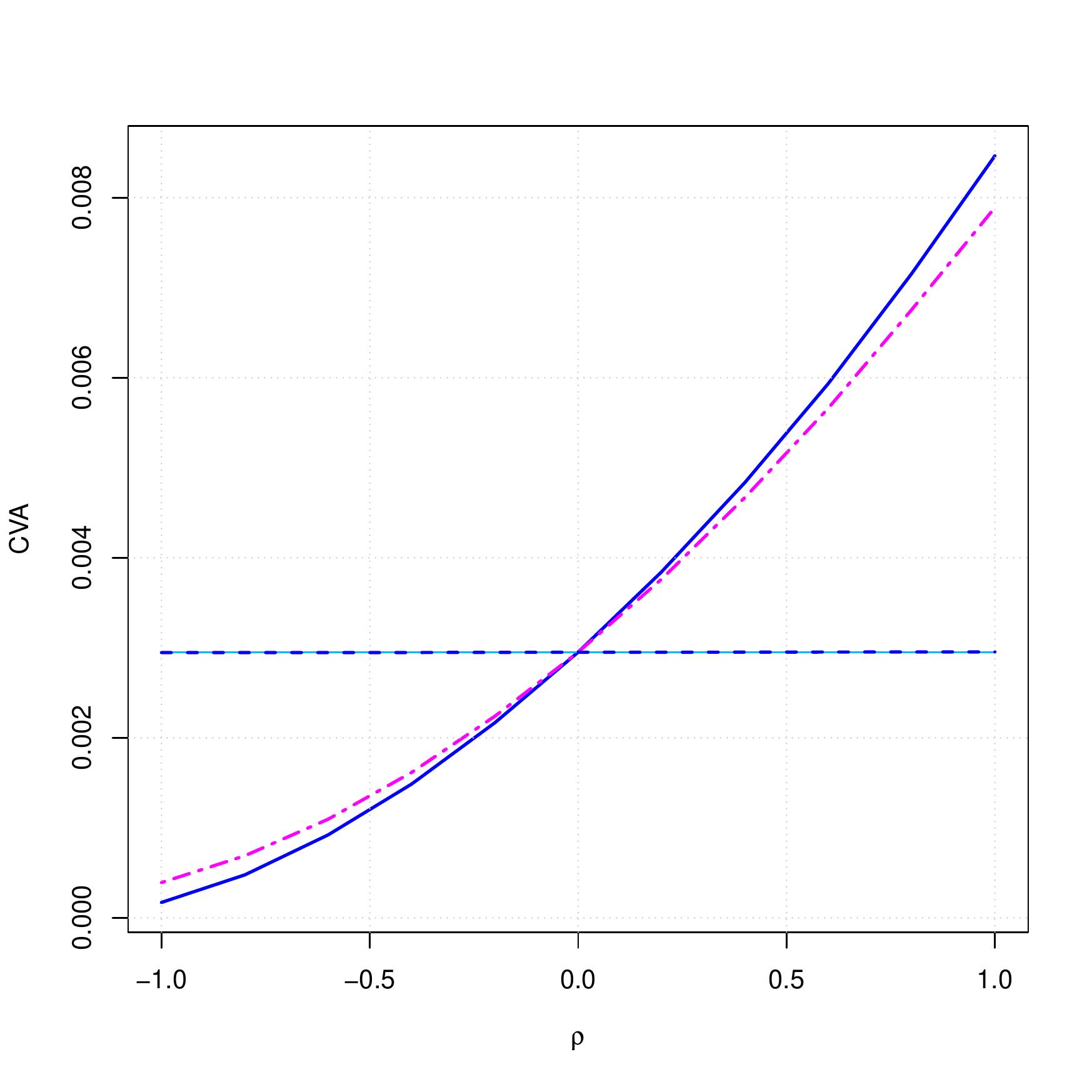}}\hspace{0.20cm}
\subfigure[ $dV_t=\left(\gamma(T-t)-\frac{V_t}{T-t}\right) dt + \nu dW^V_t$, $y_0$ optimized]{\includegraphics[width=0.48\columnwidth]{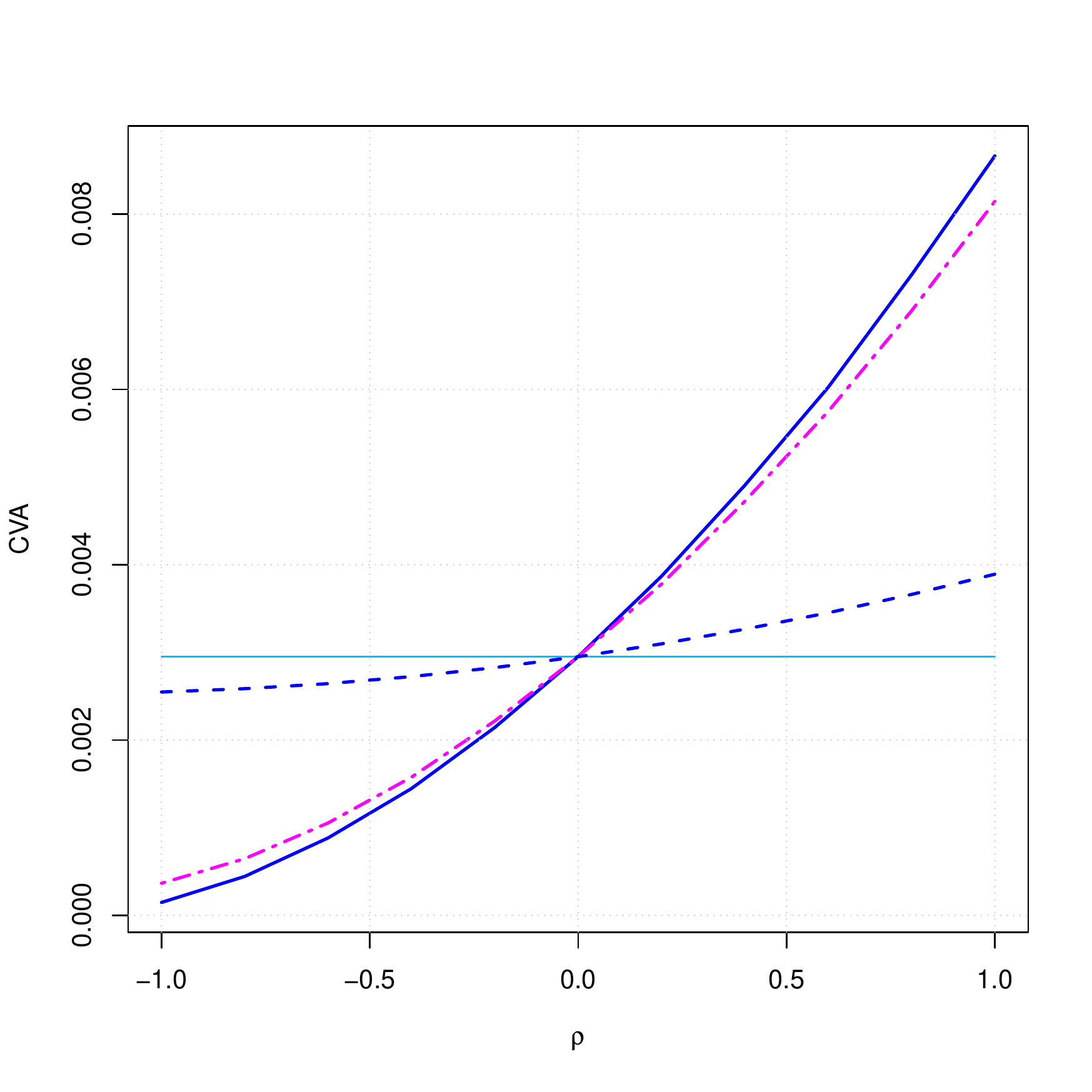}}
\caption{Impact of the exposure-credit correlation $\rho$ on CVA levels for prototypical 5Y Forward (top) and Swap exposure (bottom) with $y_0=h_0$ (left) or $y_0$ optimized (right), $\nu=8\%$ and $\gamma=0.1\%$. The curves correspond to different intensity models: $\lambda^{\varphi,+}$ (PS-CIR $\Xi=\Xi^{\star,+}$ (left) and $\Xi=\Xi_0^{\star,+}$ (right), dotted blue), $\lambda^{\varphi}$ (S-CIR $\Xi=\Xi^\star$ (left) and $\Xi=\Xi_0^\star$ (right), solid blue), and $\lambda^{\theta}$ (TC-CIR $\Xi=\Xi^\star$ (left) and $\Xi=\Xi_0^\star$ (right), dotted magenta). The case without wrong-way risk corresponds to the flat (cyan) line.}\label{fig:CVAplots}
\end{figure}
\else
\begin{center}
[Figure 6 about here]
\end{center}
\fi 

\section{Conclusion}

The calibration problem consists of finding the parameters of a model $x$ so as to perfectly fit a given market curve. The perfect fit is an important feature in a pricing context, that is connected to  no-arbitrage opportunities and corrects valuation of trading positions. This calls for two important features: the model $x$ must be (i) flexible enough (to be able to generate various shapes) and (ii) tractable enough (to facilitate the parameters' optimization procedure). Time-homogeneous affine models like Vasicek, CIR or JCIR are very good candidates in this respect, and are widely used in interest rates and credit risk modeling. However, as such, they only feature a couple of constants and hence lack calibration flexibility. The deterministic shift extension offers an appealing solution. It consists of starting with a tractable base model $y$, that is shifted in a deterministic way with a function $\varphi$. The resulting process $x_t=y_t+\varphi(t)$ becomes fully flexible. Indeed, any discount or survival probability curve can be generated by such a model. Moreover, it has a tractability level that is very similar to that of $y$ because $\varphi$ is deterministic. Eventually, for every market curve, the shift $\varphi^\star$ that leads to the perfect fit is known in closed form, as a function of the $y$ parameters and the market curve. However, this method is less appealing when the model $x$ needs to fulfill some range constraints.  Among those, non-negativity is of primary importance when modeling interest rates (depending on the type of economy at hand), mortality rate, prepayment rate or default intensities. In the deterministic shift approach indeed, starting with a non-negative base process $y$ is not enough to guarantee that so will be $x$, without additional constraint on $\varphi$. Furthermore, this constraint becomes more and more severe when increasing the process volatility, due to the zero lower bound.\medskip

It seems obvious to rule out models allowing for ``negative volatilities''. However, surprisingly, the same does not seem to apply when it comes to ``negative intensities''. Yet, both are equally flawed.  We believe the reason is twofold: first, negative intensities do not directly generate numerical problems (in contrast with volatilities that often appear in square-roots), so that the issue is less ``obvious'', second, there is a lack of a sound alternative. The positivity constraint can be dealt with by including a non-negativity constraint on $\varphi$. However, this again raises two problems. First the parameter optimization problem becomes more difficult and second, the resulting process $x$ then features a much lower variance than without the constraint, which contradicts empirical evidences. Therefore, one often prefers to disregard the ``negative intensities'' issue, giving the priority to stochasticity and perfect fit.\medskip 

In this paper, we develop such an alternative. It simply consists of time-changing a positive homogeneous affine jump-diffusion. The model remains tractable, positive, the optimal clock is found by simple inversion and features larger implied volatility compared to the shift approach. Moreover, the perfect fit is achievable for a broad class of discount curves, including all decreasing discount curves. %In this case, the corresponding time change function $\Theta$ is available using a simple inversion. 
The features of the model have been illustrated on topical examples taken from credit risk, but other applications could be considered as well. This method thus proves to be a competitive challenger to the shift approach, at least under the (very common) positivity constraint, and when large volatility levels are needed. 

\section{Appendix}

\subsection{Properties of some Homogeneous Affine Jump-Diffusions}
\label{sec:AppAffine}
Let $y$ be a $\FF$-adapted jump-diffusion introduced in Definition \ref{eq:affineJD} and $Y_t:=\int_0^t y_u du$ its integrated version. We denote  $v^x(t)=v^x(t;\Xi):=\mathbb{V}[x_t]$ the variance of a stochastic process $x$ parametrized by $\Xi$ at time $t$. Without explicit mention, all the results below that are given without proofs can be found in, e.g., \cite{Brigo06}. New results are given in lemmas for further reference.

\subsubsection{Vasicek model}
The Vasicek model corresponds to the special HAJD case  $(a(t),b(t),c(t),d(t),\alpha,\omega(t))=(\kappa\beta,-\kappa,\eta^2,0,0,0)$. The $A^y,B^y$ functions in \eqref{eq:affine} are given by:
\begin{eqnarray}
A^y_s(t;\Xi)&=& \left(\beta-\frac{\eta^2}{2\kappa^2}\right)\left(B^y_s(t;\Xi)+s-t\right)-\frac{\eta^2}{4\kappa}B^y_s(t;\Xi)^2\;,\nonumber\\%\label{eq:Avas}
B^y_s(t;\Xi)&=&\frac{1}{\kappa}\left(1-\e^{-\kappa(t-s)}\right)\;.\nonumber%\label{eq:Bvas}
\end{eqnarray}
The forward curve associated to this model is proven to be %~\cite{Brigo06}
\beq
f_s^{\mathrm{VAS}}(t) := (1-\e^{-\kappa (t-s)})\frac{\kappa^2\beta-\eta^2/2}{\kappa^2}+\frac{\eta^2}{2\kappa^2}\e^{-\kappa (t-s)}(1-\e^{-\kappa (t-s)})+y_t\e^{-\kappa (t-s)}\;.\label{eq:fvast}
\eeq
Moreover, %(see, e.g., \cite{Brigo06}), 
both $y$ and $Y$ are Normally distributed at all times, with
%\beqn
%y_s|\mathcal{G}_t&\sim&\cN\left(y_te^{-\kappa(s-t)}+\beta\kappa B^y_t(s;\Xi),\,\frac{\eta^2}{2\kappa}\left(1-e^{-2\kappa(s-t)}\right)\right)\nonumber\\ 
%Y_s|\mathcal{G}_t&\sim&\cN\left(y_tB^y_t(s;\Xi)+\beta\left(s-t-B^y_t(s;\Xi)\right),\,\frac{\eta^2}{\kappa^2}\left[s-t+\frac{1-e^{-2\kappa(s-t)}}{2\kappa}-2B^y_t(s;\Xi)\right]\right)\;. \nonumber
%\eeqn
%
\beqn
\E[y_t]&=&y_0^{-\kappa t}+\beta\kappa B^y_0(t;\Xi)\;,\nonumber\\
\E[Y_t]&=&y_0B^y_0(t;\Xi)+\beta\left(t-B^y_0(t;\Xi)\right)\;,\nonumber\\
v^y(t)&=&\frac{\eta^2}{2\kappa}\left(1-e^{-2\kappa t}\right)\;,\nonumber\\
v^Y(t)&=&\frac{\eta^2}{\kappa^2}\left[t+\frac{1-e^{-2\kappa t}}{2\kappa}-2B^y_0(t;\Xi)\right]\;.\nonumber
\eeqn

\begin{lemma}\label{Lem:VarVAS}
Let $y$ be a Vasicek process and $Y$ its time integral. The functions $v^y(t)$ and  $v^Y(t)$ are increasing with respect to $t$. 
\end{lemma}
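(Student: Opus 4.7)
The plan is to differentiate both variance functions with respect to $t$ and verify that the derivatives are non-negative. Since both expressions are given in closed form, this reduces to an elementary computation.

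First I would handle $v^y(t) = \frac{\eta^2}{2\kappa}(1-e^{-2\kappa t})$. Differentiating directly gives $\frac{d}{dt}v^y(t) = \eta^2 e^{-2\kappa t}$, which is strictly positive for every $t\geq 0$ (the factor $2\kappa$ from differentiating the exponential cancels the $2\kappa$ in the denominator, so the sign of $\kappa$ is irrelevant; in any case $\kappa>0$ by assumption on the Vasicek model). Hence $v^y$ is strictly increasing.

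Next I would tackle $v^Y(t)$. Substituting $B^y_0(t;\Xi)=\frac{1}{\kappa}(1-e^{-\kappa t})$ into the expression gives
\begin{equation*}
v^Y(t)=\frac{\eta^2}{\kappa^2}\left[t+\frac{1-e^{-2\kappa t}}{2\kappa}-\frac{2}{\kappa}\left(1-e^{-\kappa t}\right)\right].
\end{equation*}
Differentiating term by term yields
\begin{equation*}
\frac{d}{dt}v^Y(t)=\frac{\eta^2}{\kappa^2}\left[1+e^{-2\kappa t}-2e^{-\kappa t}\right]=\frac{\eta^2}{\kappa^2}\left(1-e^{-\kappa t}\right)^2\geq 0,
\end{equation*}
with strict inequality for $t>0$. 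The key step here is recognising that the bracket is a perfect square in $e^{-\kappa t}$, which makes non-negativity immediate without any case analysis.

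There is no real obstacle: the result follows from a direct computation, and the only thing that needs to be spotted is the factorisation $1+e^{-2\kappa t}-2e^{-\kappa t}=(1-e^{-\kappa t})^2$. Together, these two computations establish that both $v^y$ and $v^Y$ are (strictly) increasing on $(0,\infty)$.
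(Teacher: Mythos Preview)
Your proposal is correct and takes essentially the same approach as the paper: the paper also simply states that the monotonicity of $v^y$ is obvious and then computes $\frac{d}{dt}v^Y(t)=\left(\frac{\eta}{\kappa}(1-e^{-\kappa t})\right)^2\geq 0$, which is exactly your perfect-square factorisation.
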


\begin{proof}
It is obvious for $v^y(t)$, and for $v^Y(t)$, a few manipulations lead to
$$\frac{d}{dt}v^Y(t)=\left(\frac{\eta}{\kappa}(1-e^{-\kappa t})\right)^2\geq 0\;.$$
\end{proof}

\subsubsection{CIR model}
\label{sec:AppAffineCIR}
The CIR model corresponds to the special HAJD case $(a(t),b(t),c(t),d(t),\alpha,\omega(t))=(\kappa\beta,-\kappa,0,\delta^2,0,0)$. The $A^y,B^y$ functions in eq. \eqref{eq:affine} are given by:
\beqn
A^y_s(t;\Xi)&=& \frac{2\kappa\beta}{\delta^2}\ln\frac{2\gamma\exp\{(\kappa+\gamma)(t-s)/2)\}}{2\gamma+(\kappa+\gamma)(\exp\{(t-s)\gamma\} - 1)} ,\nonumber\\%\label{eq:Acir}\\
B^y_s(t;\Xi)&=&\frac{2(\exp\{(t-s)\gamma\} - 1)}{2\gamma+(\kappa+\gamma)(\exp\{(t-s)\gamma\} - 1)}\;.\nonumber%\label{eq:Bcir}%\\
%\gamma&:=&\sqrt{\kappa^2+2\delta^2}
%\;.\nonumber
\eeqn
where $\gamma:=\sqrt{\kappa^2+2\delta^2}$. The forward curve associated to this model is given by \cite{Brigo06}
\beq
f_s^{\mathrm{CIR}}(t) := \frac{2\kappa\beta(e^{(t-s)\gamma}-1)}{2\gamma+(\kappa+\gamma)(e^{(t-s)\gamma}-1)} + y_t\frac{4\gamma^2e^{(t-s)\gamma}}{[2\gamma+(\kappa+\gamma)(e^{(t-s)\gamma}-1)]^2}\;.\label{eq:fcirt}
\eeq
%where $\gamma:=\sqrt{\kappa^2+2\delta^2}$. 

Important characteristics of the CIR processes can be computed explicitly, (see, e.g., \cite{Dufresne90}). For instance, $y$ is distributed as a non-central chi-squared. The two first order moments of $y$ and $Y$ are respectively given by
\beqn
\E[y_t]&=&y_0e^{-\kappa t}+\beta\left(1-e^{-\kappa t}\right)\;,\nonumber\\
\E[y_t^2]&=&y_0^2e^{-2\kappa t}+\left(\frac{\delta^2}{2\kappa}+\beta\right)\left[2y_0\left(e^{-\kappa t}-e^{-2\kappa t} \right) +\beta\left(1-e^{-\kappa t} \right)^2\right]\;,\nonumber\\
\E[Y_t]&=& t\beta + \frac{(y_0-\beta)}{\kappa}(1-e^{-\kappa t})\;,\nonumber\\
\E[Y_t^2]&=&\left(\frac{y_0-\beta}{\kappa}\right)^2 + \delta^2\left(\frac{y_0}{\kappa^3}-\frac{5\beta}{2\kappa^3}\right) + t\beta\left(\frac{2y_0}{\kappa}-\frac{2\beta}{\kappa}+\frac{\delta^2}{\kappa^2}\right)+t^2\beta^2\nonumber\\
&& +\; e^{-\kappa t}\left[-2\left(\frac{y_0-\beta}{\kappa}\right)^2+\frac{2\beta\delta^2}{\kappa^3} +t\left( -\frac{2y_0\beta}{\kappa}+\frac{2\beta^2}{\kappa}-\frac{2y_0\delta^2}{\kappa^2}+\frac{2\beta\delta^2}{\kappa^2}\right)\right]\nonumber\\
&& +\; e^{-2\kappa t}\left[\left(\frac{y_0-\beta}{\kappa}\right)^2-\frac{y_0\delta^2}{\kappa^3}+\frac{\beta\delta^2}{2\kappa^3} \right]\;.\nonumber
\eeqn

In contrast with the Vasicek model, the variance of the CIR is not always increasing monotonously with time; it depends on the parameters. However, the variance of the integrated CIR is increasing. These properties are proven in the next lemma, and will be central in the proof of Theorem~\ref{th:th2}.

\begin{lemma}\label{Lem:VarCIR}
Let $y$ be a CIR process and $Y$ its time integral. 
Then,

\beqn
%v^y(t)&=&y_0\frac{\delta^2}{\kappa}(e^{-\kappa t}-e^{-2\kappa t})+\beta\frac{\delta^2}{2\kappa}(1-e^{-\kappa t})^2\;,
v^y(t)&=&\frac{\delta^2}{\kappa}\left[y_0(e^{-\kappa t}-e^{-2\kappa t})+\frac{\beta}{2\kappa}(1-e^{-\kappa t})^2\right]\;,\label{eq:VarCIR}\\
%v^Y(t)&=&-\frac{\delta^2}{\kappa^3}\left(y_0-\beta/2\right)e^{-2\kappa t}+2\beta\frac{\delta^2}{\kappa^3}e^{-\kappa t}-2t\frac{\delta^2}{\kappa^2}(y_0-\beta)e^{-\kappa t}+t\beta\frac{\delta^2}{\kappa^2}+\frac{\delta^2}{\kappa^3}\left(y_0-5\beta/2\right)\;.\nonumber\\
v^Y(t)&=&%\frac{\delta^2}{\kappa^3}\left[-\left(y_0-\beta/2\right)e^{-2\kappa t}+2\beta e^{-\kappa t}-2\kappa t(y_0-\beta)e^{-\kappa t}+t\kappa\beta+\left(y_0-5\beta/2\right)\right]\;.\nonumber
\frac{\delta^2}{\kappa^3}\left[\left(2\beta -2\kappa t(y_0-\beta)-\left(y_0-\beta/2\right)e^{-\kappa t}\right)e^{-\kappa t}+t\kappa\beta+\left(y_0-5\beta/2\right)\right]\;.\label{eq:VarIntCIR}
\eeqn
The function $v^y(t)$ is increasing if $\beta\geq y_0$. Otherwise, it is first increasing up to a time $t^\star$, and then decreasing on $(t^\star,\infty)$. By contrast, $v^Y(t)$ is always increasing.
\end{lemma}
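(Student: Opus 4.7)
The plan is first to derive the closed-form expressions \eqref{eq:VarCIR} and \eqref{eq:VarIntCIR} by straightforward algebra: combine the second-moment formulas for $y_t$ and $Y_t$ listed just above with the corresponding mean formulas through $v^y(t)=\E[y_t^2]-\E[y_t]^2$ and $v^Y(t)=\E[Y_t^2]-\E[Y_t]^2$. Both simplifications are mechanical; the main book-keeping effort is in collecting the coefficients of $e^{-\kappa t}$, $e^{-2\kappa t}$, $t$, $te^{-\kappa t}$ and the constant term in the expression for $v^Y(t)$.

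To establish the monotonicity of $v^y$, I would differentiate \eqref{eq:VarCIR} and factor the derivative as
$$\frac{d}{dt}v^y(t)=\delta^2 e^{-\kappa t}\bigl[(2y_0-\beta)e^{-\kappa t}+(\beta-y_0)\bigr]=:\delta^2 e^{-\kappa t}g(t).$$
Note that $g(0)=y_0>0$, $\lim_{t\to\infty}g(t)=\beta-y_0$, and $g'(t)=-\kappa(2y_0-\beta)e^{-\kappa t}$ has constant sign on $[0,\infty)$, so $g$ itself is monotonic. If $\beta\geq y_0$, a brief sub-case split on the sign of $2y_0-\beta$ shows that $g$ remains strictly positive throughout, hence $v^y$ is strictly increasing. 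If $\beta<y_0$, then $g$ decreases strictly from the positive value $y_0$ to the negative value $\beta-y_0$, so it vanishes at a unique $t^\star>0$ obtained explicitly by solving $g(t^\star)=0$, namely $t^\star=\kappa^{-1}\ln\bigl((2y_0-\beta)/(y_0-\beta)\bigr)$; thus $v^y$ is increasing on $[0,t^\star)$ and decreasing on $(t^\star,\infty)$.

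For $v^Y$, rather than differentiating the unwieldy \eqref{eq:VarIntCIR} by brute force, I would exploit $Y_t=\int_0^t y_u\,du$ and Fubini to write $v^Y(t)=\int_0^t\!\int_0^t\cov(y_u,y_s)\,du\,ds$ and hence
$$\frac{d}{dt}v^Y(t)=2\int_0^t \cov(y_u,y_t)\,du.$$
The Markov property combined with the affine conditional mean $\E[y_t\mid\cF_u]=y_u e^{-\kappa(t-u)}+\beta(1-e^{-\kappa(t-u)})$ yields, in two lines, the identity $\cov(y_u,y_t)=e^{-\kappa(t-u)}v^y(u)$ for $u\leq t$. Therefore
$$\frac{d}{dt}v^Y(t)=2e^{-\kappa t}\int_0^t e^{\kappa u}v^y(u)\,du\geq 0,$$
with strict inequality for $t>0$ since $v^y(u)>0$ for $u>0$ (evident from \eqref{eq:VarCIR}). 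This settles the monotonicity of $v^Y$ globally and uniformly in the parameters, with no case distinction. The main obstacle in the proof is thus the algebraic derivation of \eqref{eq:VarIntCIR}; conceptually, the non-monotonic behaviour of $v^y$ when $y_0>\beta$ is the only subtle point, and the explicit factorisation of $\frac{d}{dt}v^y(t)$ above resolves it cleanly.
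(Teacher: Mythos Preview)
Your proposal is correct. The derivation of \eqref{eq:VarCIR}--\eqref{eq:VarIntCIR} and your treatment of the monotonicity of $v^y$ coincide with the paper's: both differentiate \eqref{eq:VarCIR}, rewrite the derivative as a signed combination of $e^{-\kappa t}$ and $e^{-2\kappa t}$, and locate the unique root when $y_0>\beta$ (your $t^\star=\kappa^{-1}\ln\frac{2y_0-\beta}{y_0-\beta}$ is the paper's $\kappa^{-1}\ln\bigl(1+\frac{y_0}{y_0-\beta}\bigr)$).

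Where you genuinely diverge is the monotonicity of $v^Y$. The paper differentiates \eqref{eq:VarIntCIR} directly, obtains
\[
\frac{d}{dt}v^Y(t)=\frac{\delta^2}{\kappa^2}\Bigl[2y_0e^{-\kappa t}\bigl(e^{-\kappa t}-(1-\kappa t)\bigr)+\beta\bigl(1-(2\kappa t+e^{-\kappa t})e^{-\kappa t}\bigr)\Bigr],
\]
and then argues term-by-term: the $y_0$-term is nonnegative by $e^{-x}\geq 1-x$, and the $\beta$-term is handled by showing that $g(x):=(2x+e^{-x})e^{-x}$ satisfies $g(0)=1$, $g'(x)\leq 0$, hence $g\leq 1$. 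Your route via $\frac{d}{dt}v^Y(t)=2\int_0^t\cov(y_u,y_t)\,du$ together with the Markov identity $\cov(y_u,y_t)=e^{-\kappa(t-u)}v^y(u)$ is cleaner and more conceptual: it bypasses the algebra of \eqref{eq:VarIntCIR} entirely, requires no auxiliary inequalities, and makes the nonnegativity manifest as an integral of a nonnegative function. The paper's approach, on the other hand, has the advantage of yielding an explicit closed form for $\frac{d}{dt}v^Y(t)$ in terms of the parameters alone, which is reused later (e.g.\ in Lemma~\ref{Lem:VarJCIR}) to compare with the JCIR case.
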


\begin{proof}
The computation of the variances is trivial from the first two moments recalled above. The derivative of the variance of the CIR is given by
\begin{eqnarray*}
\frac{d}{dt}v^y(t) &=& y_0\delta^2(-e^{-\kappa t}+2e^{-2\kappa t})+\beta\delta^2e^{-\kappa t}(1-e^{-\kappa t})\\
&=& \delta^2e^{-\kappa t}(\beta-y_0)-\delta^2e^{-2\kappa t}(\beta-2y_0)\;.
\end{eqnarray*}
This expression has a root on the positive half-line at
\begin{equation}%\label{eq:RootVarCir}
t^\star=\frac{1}{\kappa}\ln\left(1+ \frac{y_0}{y_0-\beta}\right)\nonumber
\end{equation}
only if $y_0>\beta$. Otherwise, $v^y(t)$ is always increasing in $t$.
The derivative of the variance of the integrated CIR with respect of time can be written, after some manipulations, as
$$\frac{d}{dt}v^Y(t)=\frac{\delta^2}{\kappa^2}\left[2y_0e^{-\kappa t}(e^{-\kappa t}-(1-\kappa t))+\beta(1-(2\kappa t +e^{-\kappa t})e^{-\kappa t})\right]\;.$$
Because $e^{-x}\geq 1-x$ for all $x\geq 0$ and $y_0,\kappa$ are positive constants, the first term is positive for all $t\geq 0$. On the other hand,  $\beta>0$, and it is enough to check that $1-g(\kappa t)\geq 0$ for all $t\geq 0$, with $g(x):=(2x+e^{-x})e^{-x}$. Clearly, $g(0)=1$ and $g'(x)=2e^{-x}(1-x-e^{-x})\leq 0$ for all $x\geq 0$. Hence, $1-g(\kappa t)\geq 0$ for all $t\geq 0$. 
\end{proof}
%is strictly increasing with respect to time (see Theorem \ref{th:th2}).

\subsubsection{JCIR model}\label{sec:AppJcir}

The characteristics of the JCIR can be obtained by adjusting those of the corresponding CIR, i.e., with same initial value and diffusion parameters. We note $z$ the former and $y$ the latter, and similarly for their integrated versions ($Z$ and $Y$, respectively). Hence, if the parameter set for the CIR ($y,Y$) is $\Xi_0=(\kappa,\beta,\delta,0,0,y_0)$, the parameter set of the corresponding JCIR ($z,Z$) is $\Xi_0=(\kappa,\beta,\delta,\alpha,\omega,z_0)$ with $z_0=y_0$ and $\alpha,\omega\geq 0$. The functions associated to the discount curve are given by %(see, e.g., \cite{Brigo06}) %Let us note $y$ and $z$ the CIR and JCIR with same initial value ($y_0=z_0$) and same diffusion parameters $(\kappa,\beta,\delta)$. The jump rate ($\omega$) and size ($\alpha$) parameters of the $y$ process are both 0, but not necessarily for $z$. We note their integrated versions $Y,Z$, respectively. %
\beqn
A^z_s(t;\Xi)&=&A^y_s(t;\Xi) + {\frac{\alpha\omega}{\delta^2/2-\kappa\alpha-\alpha^2}}\ln\frac{2\gamma e^{\frac{\gamma+\kappa+2\alpha}{2}(t-s)}}{2\gamma+(\kappa+\gamma+2\alpha)(e^{(t-s)\gamma }-1)}\;,\nonumber\\%\label{eq:Ajcir}\\
B^z_s(t;\Xi)&=&B^y_s(t;\Xi)\;.\nonumber%\label{eq:Bjcir}
\eeqn
%Similarly, a simple ODE leads to the expressions of the expectations of the JCIR as a correction applied to that of the CIR. It suffices to replace $\beta$ by $\beta+ \frac{\omega\alpha}{\kappa}$ in the CIR expectations:
%\beqn
%\E[z_t]&=&\E[y_t]+\frac{\omega\alpha}{\kappa}(1-e^{-\kappa t})\nonumber\\
%\E[Z_t]&=&\E[Y_t]+\frac{\omega\alpha}{\kappa^2}\left(\kappa t -(1-e^{-\kappa t})\right)\;. \nonumber
%\eeqn
%The variances are more difficult to obtain. 
From the above functions, it is easy to see that the forward curve associated to this model reads as %given by \cite{Brigo06}
\beq
f_s^{\mathrm{JCIR}}(t) := f^{\rm CIR}_s(t)+\frac{2\omega\alpha(e^{(t-s)\gamma} - 1)}{2\gamma + (\kappa+\gamma+2\alpha)(e^{(t-s)\gamma} - 1)}\;,\label{eq:fjcirt}
\eeq
where $f^{\rm CIR}_s(t)$ is given in \eqref{eq:fcirt}. For every valid parameters, $f_s^{\mathrm{JCIR}}(t) \geq f^{\rm CIR}_s(t)$ for all $t\geq s$. Regarding the moments, we have the following result.

\begin{lemma}\label{Lem:VarJCIR} Let $y$ (resp. $Y$) be a CIR (resp. integrated CIR) and $z$ (resp. $Z$) be a JCIR (resp. integrated JCIR) with same initial value, same diffusion parameters but with jumps governed by $(\omega,\alpha)$. Then,
\beqn
\E[z_t]&=&\E[y_t]+\frac{\omega\alpha}{\kappa}(1-e^{-\kappa t})\;,\nonumber\\
\E[Z_t]&=&\E[Y_t]+\frac{\omega\alpha}{\kappa^2}\left(\kappa t -(1-e^{-\kappa t})\right)\;. \nonumber\\
v^z(t)&=&v^y(t)+\omega\alpha\left[ \frac{\delta^2}{2}\left(\frac{1-e^{-\kappa t}}{\kappa}\right)^2+\alpha\frac{1-e^{-2\kappa t}}{\kappa}\right]\;,\nonumber\\%\label{eq:VarJCIR}\\
v^Z(t)&=&v^Y(t)+%\frac{\omega\alpha}{\kappa^3}\left[\delta^2\left(\frac{e^{-2\kappa t}}{2\kappa}+2(1/\kappa+t)e^{-\kappa t} + t -\frac{5}{2\kappa}\right) + 2\alpha\left(\frac{1-e^{-2\kappa t}}{2}-2(1-e^{-\kappa t})+\kappa t\right)\right]\;.\nonumber\\
\frac{\alpha\omega}{\kappa^3}\left[\frac{1-e^{-\kappa t}}{\kappa}\left(\xi(3-e^{-\kappa t})-4\delta^2\right)+2\delta^2t e^{-\kappa t}+t\left(2\alpha\kappa+\delta^2\right)\right]\;,\nonumber
%\label{eq:VarIntJCIR}
\eeqn
where $\xi:=\delta^2/2-\alpha\kappa$. The function $v^z(t)$ is increasing with respect to $t$ unless $y_0>\beta+\omega\alpha/\kappa$, in which case it is first increasing up to a time $t_1$, and then decreasing on $(t_1,\infty)$. Moreover, $v^z(t)\geq v^y(t)$, $v^Z(t)\geq v^Y(t)$ and $v^Z(t)$ is always increasing.
\end{lemma}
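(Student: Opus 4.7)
The plan is to exploit the affine jump-diffusion structure through its infinitesimal generator and to subtract the CIR counterparts so as to isolate the jump contribution. Applied to a smooth $f$, the generator of $z$ reads
\begin{equation*}
\cA f(x) = \kappa(\beta-x)f'(x) + \tfrac{\delta^2}{2}x f''(x) + \omega\int_0^\infty [f(x+\zeta) - f(x)]\,\tfrac{1}{\alpha}e^{-\zeta/\alpha}\,d\zeta.
\end{equation*}
Taking $f(x) = x$, the jump integral collapses to $\omega\alpha$, and Dynkin's formula gives $\frac{d}{dt}\E[z_t] = \kappa(\beta - \E[z_t]) + \omega\alpha$. Subtracting the CIR mean ODE, the difference $m(t) := \E[z_t] - \E[y_t]$ solves $m' = -\kappa m + \omega\alpha$, $m(0) = 0$, which integrates to the stated expression. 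A further time-integration then delivers $\E[Z_t] - \E[Y_t]$.

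For the variances, choose $f(x) = x^2$. Since the moments of the jump size satisfy $\int_0^\infty \zeta^k \alpha^{-1}e^{-\zeta/\alpha}\,d\zeta = k!\,\alpha^k$, the jump term becomes $2\omega\alpha x + 2\omega\alpha^2$, whence $\frac{d}{dt}\E[z_t^2] = 2(\kappa\beta + \omega\alpha)\E[z_t] + \delta^2\E[z_t] - 2\kappa\E[z_t^2] + 2\omega\alpha^2$. Combining with the mean ODE gives $\frac{d}{dt}v^z = -2\kappa v^z + \delta^2\E[z_t] + 2\omega\alpha^2$, and subtracting the CIR analogue $\frac{d}{dt}v^y = -2\kappa v^y + \delta^2\E[y_t]$ leaves a linear ODE for $\Delta(t) := v^z(t) - v^y(t)$, namely $\Delta' = -2\kappa\Delta + \delta^2 m(t) + 2\omega\alpha^2$ with $\Delta(0) = 0$. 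Variation of parameters splits the answer into two pieces: the constant forcing $2\omega\alpha^2$ integrates to $\omega\alpha^2(1-e^{-2\kappa t})/\kappa$, and the forcing $\delta^2 m(t)$ integrates, after an elementary computation, to $\omega\alpha(\delta^2/2)((1-e^{-\kappa t})/\kappa)^2$, which matches the displayed formula.

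For $v^Z(t)$ I would use the affine mean-reversion identity $\E[z_u \mid \cF_s] = e^{-\kappa(u-s)}z_s + (\kappa\beta + \omega\alpha)(1-e^{-\kappa(u-s)})/\kappa$ valid for $s \le u$ (a direct consequence of the mean ODE together with the Markov property), which yields $\cov(z_s, z_u) = e^{-\kappa(u-s)}v^z(s)$, and analogously for $y$. Hence
\begin{equation*}
v^Z(t) - v^Y(t) = 2\int_0^t\!\int_0^u e^{-\kappa(u-s)}\Delta(s)\,ds\,du,
\end{equation*}
and substituting the explicit $\Delta(s)$ then carrying out the two integrations produces the stated polynomial-exponential expression in $\xi = \delta^2/2 - \alpha\kappa$. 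This double integration is the only genuinely tedious step; each antiderivative is elementary but the bookkeeping is heavy, and would be the main obstacle in writing out a full proof.

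The monotonicity statements then follow readily. Each summand in $\Delta$ is manifestly non-negative, so $v^z(t) \ge v^y(t)$; the integral representation gives $v^Z(t) \ge v^Y(t)$, and differentiating under the integral sign gives $\frac{d}{dt}v^Z(t) = 2\int_0^t e^{-\kappa(t-s)}v^z(s)\,ds \ge 0$, so $v^Z$ is always increasing. For $v^z$ itself, differentiating the explicit formula and collecting like exponentials yields $\frac{d}{dt}v^z(t) = e^{-\kappa t}\bigl\{\delta^2(\tilde\beta - y_0) + e^{-\kappa t}[\delta^2(2y_0 - \tilde\beta) + 2\omega\alpha^2]\bigr\}$ with $\tilde\beta := \beta + \omega\alpha/\kappa$. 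The bracket is affine in $e^{-\kappa t}$; at $t = 0$ it equals $\delta^2 y_0 + 2\omega\alpha^2 > 0$, while as $t \to \infty$ it tends to $\delta^2(\tilde\beta - y_0)$. Consequently it stays strictly positive whenever $y_0 \le \tilde\beta$, and otherwise admits a unique positive zero $t_1$ (obtained by solving a linear equation in $e^{-\kappa t}$), delivering the claimed dichotomy: $v^z$ strictly increasing on $(0,t_1)$ and strictly decreasing on $(t_1,\infty)$.
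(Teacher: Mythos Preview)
Your proof is correct and follows a genuinely different route from the paper's. The paper writes down the variation-of-constants representation $z_t = z_0e^{-\kappa t} + \beta(1-e^{-\kappa t}) + \delta\int_0^t e^{-\kappa(t-s)}\sqrt{z_s}\,dW_s + \int_0^t e^{-\kappa(t-s)}\,dJ_s$ (and the analogous double-integral form for $Z_t$), then applies It\^o isometry together with the orthogonality of the Brownian and compensated-jump integrals to compute the variances directly; the integrals appearing in $v^Z-v^Y$ are evaluated by brute force. You instead derive moment ODEs from the generator via Dynkin's formula and use the affine conditional-mean identity to get the covariance kernel $\cov(z_s,z_u)=e^{-\kappa(u-s)}v^z(s)$, which yields the double-integral representation of $v^Z-v^Y$ and, by differentiation under the integral, the clean one-line argument $\tfrac{d}{dt}v^Z(t)=2\int_0^t e^{-\kappa(t-s)}v^z(s)\,ds\ge 0$. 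The paper's route is perhaps more concrete (everything is reduced to explicit integrals of exponentials), whereas your generator/covariance approach is more structural and gives the monotonicity of $v^Z$ essentially for free, without having to invoke the separate CIR lemma that $v^Y$ is increasing as the paper does. Both approaches leave the same tedious integration to reach the closed form for $v^Z-v^Y$; your honest flagging of that step is appropriate. The analysis of the sign of $\tfrac{d}{dt}v^z$ via the affine-in-$e^{-\kappa t}$ bracket is equivalent to the paper's root computation and equally valid.
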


\begin{proof}
%Simple ODEs lead to the expressions of the expectations of the JCIR: it suffices to replace $\beta$ by $\beta+ \frac{\omega\alpha}{\kappa}$ in the CIR expectations.\\
Applying Ito's lemma we can solve the JCIR SDE \eqref{eq:JCIR} by
\begin{equation}\label{eq:jcir:sol}
    z_t = z_0e^{-\kappa t} + \beta(1-e^{-\kappa t})+\delta\int_0^te^{-\kappa(t-s)}\sqrt{z_s}dW_s + \int_0^te^{-\kappa(t-s)}dJ_s\;,
\end{equation}
and find the SDE governing the integrated JCIR process
\begin{equation}\label{eq:intjcir:sol}
    Z_t = t\beta + \frac{(z_0-\beta)}{\kappa}(1-e^{-\kappa t}) + \delta\int_0^t\int_0^se^{-\kappa(s-u)}\sqrt{z_u}dW_u ds + \int_0^t\int_0^se^{-\kappa(s-u)}dJ_u ds\;.
\end{equation}
From \eqref{eq:jcir:sol}, we can write
\begin{eqnarray*}
\mathbb{E}[z_t] &=& z_0e^{-\kappa t} + \beta(1-e^{-\kappa t}) + \mathbb{E}\left[\int_0^te^{-\kappa(t-s)}dJ_s\right]\\
&=&\mathbb{E}[y_t]+\omega\alpha\int_0^te^{-\kappa(t-s)}ds\\
&=&\mathbb{E}[y_t]+\frac{\omega\alpha}{\kappa}(1-e^{-\kappa t})\;.
\end{eqnarray*}
Using Ito isometry,
\begin{eqnarray*}
v^z(t)&=&\mathbb{E}[(z_t-\mathbb{E}[z_t])^2]\\
&=&\mathbb{E}\left[\left(\delta\int_0^te^{-\kappa(t-s)}\sqrt{z_s}dW_s + \int_0^te^{-\kappa(t-s)}dJ_s-\omega\alpha\int_0^te^{-\kappa(t-s)}ds\right)^2\right]\\
&=&\mathbb{E}\left[\left(\delta\int_0^te^{-\kappa(t-s)}\sqrt{z_s}dW_s\right)^2\right] + \mathbb{E}\left[\left(\int_0^te^{-\kappa(t-s)}(dJ_s-\omega\alpha ds)\right)^2\right]\\
&=&v^y(t)+\frac{\delta^2\omega\alpha}{\kappa}\int_0^t\e^{-2\kappa(t-s)}(1-e^{-\kappa s})ds + 2\omega\alpha^2\int_0^te^{-2\kappa(t-s)}ds\\
&=&v^y(t)+\omega\alpha\left[ \frac{\delta^2}{2}\left(\frac{1-e^{-\kappa t}}{\kappa}\right)^2+\alpha\frac{1-e^{-2\kappa t}}{\kappa}\right]\;.
\end{eqnarray*}
Using a similar procedure applied to \eqref{eq:intjcir:sol}  combined with Fubini's theorem, one can derive the expectation and variance of the integrated JCIR : 

\begin{eqnarray*}
\mathbb{E}[Z_t] &=& t\beta + \frac{(z_0-\beta)}{\kappa}(1-e^{-\kappa t}) + \mathbb{E}\left[\int_0^t\int_0^se^{-\kappa(s-u)}dJ_u ds\right]\\
&=&\mathbb{E}[Y_t]+\mathbb{E}\left[\int_0^t\int_s^te^{-\kappa u}due^{\kappa s}dJ_s\right]\\
&=&\mathbb{E}[Y_t]+\frac{\omega\alpha}{\kappa}\int_0^t(1-e^{-\kappa(t-s)})ds\\
&=&\mathbb{E}[Y_t]+\frac{\omega\alpha}{\kappa^2}\left(\kappa t -(1-e^{-\kappa t})\right)
\end{eqnarray*}
and 
\begin{eqnarray*}
v^Z(t)&=&\mathbb{E}[(Z_t-\mathbb{E}[Z_t])^2]\\
&=&\mathbb{E}\left[\left(\delta\int_0^t\int_0^se^{-\kappa(s-u)}\sqrt{z_u}dW_uds + \int_0^t\int_0^se^{-\kappa(s-u)}dJ_uds-\mathbb{E}\left[\int_0^t\int_0^se^{-\kappa(s-u)}dJ_u ds\right]\right)^2\right]\\
&=&\mathbb{E}\left[\left(\frac{\delta}{\kappa}\int_0^t(1-e^{-\kappa(t-s)})\sqrt{z_s}dW_s + \frac{1}{\kappa}\int_0^t(1-e^{-\kappa(t-s)})dJ_s-\frac{\omega\alpha}{\kappa}\int_0^t(1-e^{-\kappa(t-s)})ds\right)^2\right]\\
&=&\mathbb{E}\left[\left(\frac{\delta}{\kappa}\int_0^t(1-e^{-\kappa(t-s)})\sqrt{z_s}dW_s\right)^2\right] + \mathbb{E}\left[\left(\frac{1}{\kappa}\int_0^t(1-e^{-\kappa(t-s)})(dJ_s-\omega\alpha ds)\right)^2\right]\\
&=&v^Y(t)+\frac{\delta^2\omega\alpha}{\kappa^2}\int_0^t(1-e^{-\kappa(t-s)})^2(1-e^{-\kappa s})ds + \frac{2\omega\alpha^2}{\kappa^2}\int_0^t(1-e^{-\kappa(t-s)})^2ds\\
&=&v^Y(t)+\frac{\alpha\omega}{\kappa^3}\left[\frac{1-e^{-\kappa t}}{\kappa}\left((\delta^2/2-\alpha\kappa)(3-e^{-\kappa t})-4\delta^2\right)+2\delta^2t e^{-\kappa t}+t\left(2\alpha\kappa+\delta^2\right)\right]\;.
\end{eqnarray*}
Notice that the above results can be obtained using another procedure, namely by deriving once or twice the characteristic function  $\Psi_t(u,v)=\E[e^{uz_t+vZ_t}]$ of $(z_t,Z_t)$ which can be recovered from eq. (A.1) in \cite{Duffie01}. This procedure is however much heavier. \footnote{Be aware that there are typos in this formula. The correct expression can be found in eq. (B.9) in the draft version of Duffie and G\^arleanu's paper, that is available for download on the authors' webpage.} 

The derivative of the variance of the JCIR is given by
\begin{eqnarray*}
\frac{d}{dt}v^z(t) &=& y_0\delta^2(-e^{-\kappa t}+2e^{-2\kappa t})+\beta\delta^2e^{-\kappa t}(1-e^{-\kappa t})+\frac{\delta^2\omega\alpha}{\kappa}e^{-\kappa t}(1-e^{-\kappa t})+2\omega\alpha^2e^{-2\kappa t}\\
&=& \delta^2e^{-\kappa t}(\beta-y_0+\omega\alpha/\kappa)-\delta^2e^{-2\kappa t}(\beta-2y_0+\omega\alpha/\kappa-2\omega\alpha^2/\delta^2)\;.
\end{eqnarray*}
This expression has a root on the positive half-line at
\begin{equation}\label{eq:RootVarJCir}
t_1:=\frac{1}{\kappa}\ln\left(1+ \frac{y_0+2\omega\alpha^2/\delta^2}{y_0-\beta-\omega\alpha/\kappa}\right)
\end{equation}
only if $y_0>\beta+\omega\alpha/\kappa$. Otherwise, $v^y(t)$ is always increasing in $t$.

It seems intuitive that the variance of the JCIR cannot be smaller than that of the CIR, and similarly for the integrated versions. Because of the mean reverting effect however, this needs to be confirmed. 
It is obvious that $v^z(t)-v^y(t)\geq 0$. The term associated to $v^Z(t)-v^Y(t)$ starts at zero (since obviously $v^Z(0)=v^Y(0)=0$). This difference is increasing:
\begin{equation*}
    \frac{d}{dt} (v^Z(t)-v^Y(t)) =   \frac{\delta^2\omega\alpha}{\kappa^3}\left(1-(2\kappa t + e^{-\kappa t})e^{-\kappa t}\right) + \frac{2\omega\alpha^2}{\kappa^2}\left(1-e^{-\kappa t}\right)^2\geq 0\;.
\end{equation*}
Indeed, the second term is obviously positive and the first term takes the form $\frac{\delta^2\omega\alpha}{\kappa^3}(1-g(\kappa t))$ where the function $g(x)=(2x + e^{-x})e^{-x}$ is shown to be bounded by 1 for $x\geq 0$ in the proof of Lemma \ref{Lem:VarCIR}. This shows that $v^Z(t)\geq v^Y(t)$. Because both $v^Y(t)$ (from Lemma \ref{Lem:VarCIR}) and $v^{Z}(t)-v^Y(t)$ are increasing; $v^Z(t)$ is itself increasing. 
\end{proof}

\subsection{Some special cases where $P^{x+y}$ is a discount curve}\label{sec:AppDiscCurve}%\label{sec:DiscSum}

Observe first that $P^{x+y}$ is a discount curve whenever $P^x,P^y$ are in the case where $x,y$ are independent since then $P^{x+y}_s=P^{x}_sP^{y}_s$, and the product of two time-$s$ discount curves is itself a time-$s$ discount curve. The next lemma provides sufficient conditions on $y$ for $P^{y}$ to be a discount curve in the general case. 

\begin{lemma}\label{lemma:DscCurveSum}Let $T$ be a fixed time horizon. Then, $P^y$ is a discount curve whenever $y$ is positive and $\sup_{t\in [0,T]} y_t$ is integrable.
\end{lemma}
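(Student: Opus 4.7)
The plan is to verify the three requirements of Definition~\ref{def:DiscCurve} for $P^y_s$: positivity, the initial condition $P^y_s(s)=1$, and differentiability in $t$ on $(s,\infty)$. The first two are essentially immediate from the definition $P^y_s(t)=\E[e^{-\int_s^t y_u\,du}\mid \cF_s]$. Since $y$ is positive, $\int_s^t y_u\,du\ge 0$, so the integrand lies in $(0,1]$; this gives $0<P^y_s(t)\le 1$. Evaluating at $t=s$ yields $e^0=1$, hence $P^y_s(s)=1$.

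The main work is differentiability. I would fix $t\in(s,T]$ and study the difference quotient
\[
\frac{P^y_s(t+h)-P^y_s(t)}{h}=\E\!\left[\left.\frac{e^{-\int_s^{t+h}y_u\,du}-e^{-\int_s^{t}y_u\,du}}{h}\right|\cF_s\right].
\]
Pointwise in $\omega$, the map $t\mapsto e^{-\int_s^t y_u\,du}$ is absolutely continuous with derivative $-y_t\,e^{-\int_s^t y_u\,du}$ a.s., since $y$ has integrable paths. So the inner difference quotient converges a.s.\ to $-y_t\,e^{-\int_s^t y_u\,du}$ as $h\to 0$.

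To swap limit and expectation, I would apply dominated convergence. Writing the difference as an integral, for $|h|$ small enough that $t+h\in[s,T]$,
\[
\left|\frac{e^{-\int_s^{t+h}y_u\,du}-e^{-\int_s^{t}y_u\,du}}{h}\right|=\left|\frac{1}{h}\int_t^{t+h}y_u\,e^{-\int_s^{u}y_v\,dv}\,du\right|\le \sup_{u\in[0,T]}y_u,
\]
using the bound $e^{-\int_s^u y_v\,dv}\le 1$ and positivity of $y$. The majorant $\sup_{u\in[0,T]}y_u$ is integrable by hypothesis, hence also $\cF_s$-conditionally integrable. Dominated convergence then gives
\[
\frac{d}{dt}P^y_s(t)=-\E\!\left[\left.y_t\,e^{-\int_s^t y_u\,du}\right|\cF_s\right],
\]
which in particular exists and is finite for every $t\in(s,T]$. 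Combined with positivity and $P^y_s(s)=1$, this shows $P^y_s$ is a time-$s$ discount curve up to horizon $T$. The only subtle point is justifying the integrable envelope for the difference quotient, which is precisely where the assumption on $\sup_{t\in[0,T]}y_t$ enters; the rest is routine.
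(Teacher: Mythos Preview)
Your proof is correct and takes essentially the same approach as the paper: both establish differentiability by dominating the pathwise derivative (equivalently, the difference quotient) of $e^{-\int_s^t y_u\,du}$ by the integrable majorant $\sup_{u\in[0,T]} y_u$ and then interchanging limit and expectation. The paper packages this interchange by invoking a standard ``differentiation under the expectation'' lemma (their Lemma~\ref{lemma:SwapED}), while you unfold that lemma directly via dominated convergence on the difference quotient; the underlying argument is identical.
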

\begin{proof} 
%See Section \ref{sec:AppDiscCurve}.

%\subsection{Proof of Lemma \ref{lemma:DscCurveSum}}\label{sec:AppDiscCurve}
%\proof{Proof of Lemma \ref{lemma:DscCurveSum}}
We start with the lemma giving sufficient conditions to swap the expectation and derivative operators which can be found in, e.g., \cite{Gill18}. 

\begin{lemma}\label{lemma:SwapED}
Let $I$ be a nontrivial interval of $\mathbb{R}$, $\cB(I)$ the Borel set of $I$ and $\Psi : I\times \Omega\to \mathbb{R}\;, (x,\omega)\mapsto \Psi(x,\omega)$ be a $\cB(I)\otimes \mathcal{G}$-measurable function. If the function $\Psi$ satisfies:
\begin{itemize}
\item[(i)] For every $x\in I$, the random variable $\Psi(x,\,\omega)\in L^1$, 
\item[(ii)] $\Psi_x(x,\omega):=\frac{\partial\Psi(x,\omega)}{\partial x}$ exists for all $x\in I$ a.s.,
\item[(iii)] There exists $Z\in L^1$ 
such that for every $x\in I$,
$$\quad \left\lvert \Psi_x(x,\omega) \right\lvert\leq Z(\omega)\;a.s.\;.$$
\end{itemize}
Then the function $\psi(x):=\mathbb{E}[\Psi(x,\,\omega)]$ is defined and differentiable at every $x\in I$ with derivative
$$\frac{d\psi(x)}{dx}=\mathbb{E}\left[\Psi_x(x,\omega)\right]\;.$$
\end{lemma}

We now proceed with the proof of Lemma \ref{lemma:DscCurveSum}.\\
Let us fix $t\leq T$. Hence,
$$\left\lvert\frac{d}{d t}e^{-\int_0^t y_u du}\right\rvert= |y_te^{-\int_0^t y_u du}|\leq y_t\leq \sup_{t\in [0,T]} y_t\;$$
for all $t\in[0,T]$. Noting that $\sup_{t\in [0,T]} y_t$ is integrable, one can use Lemma \ref{lemma:SwapED} with $\Psi(t,w)\leftarrow e^{-\int_0^t y_u(w) du}$ and $Z(\omega)\leftarrow S_T^y:=\sup_{t\in [0,T]} y_t$, justifying the swap between the derivative and expectation operators:
\beq
\frac{d }{d t} P^y(t)=\frac{d}{d t}\E\left[e^{-\int_0^t y_u du}\right]=\E\left[\frac{d}{d t}e^{-\int_0^t y_u du}\right]=-\E\left[y_te^{-\int_0^t y_u du}\right]\;,\nonumber
\eeq
where the right-hand side is bounded by the expectation of $Z$, which is integrable. This concludes the proof.
\end{proof}
%\begin{remark}
%The derivative of $P^{x+y}$ is proven to exist similarly whenever $x$ and $y$ satisfy all the assumptions of Lemma \ref{lem:Disount}.
%\end{remark}
In order for thE assumption about the integrability of the running supremum of $y$ to be useful in practice, it needs to be ``checkable'. Hence, we need to give simpler sufficient conditions (e.g., based on the coefficients of the SDE of $y$) that would guarantee that $S_T^y:=\sup_{t\in [0,T]} |y_t|$ satisfies $\E[S_T^y]<\infty$.
\begin{lemma}\label{lemma:DscCurveGen} Let $W$ be a Brownian motion, $J$ a compound Poisson process with constant jump intensity $\omega$ and the jump sizes are exponentially distributed with mean $\alpha$, and $y$ solving
\[
dy_t=\mu(t,y_t)dt+\sigma(t,y_t)dW_t + dJ_t
\]
where $y_0$ is positive, $\E[\int_0^T |\mu(t,y_t)|dt]<\infty$ and $\E[\int_0^T \sigma^2(t,y_t)dt]<\infty$. Then, 
\[
\E[|S^y_T|]=\E[S^y_T]<\infty~~\textrm{ where }S^y_T:=\sup_{t\in[0,T]} |y_t|\;.
\]
\end{lemma}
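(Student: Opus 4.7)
The plan is a straightforward decomposition of $y$ into its drift, diffusion and jump components, followed by estimating the supremum of each piece separately. First, I would write the integral form
\[
y_t = y_0 + \int_0^t \mu(s,y_s)\,ds + M_t + J_t, \qquad M_t := \int_0^t \sigma(s,y_s)\,dW_s,
\]
and apply the triangle inequality to obtain
\[
S^y_T \le |y_0| + \int_0^T |\mu(s,y_s)|\,ds + \sup_{t\in[0,T]} |M_t| + \sup_{t\in[0,T]} |J_t|.
\]
Since $S^y_T \ge 0$, we have $\E[|S^y_T|] = \E[S^y_T]$, so it suffices to show that each of the four terms on the right has finite expectation.

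For the drift term, the bound $\sup_{t\in[0,T]}|\int_0^t \mu(s,y_s)\,ds| \le \int_0^T |\mu(s,y_s)|\,ds$ together with the hypothesis $\E[\int_0^T |\mu(s,y_s)|\,ds] < \infty$ handles it directly. For the diffusion term, the hypothesis $\E[\int_0^T \sigma^2(s,y_s)\,ds] < \infty$ ensures that $M$ is a square-integrable $\FF$-martingale, so by Doob's $L^2$ maximal inequality and the Itô isometry,
\[
\E\left[\sup_{t\in[0,T]} M_t^2\right] \le 4\,\E[M_T^2] = 4\,\E\left[\int_0^T \sigma^2(s,y_s)\,ds\right] < \infty,
\]
and then Jensen's inequality gives $\E[\sup_{t\in[0,T]} |M_t|] \le \bigl(\E[\sup_{t\in[0,T]} M_t^2]\bigr)^{1/2} < \infty$.

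The jump piece is the easiest: since the $\zeta_i$'s are exponential (hence almost surely positive), $J$ is a non-negative, non-decreasing process with $J_0 = 0$, so $\sup_{t\in[0,T]} |J_t| = J_T = \sum_{i=1}^{N_T} \zeta_i$. By Wald's identity (or direct computation using the independence of $N_T$ and $(\zeta_i)_i$), $\E[J_T] = \E[N_T]\,\E[\zeta_1] = \omega T \alpha < \infty$. Summing the four bounds yields $\E[S^y_T] < \infty$. I do not foresee a serious obstacle here; the only delicate point is checking that $M$ is a genuine (not merely local) martingale so that Doob's inequality applies, which is precisely what the $L^2$-integrability hypothesis on $\sigma$ provides.
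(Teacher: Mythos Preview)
Your proof is correct and follows the same overall decomposition strategy as the paper: split $y$ into drift, diffusion, and jump pieces and bound the running supremum of each. Where you differ is in the execution of the last two pieces, and in both cases your argument is cleaner. For the diffusion term, the paper invokes the $L\log L$ form of Doob's inequality,
\[
\E\Bigl[\sup_{t\le T}|M_t|\Bigr]\le \frac{e}{e-1}\bigl(1+\E[\,|M_T|\log^+|M_T|\,]\bigr),
\]
and then bounds $x\log^+x$ by $\max(e^{-1},x^2)$ to reach $\E[M_T^2]$; your direct use of the $L^2$ Doob maximal inequality followed by Jensen is shorter and uses exactly the same hypothesis. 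For the jump term, the paper compensates $J$ and applies Doob's inequality to the martingale $J_t-\omega\alpha t$; you instead exploit the monotonicity of $J$ (the exponential jump sizes are almost surely positive) to identify $\sup_{t\le T}|J_t|=J_T$ and conclude by Wald's identity, which is more direct. Both routes work; yours avoids an unnecessary detour through maximal inequalities.
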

\begin{proof} 
The solution of the SDE is
\[
y_t=\underbrace{y_0+\int_0^t \mu(s,y_s)ds}_{A_t} + \underbrace{\int_0^t \sigma(s,y_s)dW_s}_{M_t} + J_t~\Rightarrow ~|y_t|\leq|A_t| + |M_t| + J_t\;,
\]
showing that
\[
S_T^y\leq \underbrace{\sup_{t\in [0,T]} |A_t|}_{S^A_T} + \underbrace{\sup_{t\in [0,T]} |M_t|}_{S^M_T} +  \underbrace{\sup_{t\in [0,T]} J_t}_{S^J_T}\;.
\]
We show in the sequel that $S^A_T$, $S^M_T$ and $S^J_T$ are integrable. This would conclude the proof since it would lead to
\[
\E[|S^y_T|]=\E[S^y_T]\leq \E[S^A_T] + \E[S^M_T] + \E[S^J_T]<\infty\;.
\]
Suppose that $\E[\int_0^T |\mu(s,y_s)| ds]<\infty$. Then,
\[
S^A_T=\sup_{t\in [0,T]}|y_0+\int_0^t \mu(s,y_s)ds|\leq |y_0| + \sup_{t\in [0,T]} \int_0^t |\mu(s,y_s)|ds\leq |y_0|+\int_0^T |\mu(s,y_s)|ds\;.
\]
showing that $\E[|S^A_T|]=\E[S^A_T]\leq |y_0|+\E[\int_0^T |\mu(s,y_s)|ds]<\infty$.\medskip

On the other hand, $M$ is a martingale, so that $|M|$ is a submartingale:
\[
\E[|M_t|\;|\cF_s]\geq|\E[M_t|\cF_s]|=|M_s|\;.
\]
We can then apply Doob's inequality,
\[
\E[S^M_T]=\E[\sup_{t\in [0,T]} |M_t|]\leq\frac{e}{e-1}\left(1+\E[|~|M_T|\log^+ |M_T|~|]\right)\;.
\]
Using $-\frac{1}{e}\leq x\log x\leq x^2$ for $x\geq 0$, $|x\log^+ x|=x\log^+ x\leq |x\log x|\leq \max(e^{-1},x^2)$:
\beqn
\E[|M_T|\log^+ |M_T|]&\leq& \E[\max(e^{-1},M_T^2)]\nonumber\\
&=&\E[e^{-1}1_{\{M_T^2\leq e^{-1}\}}+M_T^21_{\{M_T^2> e^{-1}\}}]\nonumber\\
%&=&e^{-1}\Pr(M_T^2\leq e^{-1})+\E[M_T^21_{\{M_T^2> e^{-1}\}}]\nonumber\\
&\leq&e^{-1}+\E[M_T^2]\;.\nonumber
\eeqn
Hence,
\[
\E[S^M_T]\leq\frac{e}{e-1}\left(1+e^{-1}+\E[M_T^2]\right)\;.
\]
Using Ito isometry, $\E[M_T^2]=\E\left[\int_0^T\sigma^2(t,y_t)dt\right]$ which is bounded, by assumption.\medskip

Similarly, one can prove that $\E[S^J_T]$ is finite by applying the Doob's inequality to the martingale $(J_t - \omega\alpha t), t\leq T$. Indeed,
\[J_t = |J_t - \omega\alpha t + \omega\alpha t|\leq |J_t - \omega\alpha t| + \omega\alpha t,\quad \forall\, t\leq T\;,\]
which implies that
\beqn
\E[S^J_T]&\leq &\E[\sup_{t\in [0,T]}|J_t - \omega\alpha t|] + \omega\alpha T \nonumber\\
&\leq & \frac{e}{e-1}\left(1+e^{-1}+\E[(J_T-\omega\alpha T)^2] \right) + \omega\alpha T \nonumber\\
&=& \frac{e}{e-1}\left(1+e^{-1}+2\omega\alpha^2 T \right)  + \omega\alpha T <\infty\;. \nonumber
\eeqn
\end{proof}

One can check that $P^{x+y}$ is a discount curve when $x,y$ are HAJD, possibly driven by correlated Brownian motions. Indeed, they satisfy the assumptions of Lemma~\ref{lemma:DscCurveGen}.%\Halmos

\subsection{Proof of Theorem \ref{th:th1}}\label{seq:ProofTh1}

Observe first that for every $\theta$ and every $t$, one gets
\beq
\int_0^t x^{\theta}_u du=\int_0^t\theta(u) y_{\theta(u)}du=\int_0^{\Theta(t)} y_udu\;.\nonumber 
\eeq
Hence, the expectation of their negative exponentials agree as well: 
\beq
P^{x^\theta}(t)= P^y(\Theta(t))\;.\nonumber%\label{eq:TCG}
\eeq
The specific clock rate $\theta^\star$ given by the calibration equation thus satisfies, for all $t$,
\beq
P^{market}(t)= P^{x^{\theta^\star}}(t)= P^y(\Theta^\star(t))\;.\label{eq:TC-cal}
\eeq
Turning this equality in terms of instantaneous forward rates yields
\beq
\int_0^tf^{market}(u)du=\int_0^{\Theta^\star(t)} f^y(u)du\;.\nonumber
\eeq
Eq. \eqref{eq:ode} is just the differential form of the latter.\medskip 

It is not clear, in general, to determine when this ODE admits a solution. However, a simple case is when $P^{y}$ is a strictly decreasing discount curve. In this case indeed, $P^y$ admits an inverse on the positive half line, noted  $Q^y$. Apply $Q^y$ to \eqref{eq:TC-cal} yields $\Theta^\star=Q^y(P^{market}(t))$. Furthermore, the inverse of a decreasing function is decreasing, and  the combination of two decreasing functions is itself increasing. Hence, if $P^{market}$ is decreasing, $\Theta^\star(t)$ is continuous and strictly increasing. Moreover, $\Theta^\star(0)=Q^y\left(P^{market}(0)\right)=Q^y(1)=0$. Hence, $\Theta^\star$ exists, and is a clock. 

\subsection{Proof of Theorem \ref{th:th2}}\label{seq:ProofTh2}

It is known from Corollary \ref{cor:cor2} that for any (non-trivial) (J)CIR process $y$ with parameter $\Xi$, there exits a clock $\Theta^\star(t)=\Theta^\star(t;\Xi)$ that yields a perfect fit between the curves $P^x$ generated y the TC-JCIR $x_t^{\theta^\star}:=\theta^\star(t)y_{\Theta^\star(t)}$. Same holds true for the JCIR++, $x_t^{\varphi^\star}$. This means:
\beq
P^{x^{\varphi^\star}}(t;\Xi)=P^{market}(t)=P^{x^{\theta^\star}}(t;\Xi)\;,\nonumber
\eeq
or equivalently,
\beq
e^{-\int_0^t\varphi^\star(u)du}P^y(t;\Xi)=P^{market}(t)=P^y(\Theta^\star(t);\Xi)\;.\nonumber
\eeq
Because $y$ is a JCIR, it can be arbitrarilly close to 0 at any time $t$, hence the calibration constraint amounts to force $\varphi^\star(t)\geq 0$ (or equivalently, $f^{\mathrm{JCIR}}(t)\leq f^{\mathrm{market}}(t)$) $\forall\, t\geq 0$. This implies that $P^y(\Theta^\star(t);\Xi)\leq P^y(t;\Xi)$. Because $P^y(.;\Xi)$ is a decreasing function, the last inequality is equivalent to $\Theta^\star(t)\geq t$.\\
%To prove 1), we start from the variance of the integrated CIR given in \eqref{eq:VarIntCIR}. After some algebra,
%$$\frac{d}{dt}v^Y(t)=\frac{\delta^2}{\kappa^2}\left[2y_0e^{-\kappa t}(e^{-\kappa t}-(1-\kappa t))+\beta(1-(2\kappa t +e^{-\kappa t})e^{-\kappa t})\right]\;.$$
%Because $e^{-x}\geq 1-x$ for all $x\geq 0$ and $y_0,\kappa$ are positive constants, the first term is positive for all $t\geq 0$. On the other hand,  $\beta>0$, and it is enough to check that $1-g(\kappa t)\geq 0$ for all $t\geq 0$, with $g(x):=(2x+e^{-x})e^{-x}$. Clearly, $g(0)=1$ and $g'(x)=2e^{-x}(1-x-e^{-x})\leq 0$ for all $x\geq 0$. Hence, $1-g(\kappa t)\geq 0$ for all $t\geq 0$. 
To prove 1), we start from the increasingness of $v^Y(t)$  (Lemma \ref{Lem:VarJCIR}). Hence,  $v^Y(\Theta^\star(t))\geq v^Y(t)$ since $\Theta^\star(t)\geq t$.\\
%On the other hand, knowing that the variance of a CIR process is given by
%\begin{equation}
%v^y(t):=\mathbb{V}[y_t] = y_0\frac{\delta^2}{\kappa}(e^{-\kappa t}-e^{-2\kappa t})+\beta\frac{\delta^2}{2\kappa}(1-e^{-\kappa t})^2\;,\nonumber
%\end{equation}
%we have that
%\begin{eqnarray*}
%\frac{d}{dt}v^y(t) &=& y_0\delta^2(-e^{-\kappa t}+2e^{-2\kappa t})+\beta\delta^2e^{-\kappa t}(1-e^{-\kappa t})\\
%&=& \delta^2e^{-\kappa t}(\beta-y_0)-\delta^2e^{-2\kappa t}(\beta-2y_0)\;.
%\end{eqnarray*}

%This expression has a root at
%\begin{equation}\label{eq:RootVarCir}
%t_1=\frac{1}{\kappa}\ln \frac{\beta-2y_0}{\beta-y_0%}
%\end{equation}
%only if $\beta<y_0$. Otherwise, $v^y(t)$ is always increasing in $t$.
From \eqref{eq:fjcirt}, we have, after some computations,
\begin{eqnarray*}
\frac{d}{d t}f^{\mathrm{JCIR}}(t) &=& \frac{4 \gamma^2e^{t\gamma}\left[\kappa\beta(\gamma-\kappa+(\kappa+\gamma)e^{t\gamma})+y_0 \gamma(\gamma-\kappa-(\kappa+\gamma)e^{t\gamma})+\omega\alpha(2\gamma+(\kappa+\gamma)(e^{t\gamma}-1))\right]}{[2\gamma+(\kappa+\gamma)(e^{t\gamma}-1)]^3}\\
&=& \frac{4\gamma^2e^{t\gamma}\left[(\gamma-\kappa)(\kappa\beta+y_0\gamma+\omega\alpha)-2\omega\alpha^2\right]+4\gamma^2e^{2t\gamma}\left[(\gamma+\kappa)(\kappa\beta-y_0\gamma+\omega\alpha)+2\omega\alpha^2\right]}{[2\gamma+(\kappa+\gamma)(e^{t\gamma}-1)]^3}\;.
\end{eqnarray*}
From this expression, one can check that $f^{\mathrm{JCIR}}$ is strictly increasing if $y_0< \beta+\omega\alpha/\kappa$ and $y_0\gamma\leq\kappa\beta+\omega\alpha$. It is strictly decreasing if $y_0\geq \beta+\omega\alpha/\kappa$. Otherwise, i.e., if $y_0< \beta+\omega\alpha/\kappa$ and $y_0\gamma>\kappa\beta + \omega\alpha$, the derivative has a root at
\begin{equation*}
t_2:=\frac{1}{\gamma}\ln \frac{(\gamma-\kappa)(\kappa\beta+y_0\gamma+\omega\alpha)-2\omega\alpha^2}{(\kappa+\gamma)(y_0\gamma-\kappa\beta-\omega\alpha)-2\omega\alpha^2}\;.
\end{equation*}
i.e., $f^{\mathrm{JCIR}}$ is first increasing, then decreasing.

%has a root only if $y_0< \beta$ and $y_0\gamma>\kappa\beta$. If $y_0< \beta$ and $y_0\gamma\leq\kappa\beta$, $f^{\mathrm{CIR}}$ is strictly increasing. And if $y_0\geq \beta$, $f^{\mathrm{CIR}}$ is strictly decreasing. 

The constraint
$\varphi^\star(t)\geq 0$ for all $t$, simply means that $f^{market}(t)\geq f^{\mathrm{JCIR}}(t)$ and so $\theta^\star(t)\geq\frac{f^{\mathrm{JCIR}}(t)}{f^{\mathrm{JCIR}}(\Theta^\star(t))}$. Observe that the condition $y_0=\beta+\omega\alpha/\kappa$ corresponds to the case where $v^y(t)$ is increasing and $f^{\mathrm{JCIR}}(t)$ is decreasing, hence $(i)$ holds.\\
If $f^{market}$ is constant, we have that $f^{market}(t)\geq f^{\mathrm{JCIR}}(t)$ which implies that $f^{market}(t)\geq f^{\mathrm{JCIR}}(\Theta^\star(t))$.
Clearly, if $f^{market}(t)$ is constant or $f^{\mathrm{JCIR}}$ is decreasing, then $\theta^\star(t)\geq 1$. And if $v^y(t)$ is increasing, $v^y(\Theta^\star(t))\geq v^y(t)$ since $\Theta^\star(t)\geq t$. From the fact that $\mathbb{V}[x^{\theta^\star}_t]:=\theta^\star(t)^2v^y(\Theta^\star(t))$ and the variation of $v^y(t)$ (Lemma \ref{Lem:VarJCIR}), $(ii)$, $(iii)$ and $(iv)$ follow.\\
In particular, taking $\omega\alpha=0$, we recover the CIR case which corresponds to the TC-CIR model.

\subsection{Black model for PSO}\label{sec:BlackPSO}

In this context, the Black-Scholes model works as follows.
We start by noting that the forward start CDS can be written in terms of the difference between the fair and the agreed premium cashflows. Indeed, the former corresponds to the protection leg. Inserting \eqref{eq:ParCDS} in \eqref{eq:CDS} yields
\begin{equation}
CDS_t(a,b,k)= \indic_{\{\tau >t\}} (s_t(a,b)-k)~C_t(a,b)\;.\nonumber
\end{equation}
The Payer default swaption becomes
\beq
PSO(a,b,k) =\mathbb{E}\left[(s_{T_a}(a,b)-k)^+~C_{T_a}(a,b)D(T_a)\right]=C_0(a,b)\mathbb{E}^{(a,b)}\left[(s_{T_a}(a,b)-k)^+\right]\label{eq:CDSO-BS}
\eeq
where $\E^{(a,b)}$ stands for the expectation under the equivalent measure $\Q^{(a,b)}$, associated with the num\'eraire $C(a,b)$. Interestingly, it is clear from \eqref{eq:ParCDS} that the par spread $s(a,b)$ is a $\Q^{(a,b)}$-martingale on $[0,T_a]$. Hence, the Black-Scholes model for CDSO naturally postulates $\Q^{(a,b)}$-martingale dynamics for the par spread\\
\begin{equation*}
ds_t(a,b) = \bar{\sigma} s_t(a,b)dW^{s}_t,\quad t\leq T_a
\end{equation*}
where $W^{s}$ is a $\Q^{(a,b)}$-Brownian motion. Eventually, the expectation in \eqref{eq:CDSO-BS} is given by the standard Black-Scholes formula by setting $r\leftarrow 0$. Hence, the Black-Scholes price of the PSO is given by 
\beq
PSO^{Black}(a,b,k,\bar{\sigma}) = C_0(a,b)\left[s_0(a,b)\Phi(d_1)-k\Phi(d_2)\right]\nonumber
\eeq
where
\beq
d_1=\frac{\ln\frac{s_0(a,b)}{k}+\frac{1}{2}\bar{\sigma}^2T_a}{\bar{\sigma}\sqrt{T_a}},\qquad d_2=d_1-\bar{\sigma}\sqrt{T_a}\nonumber
\eeq
and $\Phi$ is the distribution function of a standard Normal random variable.
% ====================================================================================
% BIBLIOGRAPHY
% ====================================================================================
%\nocite{*}
\bibliographystyle{plainnat}
\bibliography{./MyBib}
%\bibliography{timeChange.bbl}
%\addcontentsline{toc}{chapter}{References}

\ifdefined \InclFig
\end{document} % Comment if figure at the end (\InclFig undefined in the preamble)
\fi

%% FIGURE 1
\begin{figure}[H]
\centering
\subfigure[$Z$ curves (fit on $Z^{market}$)]{\includegraphics[width=0.32\columnwidth]{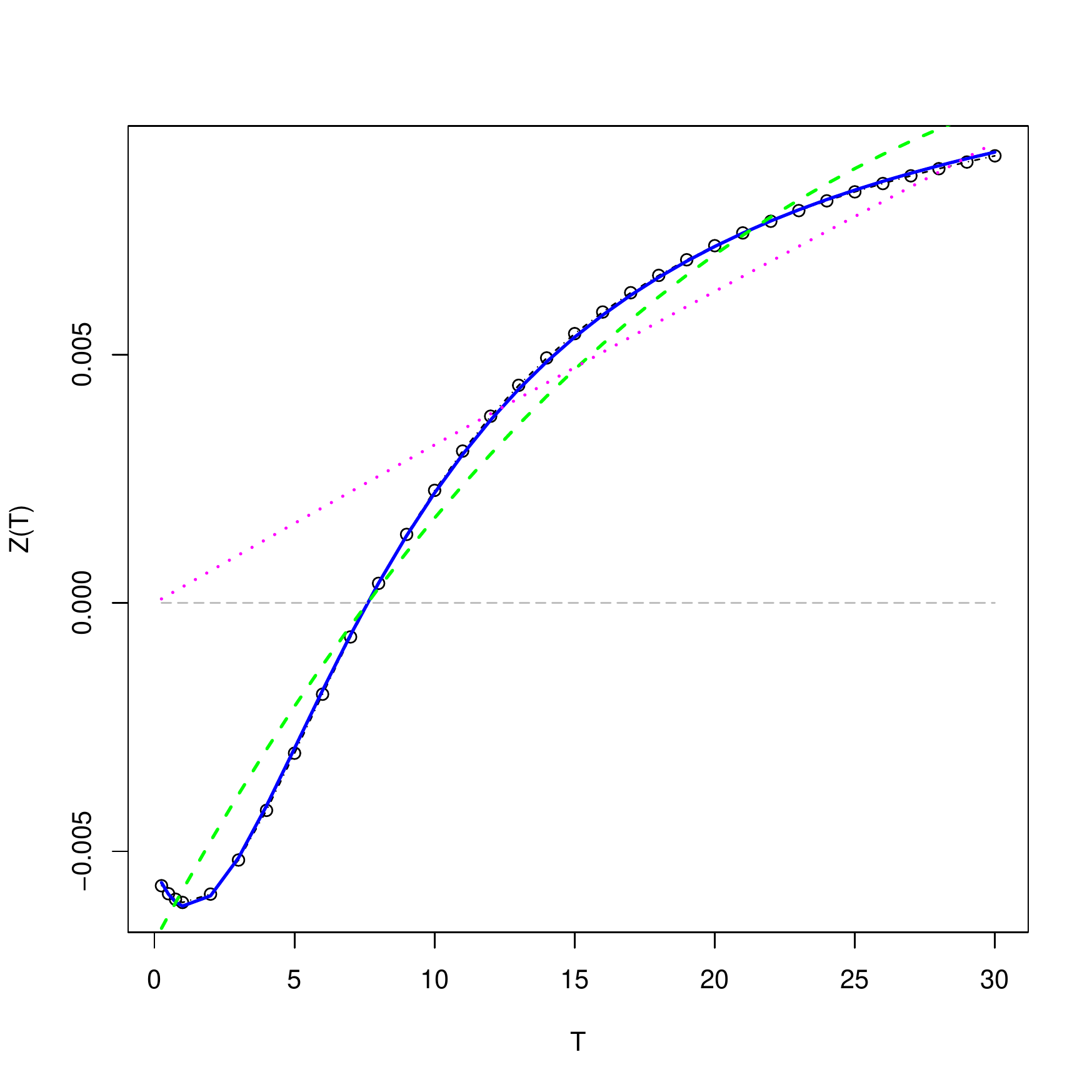}}\hspace{0.01cm}
\subfigure[$f$ curves (fit on $Z^{market}$)]{\includegraphics[width=0.32\columnwidth]{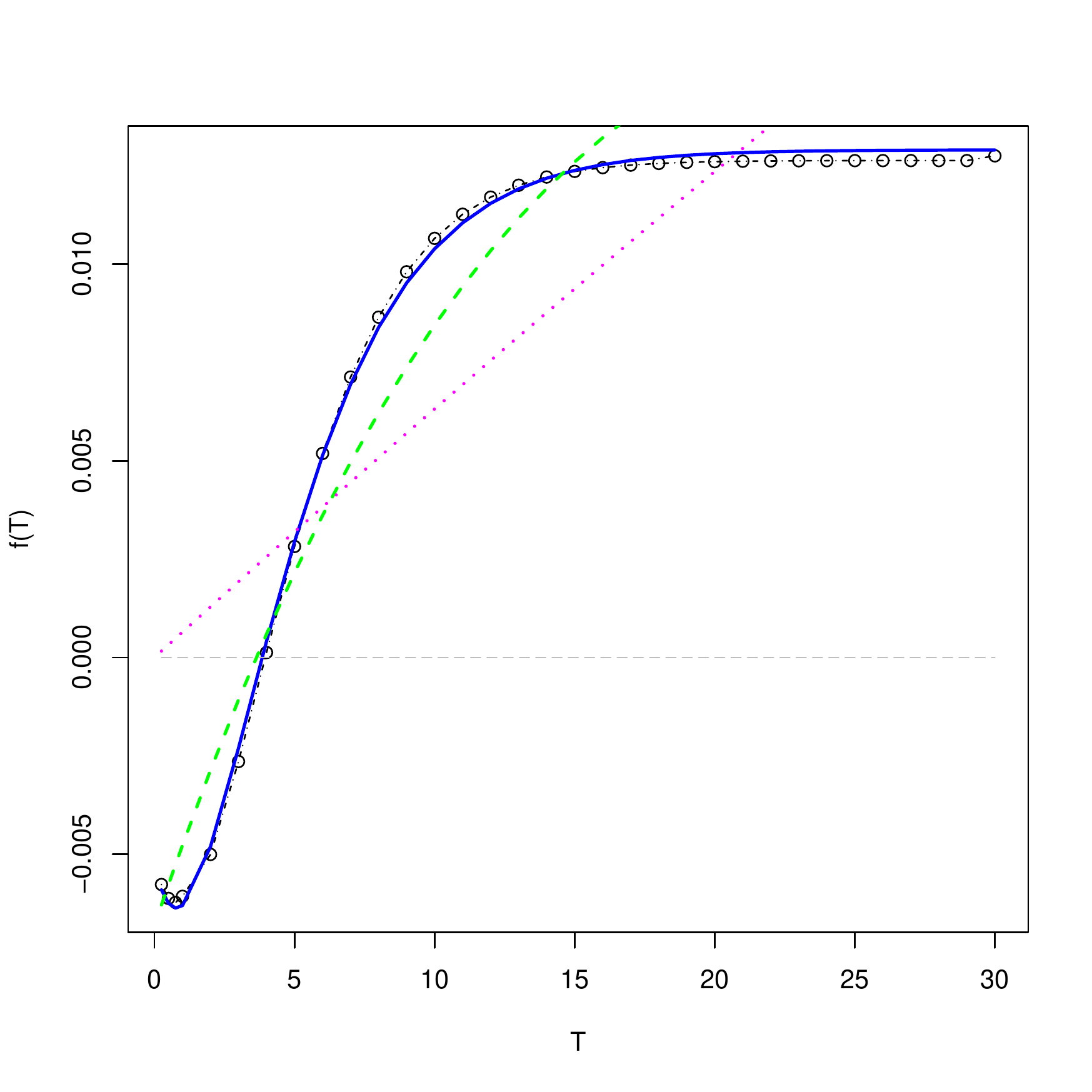}}\hspace{0.01cm}
\subfigure[$P$ curves (fit on $Z^{market}$)]{\includegraphics[width=0.32\columnwidth]{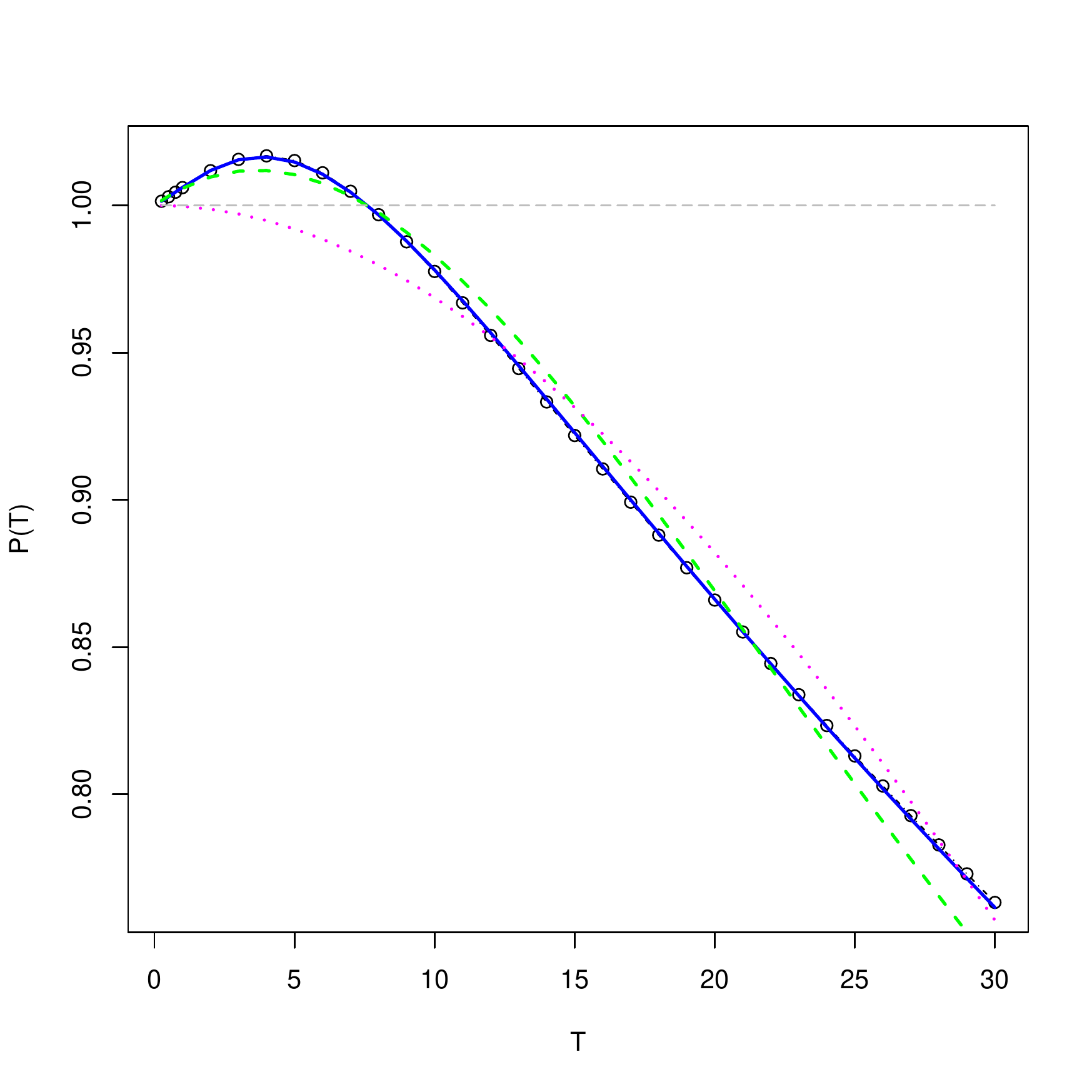}}\hspace{0.01cm}\\
\subfigure[$Z$ curves (fit on $f^{market}$)]{\includegraphics[width=0.32\columnwidth]{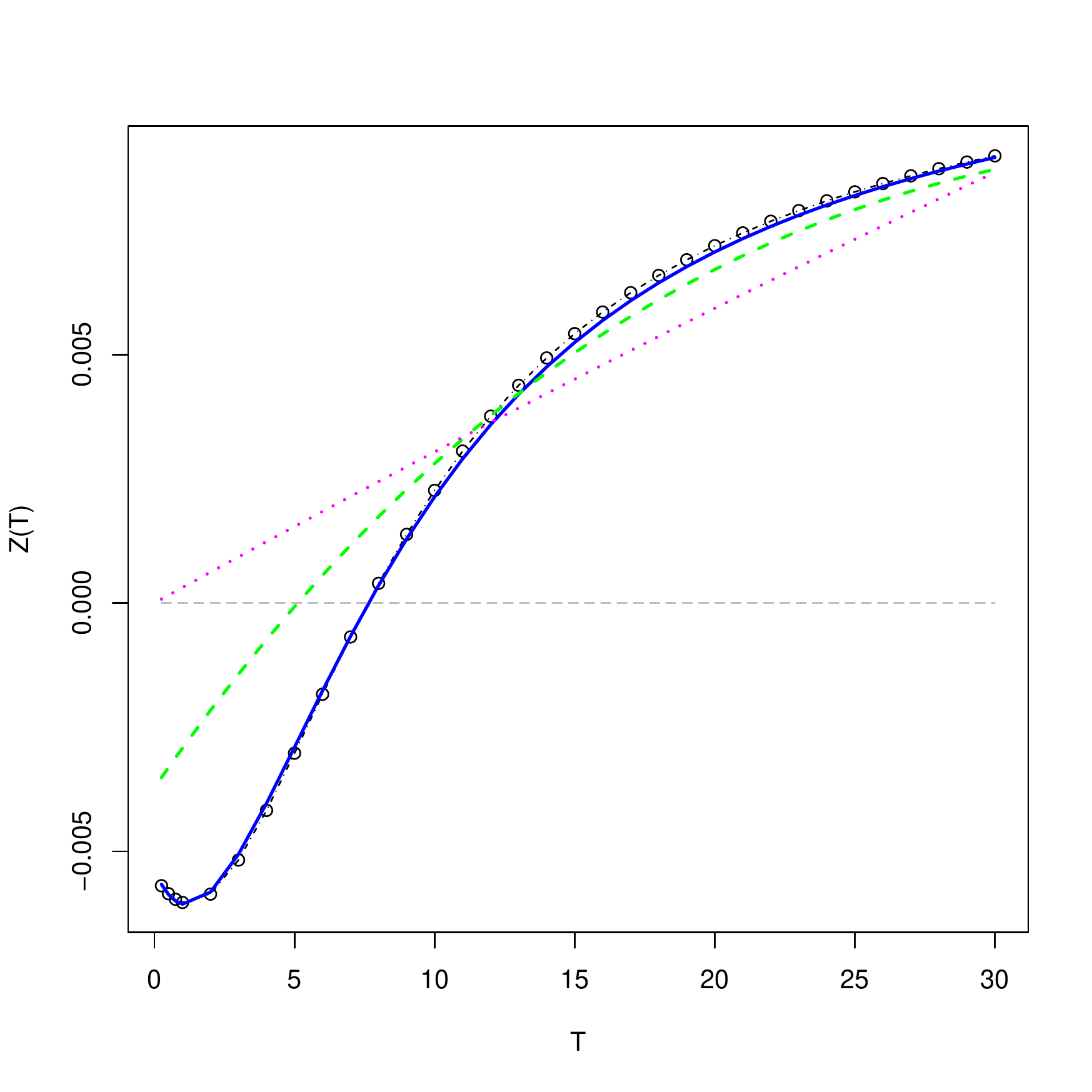}}\hspace{0.01cm}
\subfigure[$f$ curves (fit on $f^{market}$)]{\includegraphics[width=0.32\columnwidth]{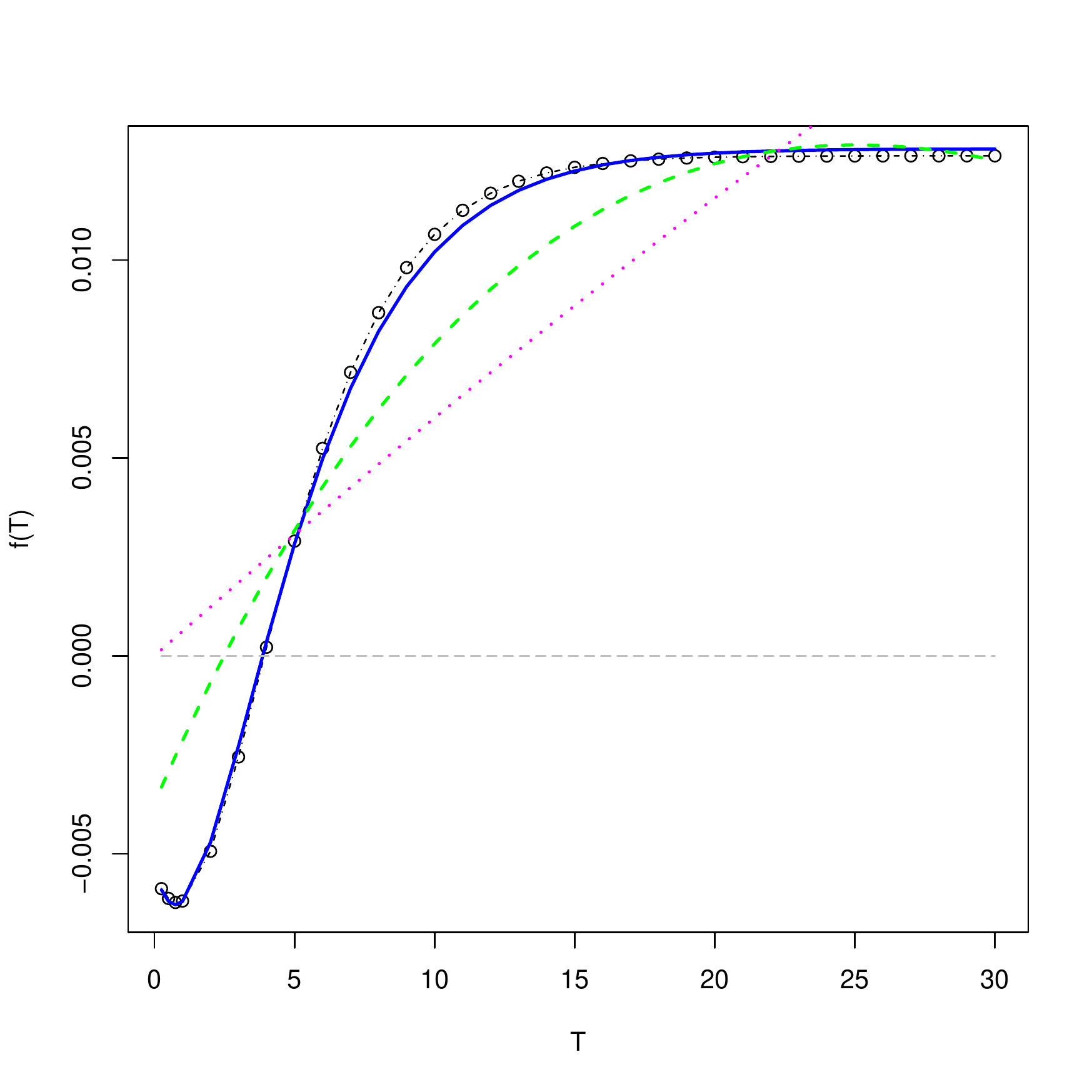}}\hspace{0.01cm}
\subfigure[$P$ curves (fit on $f^{market}$)]{\includegraphics[width=0.32\columnwidth]{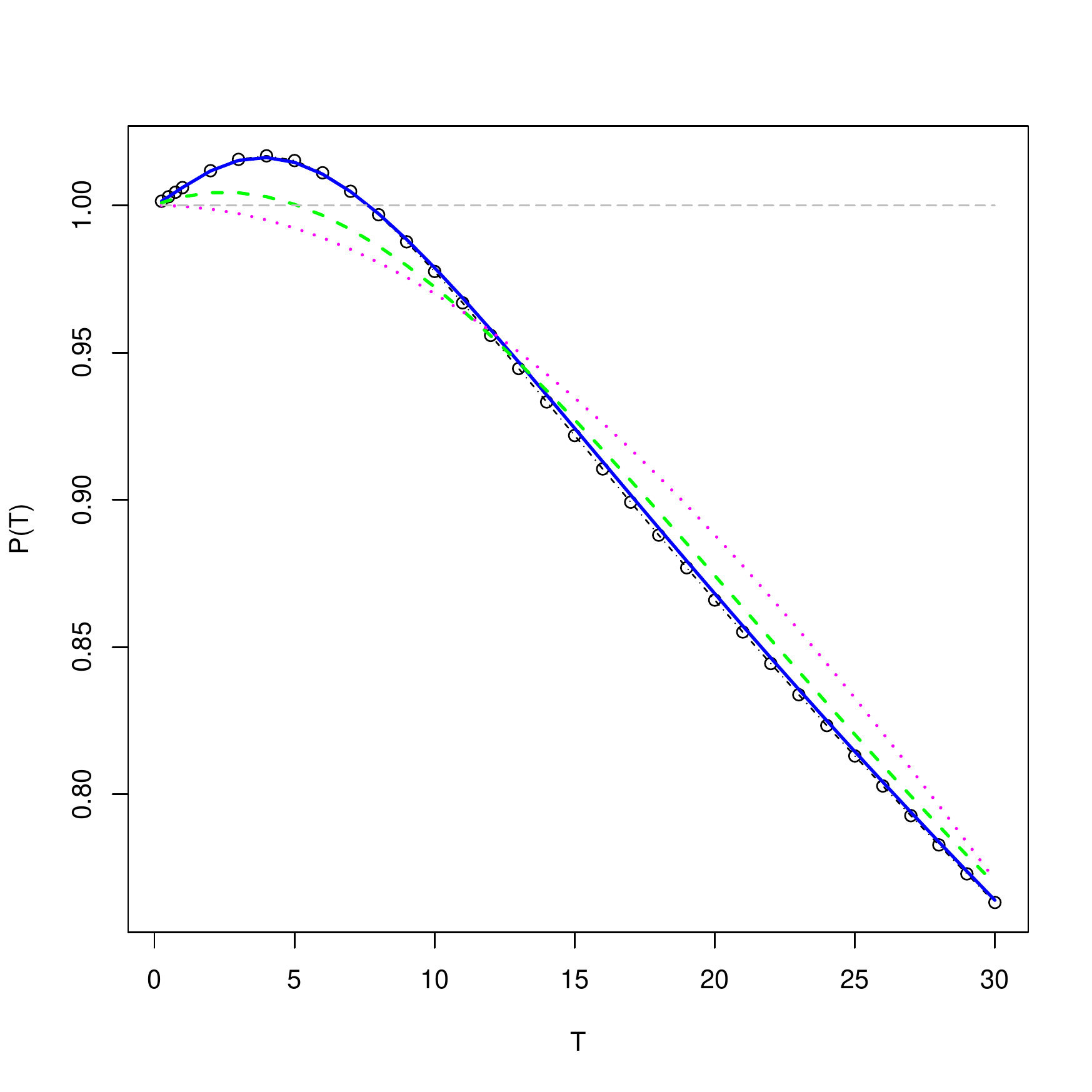}}\hspace{0.01cm}\\
\subfigure[$Z$ curves (fit on $P^{market}$)]{\includegraphics[width=0.32\columnwidth]{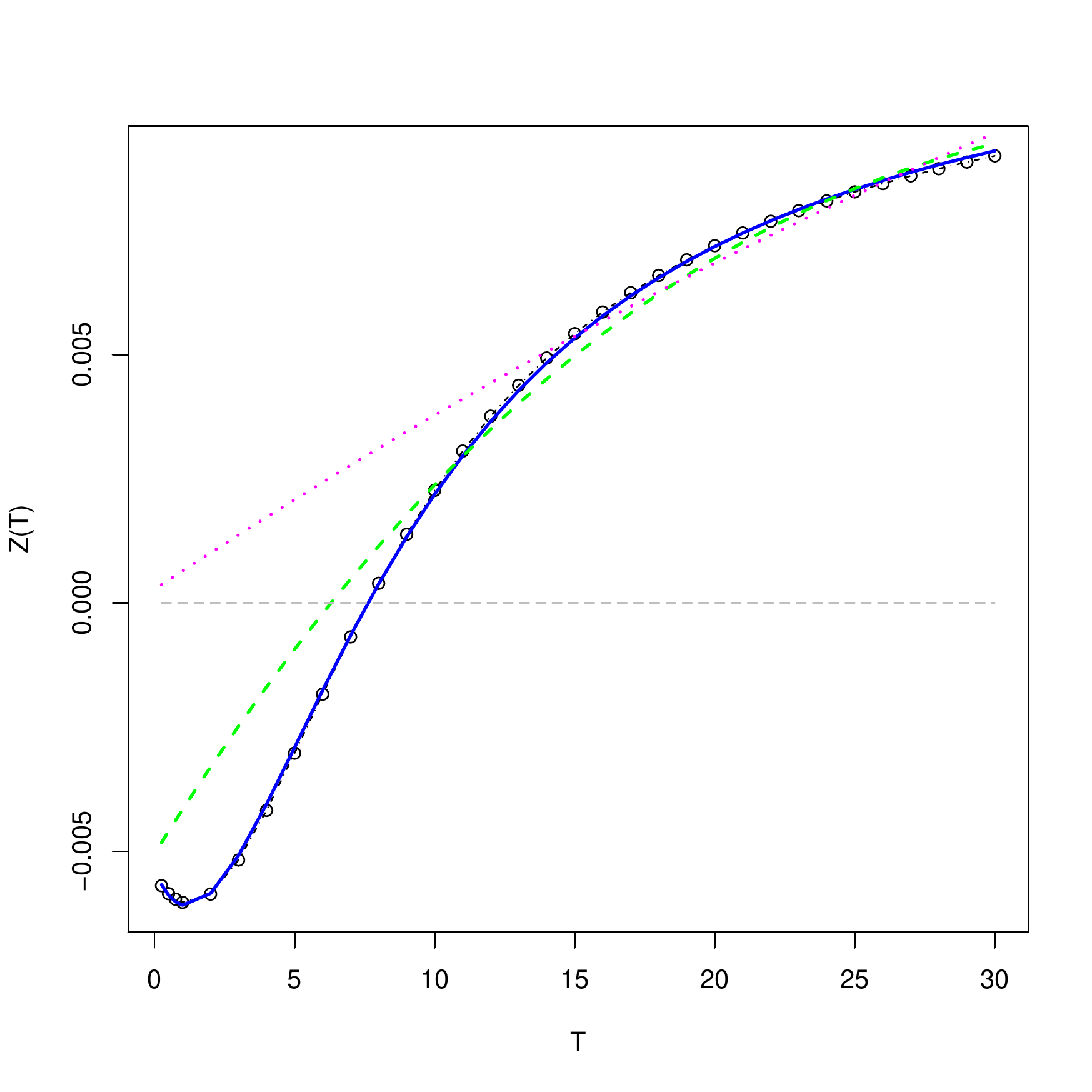}}\hspace{0.01cm}
\subfigure[$f$ curves (fit on $P^{market}$)]{\includegraphics[width=0.32\columnwidth]{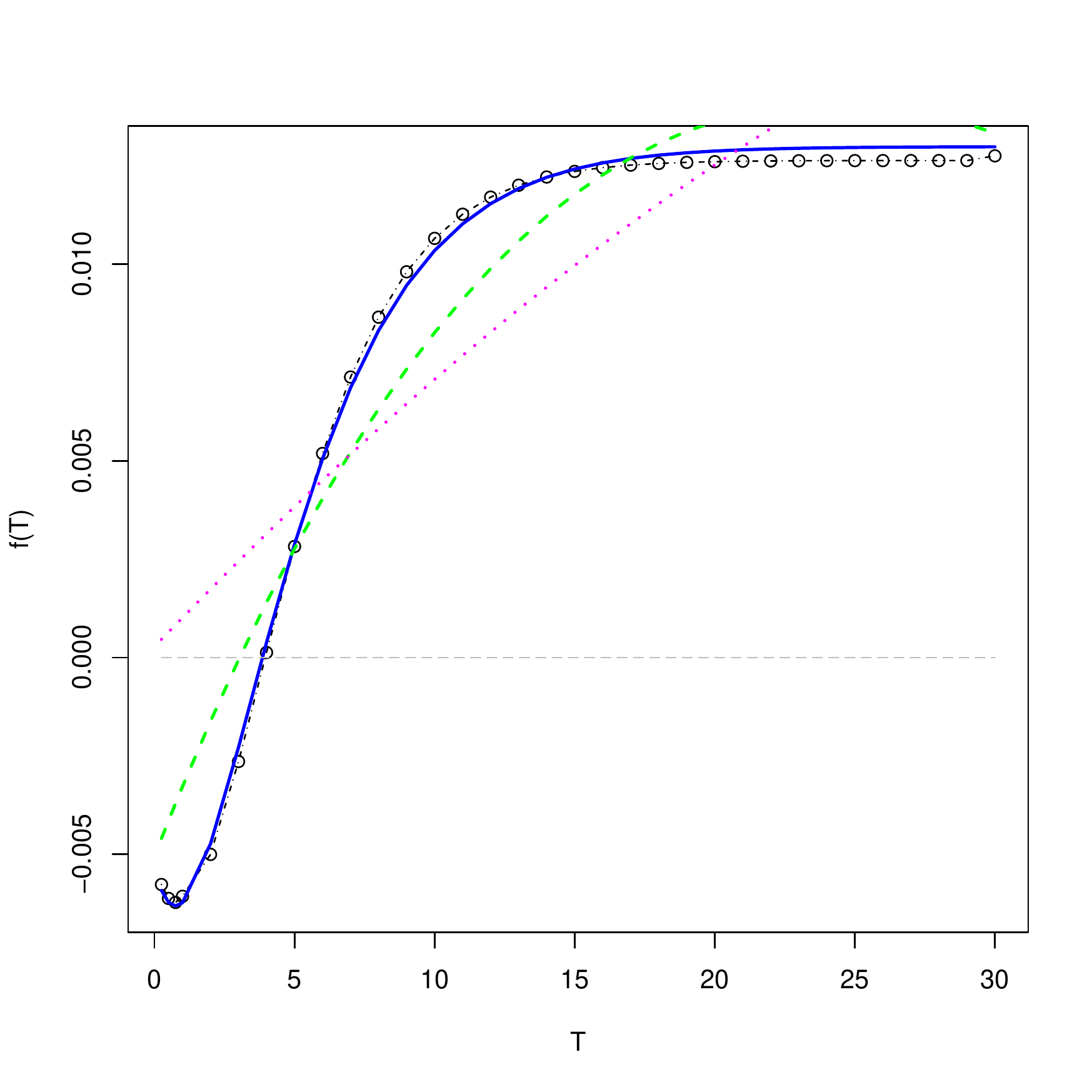}}\hspace{0.01cm}
\subfigure[$P$ curves (fit on $P^{market}$)]{\includegraphics[width=0.32\columnwidth]{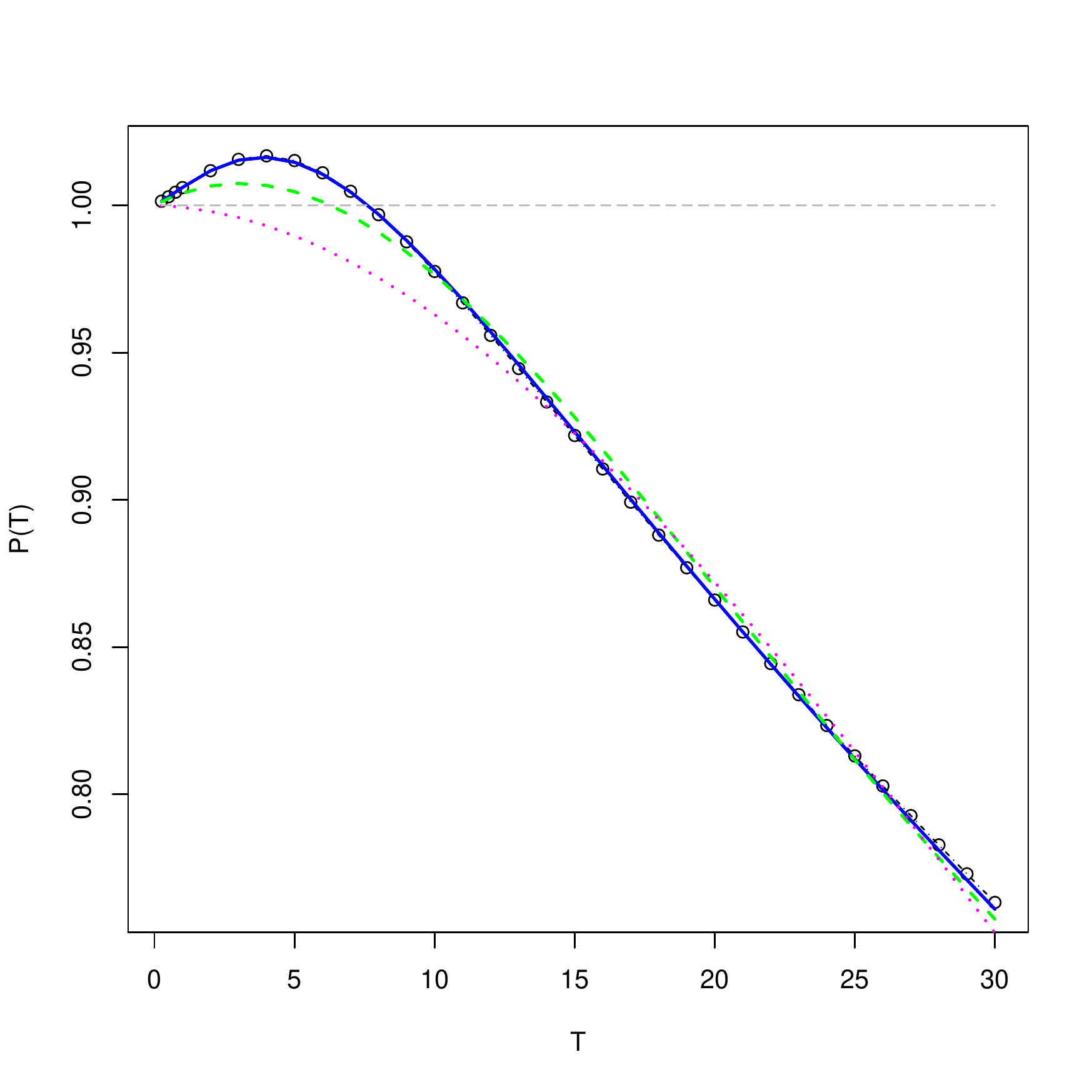}}\hspace{0.01cm}\\\caption{This pictures shows the yield curves ($Z$, left), the instantaneous forward curves ($f$, center) and ZCB price curves ($P$, right) generated by the Vasicek (green, dashed), CIR (magenta, dotted) and the Nelsen-Siegel (solid, blue) models when they  are fitted on $Z^{market}$ curve (top), $f^{market}$ curve (middle) and $P^{market}$ curve (bottom) shown in black (circles). The yield curve $Z^{market}$ downloaded from the European Central Bank website and corresponds to AAA-rated euro area central government bonds. Although Nelsen-Siegel proves to be relatively satisfactoty compared to the curves generated by the affine diffusions models, none of them provide a perfectly fit with the market curve, whatever the calibration target chose: $Z^{market}$, $f^{market}$ or $P^{market}$.}\label{fig:NS}
\end{figure}

%% FIGURE 2
\begin{figure}[H]
\centering
\subfigure[Piecewise constant hazard rate ($h$)]{\includegraphics[width=0.45\columnwidth]{h_cons.pdf}}\hspace{0.2cm}
\subfigure[Optimal shift ($\varphi=\varphi^\star$)]{\includegraphics[width=0.45\columnwidth]{Phi_cons.pdf}}\hspace{0.2cm}
\subfigure[Optimal clock ($\Theta=\Theta^\star$)]{\includegraphics[width=0.45\columnwidth]{Theta_cons.pdf}}\hspace{0.2cm}
\subfigure[Survival probability ($P^y,G=P^{\lambda^{\varphi}}=P^{\lambda^{\theta}}$)]{\includegraphics[width=0.45\columnwidth]{G_cons_theta.pdf}}
\caption{Fitting Ford Inc. CDS term-structure with adjusted CIR. The survival probability curve $G(t)$ is parametrized with a piecewise constant hazard rate function $h(t)$ extracted from market prices taken from Bloomberg on November 12 2018, panel (a). The base model $y$ is a CIR with parameters $\Xi=\Xi^\star$ where $\Xi^\star=(\kappa,\beta,\delta) = (0.0555, 0.3018, 0.2939)$ is obtained from eq. \eqref{eq:CalyNoConstraint} and $y_0=h_0$. The shift function $\varphi(t)=\varphi^\star(t,\Xi^\star)$ is shown in panel (b). Panel (c) gives the clock $\Theta(t)=\Theta^\star(t;\Xi^\star)$. Eventually, panel (d) yields the survival probability curves given by the market ($G(t)$, green), or associated to $\Q(\tau(\lambda)>t)$ for various intensity models $\lambda$ : the best base model $\lambda\leftarrow y$ (leading to $\Q(\tau(y)>t)=P^y(t,\Xi^\star)$, dashed blue), $\lambda\leftarrow \lambda^{\varphi}$ (CIR++) and $\lambda\leftarrow \lambda^{\theta}$ (TC-CIR model). By construction of $\varphi$ and $\Theta$, the last two curves coincide (magenta) and agree with $G(t)$.}\label{fig:Ford1}
\end{figure}

%% FIGURE 3
\begin{figure}[H]
\centering
\subfigure[Piecewise linear hazard rate ($h$)]{\includegraphics[width=0.45\columnwidth]{h_linear.pdf}}
\subfigure[Optimal shift ($\varphi=\varphi^\star$)]{\includegraphics[width=0.45\columnwidth]{Phi_linear.pdf}}\hspace{0.2cm}
\subfigure[Optimal clock ($\Theta=\Theta^\star$)]{\includegraphics[width=0.45\columnwidth]{Theta_linear.pdf}}\hspace{0.2cm}
\subfigure[Survival probability ( $P^y,G=P^{\lambda^{\varphi}}=P^{\lambda^{\theta}}$)]{\includegraphics[width=0.45\columnwidth]{G_lin_theta.pdf}}
\caption{Fitting Ford Inc. CDS term-structure with adjusted CIR. The survival probability curve $G(t)$ is parametrized with a piecewise linear hazard rate function $h(t)$ extracted from market prices taken from Bloomberg on Novermber 12 2018, panel (a). The base model $y$ is a CIR with parameters $\Xi=\Xi^\star$ where $\Xi^\star=(\kappa,\beta,\delta) = (0.0620, 0.2729, 0.2926)$ is obtained from eq. \eqref{eq:CalyNoConstraint} and $y_0=h_0$. The shift function $\varphi(t)=\varphi^\star(t,\Xi^\star)$ is shown in panel (b). Panel (c) gives the clock $\Theta(t)=\Theta^\star(t;\Xi^\star)$. Eventually, panel (d) yields the survival probability curves given by the market ($G(t)$, green), or associated to $\Q(\tau(\lambda)>t)$ for various intensity models $\lambda$ : the best base model $\lambda\leftarrow y$ (leading to $\Q(\tau(y)>t)=P^y(t,\Xi^\star)$, dashed blue), $\lambda\leftarrow \lambda^{\varphi}$ (CIR++) and $\lambda\leftarrow \lambda^{\theta}$ (TC-CIR model). By construction of $\varphi$ and $\Theta$, the last two curves coincide (magenta) and agree with $G(t)$.}\label{fig:Ford2}
\end{figure}

%% FIGURE 4
\begin{figure}[H]
\centering
\subfigure[Survival probability ($P^{y^+},G=P^{\lambda^{\varphi,+}}$)]{\includegraphics[width=0.45\columnwidth]{G_cons_phi_positif.pdf}}\hspace{0.2cm}
\subfigure[Survival probability ($P^{y^+},G=P^{\lambda^{\varphi,+}}$)]{\includegraphics[width=0.45\columnwidth]{G_cons_phi_positif_lambda0.pdf}}\hspace{0.2cm}
\subfigure[Shift function ($\varphi=\varphi^\star$)]{\includegraphics[width=0.45\columnwidth]{Phi_cons_positif.pdf}}\hspace{0.2cm}
\subfigure[Shift function ($\varphi=\varphi^{\star,+}$)]{\includegraphics[width=0.45\columnwidth]{Phi_cons_positif_lambda0.pdf}}\hspace{0.2cm}
%\subfigure[Piecewise linear hazard rates function]{\includegraphics[width=0.45\columnwidth]{h_linear.pdf}}
%\subfigure[Survival probability]{\includegraphics[width=0.45\columnwidth]{G_cons_phi.pdf}}
%\subfigure[Clock theta $\Theta(t)=\Theta^\star(t;\Xi^\star(\kappa,\beta,\eta)),\;\lambda_0=h(0)$]{\includegraphics[width=0.45\columnwidth]{Theta_cons_phi_positif.pdf}}\hspace{0.2cm}
%\subfigure[Clock theta $\Theta(t)=\Theta^\star(t;\Xi^\star(\kappa,\beta,\eta,\lambda_0))$]{\includegraphics[width=0.45\columnwidth]{Theta_cons_phi_positif_lambda0.pdf}}
\caption{Fitting Ford Inc. CDS term-structure using $\lambda^{\varphi^\star,+}$. Panels (a) and (c) correspond to $y^+_0=h_0$ whereas panels (b) and (d) correspond to the case where $y^+_0$ is one of the optimized parameters. The survival probability curve $G(t)$ is parametrized with a piecewise constant hazard rate function $h(t)$ extracted from market prices taken from Bloomberg on November 12 2018. The parameters $\Xi^{\star,+}$ are computed under the constraint $\varphi(t)=\varphi^\star(t;\Xi^{\star,+})\geq 0$ . The base model $y^+$ is a CIR with parameters $\Xi=\Xi^{\star,+}$ (left) and $\Xi=\Xi_0^{\star,+}$ (right).}\label{fig:FordConstr} 
\end{figure}

%% FIGURE 5
\begin{figure}[H]
\centering
\subfigure[$y_0=h_0$]{\includegraphics[width=0.45\columnwidth]{Var_lambda.pdf}}\hspace{0.2cm}
\subfigure[$y_0$ optimized]{\includegraphics[width=0.45\columnwidth]{Var_lambda_lmd0.pdf}}
%\subfigure[$y_0=h_0$]{\includegraphics[width=0.45\columnwidth]{Theta_cons_phi_positif.pdf}}\hspace{0.2cm}
%\subfigure[$y_0$ optimized]{\includegraphics[width=0.45\columnwidth]{Theta_cons_phi_positif_lambda0.pdf}}
%\caption{Perfect fit clock $\Theta(t)=\Theta^\star(\cdot;\Xi^\star)$ associated with the Ford Inc. example held in Fig.~\ref{fig:FordConstr}, i.e., where $\Xi^\star$ is computed using the non-negative shift constraint $\varphi^\star(t;\Xi^\star)\geq 0$ as per eq. XX.}
\caption{Variances of the integrated versions of  $\lambda^{\varphi,+}$ (PS-CIR $\Xi=\Xi^{\star,+}$ (left) and $\Xi=\Xi_0^{\star,+}$ (right), solid blue), $\lambda^{\varphi}$ (S-CIR $\Xi=\Xi^\star$ (left) and $\Xi=\Xi_0^\star$ (right), dashed blue), and $\lambda^{\theta}$ (TC-CIR without constraint $\Xi=\Xi^\star$ (left) and $\Xi=\Xi_0^\star$ (right), magenta).}\label{fig:FordConstr2}
\end{figure}

%% FIGURE 6
\begin{figure}[H]
\centering
\subfigure[$dV_t=\nu dW^V_t$, $y_0=h_0$]{\includegraphics[width=0.48\columnwidth]{CVA_Gauss_Exp.pdf}}\hspace{0.20cm}
\subfigure[$dV_t=\nu dW^V_t$, $y_0$ optimized]{\includegraphics[width=0.48\columnwidth]{CVA_Gauss_lambda0.pdf}}\hspace{0.20cm}
\subfigure[$dV_t=\left(\gamma(T-t)-\frac{V_t}{T-t}\right) dt + \nu dW^V_t$, $y_0=h_0$]{\includegraphics[width=0.48\columnwidth]{CVA_Swp_Exp.pdf}}\hspace{0.20cm}
\subfigure[ $dV_t=\left(\gamma(T-t)-\frac{V_t}{T-t}\right) dt + \nu dW^V_t$, $y_0$ optimized]{\includegraphics[width=0.48\columnwidth]{CVA_Swp_lambda0.pdf}}
\caption{Impact of the exposure-credit correlation $\rho$ on CVA levels for prototypical forward (top) and swap exposure (bottom) with $y_0=h_0$ (left) or $y_0$ optimized (right), $\nu=8\%$ and $\gamma=0.1\%$. Under the wrong-way risk effect, the CVAs are computed using adaptive control variate \footnote{see \cite{Mbaye2018} for the implementation of the adaptive control variate applied on CVA computation.} Monte Carlo simulation (100K paths, time step 0.01) with intensities $\lambda^{\varphi,+}$ (CIR++ with positive shift constraint $\Xi=\Xi^{\star,+}$ (left) and $\Xi=\Xi_0^{\star,+}$ (right), dotted blue), $\lambda^{\varphi}$ (CIR++ without constraint $\Xi=\Xi^\star$ (left) and $\Xi=\Xi_0^\star$ (right), solid blue), and $\lambda^{\theta}$ (TC-CIR without constraint $\Xi=\Xi^\star$ (left) and $\Xi=\Xi_0^\star$ (right), dotted magenta). The case without wrong-way risk corresponds to the  cyan.}\label{fig:CVAplots}
\end{figure}

\end{document}